\newcommand{\colorprint}[0]{}
\theoremstyle{plain}
\newtheorem{theorem}{Theorem}
\newtheorem{prop}[theorem]{Proposition}
\newtheorem{lemma}[theorem]{Lemma}
\newtheorem{corollary}[theorem]{Corollary}
\renewcommand*{\bar}[1]{\overline{#1}}
\newcommand{\pbDef}[3]{%
	\noindent
	\begin{center}
		\begin{boxedminipage}{ \columnwidth}
			\textbf{#1}\\[5pt]
			\textbf{Input:}  #2\\
			\textbf{Output:}  #3
		\end{boxedminipage}
	\end{center}
}
\title{Proper-walk connection number of graphs\thanks{Research supported by the Danish research council under grant number DFF-7014-00037B.}}
\author[1]{Jørgen Bang-Jensen}
\author[2]{Thomas Bellitto\thanks{This work was conducted during a postdoc of this author at the University of Southern Denmark. This author is now supported by the European Research Council (ERC) under the European Union’s Horizon 2020 research and innovation programme Grant Agreement 714704.}}
\author[1,3]{Anders Yeo}
\affil[1]{\footnotesize{Department of Mathematics and Computer Science, University of Southern Denmark, Odense, Denmark.}}
\affil[2]{\footnotesize{Faculty of Mathematics, Informatics and Mechanics, University of Warsaw, Poland.}}
\affil[3]{\footnotesize{Department of Pure and Applied Mathematics, University of Johannesburg, Auckland Park, 2006 South Africa.}}
\date{}
\begin{document}

\maketitle

\begin{abstract}
This paper studies the problem of proper-walk connection number: given an undirected connected graph, our aim is to colour its edges with as few colours as possible so that there exists a properly coloured walk between every pair of vertices of the graph \textit{i.e.} a walk that does not use consecutively two edges of the same colour. The problem was already solved on several classes of graphs but still open in the general case. We establish that the problem can always be solved in polynomial time in the size of the graph and we provide a characterization of the graphs that can be properly connected with $k$ colours for every possible value of $k$.

\end{abstract}

\section{Introduction}

Let $G=(V,E)$ be an undirected graph. An edge-colouring of $G$ is a function $c:E\mapsto [1,n]$. Several kinds of edge-colourings have been studied but the ones that receive the most attention in the literature are undoubtedly proper edge-colourings: we say that an edge-colouring is proper if and only if two adjacent edges never receive the same colour. The number of colours required for a proper colouring of the edges of a graph is called the chromatic index of the graph and has been studied in many contexts and on many classes of graphs.
In 1976, Chen and Daykin have introduced in \cite{Daykin} the notion of properly coloured walks: a walk in an edge-coloured graph $G$ is said to be properly coloured if and only if it does not use consecutively two edges of the same colour. If $G$ itself is properly edge-coloured, every walk in $G$ is a properly coloured walk but the definition becomes non-trivial for improperly coloured graphs. To illustrate this, we recall that edge-coloured graphs can be seen as a powerful generalization of directed graphs \cite{bookjbjgutin}: indeed, if $D$ is a directed graph, we can subdivide every arc $uv$ of $G$ by inserting a vertex $w_{uv}$ and we can replace the arc $uv$ by a red edge $uw_{uv}$ and a blue edge $w_{uv}v$. We obtain a 2-edge-coloured bipartite undirected graph with a similar set of possible walks. We refer the reader to \cite{GutinKim} for a survey on properly coloured cycles and paths.

Edge-coloured graphs are an example of walk-restricted graphs. Indeed, there are many application fields of graph theory that require to investigate or solve optimization problems within sets of walks that are much more restricted than the set of all the possible walks in a graph. This leads to the definition and the study of several models that restrict the walks in a graph. Other famous examples of such graphs include forbidden-transition graphs, where only certain pairs of adjacent edges may be used consecutively and anti-directed walks in a directed graphs where the walks have to alternate between forward and backward arcs. 

Many concepts of graph theory can be extended to walk-restricted graphs. Walk-restricted graphs can provide insight on structural properties of the underlying unrestricted graph (see for example \cite{antistrong} where anti-directed walks are used to study 2-detachments), or can be used to model practical situations. For example, forbidden-transition graphs are used to solve routing problems in telecommunication networks \cite{ahmed} or in road networks \cite{sctm} and edge-coloured graphs are used in bio-informatics in \cite{Dorninger}. In \cite{Sudakov}, Sudakov discusses how to measure the robustness of certain graph properties such as Hamiltonicity or connectivity and shows that it can sometimes be done by determining how many restrictions  have to be put on the walks in a graph for the graph to lose the property. Also note that a bipartite graph admits a strongly connected orientation if and only if it admits a connecting 2-edge-colouring. Indeed, let $G=(V,E)$ be a bipartite graph and let $(V_1,V_2)$ be a bi-partition of its vertices. The possible walks in an orientation $\overrightarrow G$ of $G$ are the same as in the edge-coloured graph $G_c$ where the arcs of $\overrightarrow G$ going from a vertex of $V_1$ to a vertex of $V_2$ are replaced by red edges and the others by blue edges.
We refer the reader to \cite{bookjbjgutin} for a more extensive discussion of edge-coloured graphs as well as other generalizations of graphs and of their applications.

In recent years, several papers have studied the connectivity of walk-restricted graphs. In \cite{antistrong}, Bang-Jensen et al. study antistrong connectivity of digraphs. In \cite{mcts}, Bellitto and Bergougnoux look for the smallest number of transitions required to connect every pair of vertices of a graph with compatible walks and prove that the problem is NP-complete. However, the best-studied model in the literature is edge-coloured graphs. In this paper, we investigate minimal requirements for undirected edge-coloured graphs to be connected by walks. Given an undirected connected graph, the question we study is to determine how many colours are required to colour its edges in such a way that every two vertices are connected by a properly coloured walk. This condition is thus weaker than the properness of the edge-colouring and can often be achieved with much fewer colours.

The minimum number of colours required to colour all the edges of a graph in such a way that every pair of vertices can be connected by a properly-coloured path was introduced in 2012 in \cite{Borozan2012} and is called the \textbf{proper connection number} of the graph. Determining the proper connection number of a graph has since been studied in several contexts, both with directed and undirected graphs, and with different definitions of connectivity that either require that the vertices are connected by properly-coloured elementary paths or that allow the vertices to be connected by walks that repeat vertices. 
While the definitions may vary slightly from a paper to an other, the proper connection number is generally defined as the number of colours necessary to connect the vertices with paths. Thus, following Melville and Goddard \cite{MelvilleGoddard}, we will talk about \textbf{proper-walk connection number} in the case of walks.

In most cases, determining the proper or proper-walk connection number of a graph has proved to be a challenging problem. In the directed case, Ducoffe et al. proved that it is already NP-complete to determine if there exists a 2-edge-colouring such that every pair of vertices is connected by properly-coloured paths. In the undirected case, several papers have studied necessary or sufficient conditions for graphs that can be connected with 2 colours, both in the case of walks and paths \cite{Brause1} \cite{Brause2} \cite{Brause3} \cite{MelvilleGoddard}, but no characterization of those graphs had emerged yet. The main contribution of this paper is to provide a polynomial-time algorithm that determines the proper-walk connection number of an undirected graph and returns an optimal connecting edge-colouring (Theorem \ref{mainthm}).

More formally, in the rest of this paper, we define an edge-coloured undirected graph $G_c=(V,E,c)$ as \textbf{properly connected} if and only if for every two vertices $u$ and $v$ in $V$, there exists a properly coloured \textbf{walk} between $u$ and $v$. In this case, we say that $c$ is a \textbf{connecting edge-colouring} of $G$. For example, the edge-coloured graph depicted in Figure \ref{mainexample} is properly connected. For example, the vertices $v_0$ and $v_2$ are connected by the properly coloured walk $(v_0,v_3, v_4, v_5, v_{13}, v_{12}, v_8, v_4, v_3, v_2)$.
Note that the vertices of the graph only have to be connected by walks and we can thus repeat vertices or edges.  The vertices of the graph of Figure \ref{mainexample} cannot all be connected by properly coloured elementary paths but we still consider the graph to be properly connected.

  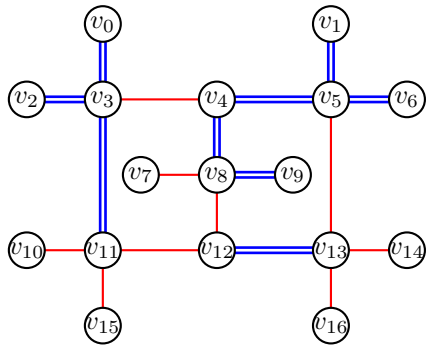
\begin{figure}[!h]
\[\begin{pspicture}(5.5,4.5)
\psline[linecolor=red](0.25,1.25)(1.25,1.25)
\colorprint{\psline[linecolor=blue, linewidth=3pt](0.25,3.25)(1.25,3.25)}
\psline[linecolor=red](1.25,0.25)(1.25,1.25)
\colorprint{\psline[linecolor=blue, linewidth=3pt](1.25,1.25)(1.25,3.25)}
\colorprint{\psline[linecolor=blue, linewidth=3pt](1.25,3.25)(1.25,4.25)}
\psline[linecolor=red](1.25,1.25)(2.75,1.25)
\psline[linecolor=red](1.25,3.25)(2.75,3.25)
\psline[linecolor=red](1.75,2.25)(2.75,2.25)
\psline[linecolor=red](2.75,1.25)(2.75,2.25)
\colorprint{\psline[linecolor=blue, linewidth=3pt](2.75,2.25)(2.75,3.25)}
\colorprint{\psline[linecolor=blue, linewidth=3pt](2.75,2.25)(3.75,2.25)}
\colorprint{\psline[linecolor=blue, linewidth=3pt](2.75,1.25)(4.25,1.25)}
\colorprint{\psline[linecolor=blue, linewidth=3pt](2.75,3.25)(4.25,3.25)}
\psline[linecolor=red](4.25,0.25)(4.25,1.25)
\psline[linecolor=red](5.25,1.25)(4.25,1.25)
\psline[linecolor=red](4.25,1.25)(4.25,3.25)
\colorprint{\psline[linecolor=blue, linewidth=3pt](4.25,3.25)(4.25,4.25)}
\colorprint{\psline[linecolor=blue, linewidth=3pt](4.25,3.25)(5.25,3.25)}

\psline[linecolor=blue, linewidth=1pt](0.25,3.25)(1.25,3.25)
\psline[linecolor=blue, linewidth=1pt](1.25,1.25)(1.25,3.25)
\psline[linecolor=blue, linewidth=1pt](1.25,3.25)(1.25,4.25)
\psline[linecolor=blue, linewidth=1pt](2.75,2.25)(2.75,3.25)
\psline[linecolor=blue, linewidth=1pt](2.75,2.25)(3.75,2.25)
\psline[linecolor=blue, linewidth=1pt](2.75,1.25)(4.25,1.25)
\psline[linecolor=blue, linewidth=1pt](2.75,3.25)(4.25,3.25)
\psline[linecolor=blue, linewidth=1pt](4.25,3.25)(4.25,4.25)
\psline[linecolor=blue, linewidth=1pt](4.25,3.25)(5.25,3.25)

\colorprint{\psline[linecolor=white, linewidth=1pt](0.25,3.25)(1.25,3.25)
\psline[linecolor=white, linewidth=1pt](1.25,1.25)(1.25,3.25)
\psline[linecolor=white, linewidth=1pt](1.25,3.25)(1.25,4.25)
\psline[linecolor=white, linewidth=1pt](2.75,2.25)(2.75,3.25)
\psline[linecolor=white, linewidth=1pt](2.75,2.25)(3.75,2.25)
\psline[linecolor=white, linewidth=1pt](2.75,1.25)(4.25,1.25)
\psline[linecolor=white, linewidth=1pt](2.75,3.25)(4.25,3.25)
\psline[linecolor=white, linewidth=1pt](4.25,3.25)(4.25,4.25)
\psline[linecolor=white, linewidth=1pt](4.25,3.25)(5.25,3.25)}

 \pscircle[fillstyle=solid, fillcolor=white](0.25,1.25){0.25}
 \pscircle[fillstyle=solid, fillcolor=white](0.25,3.25){0.25}
 \pscircle[fillstyle=solid, fillcolor=white](1.25,0.25){0.25}
 \pscircle[fillstyle=solid, fillcolor=white](1.25,1.25){0.25}
 \pscircle[fillstyle=solid, fillcolor=white](1.25,3.25){0.25}
 \pscircle[fillstyle=solid, fillcolor=white](1.25,4.25){0.25}
 \pscircle[fillstyle=solid, fillcolor=white](1.75,2.25){0.25}
 \pscircle[fillstyle=solid, fillcolor=white](2.75,1.25){0.25}
 \pscircle[fillstyle=solid, fillcolor=white](2.75,2.25){0.25}
 \pscircle[fillstyle=solid, fillcolor=white](2.75,3.25){0.25}
 \pscircle[fillstyle=solid, fillcolor=white](3.75,2.25){0.25}
 \pscircle[fillstyle=solid, fillcolor=white](4.25,0.25){0.25}
 \pscircle[fillstyle=solid, fillcolor=white](4.25,1.25){0.25}
 \pscircle[fillstyle=solid, fillcolor=white](4.25,3.25){0.25}
 \pscircle[fillstyle=solid, fillcolor=white](4.25,4.25){0.25}
 \pscircle[fillstyle=solid, fillcolor=white](5.25,1.25){0.25}
 \pscircle[fillstyle=solid, fillcolor=white](5.25,3.25){0.25}
 
\rput(0.25,1.25){$v_{10}$}
\rput(0.25,3.25){$v_2$}
\rput(1.25,0.25){$v_{15}$}
\rput(1.25,1.25){$v_{11}$}
\rput(1.25,3.25){$v_3$}
\rput(1.25,4.25){$v_0$}
\rput(1.75,2.25){$v_7$}
\rput(2.75,1.25){$v_{12}$}
\rput(2.75,2.25){$v_8$}
\rput(2.75,3.25){$v_4$}
\rput(3.75,2.25){$v_9$}
\rput(4.25,0.25){$v_{16}$}
\rput(4.25,1.25){$v_{13}$}
\rput(4.25,3.25){$v_5$}
\rput(4.25,4.25){$v_1$}
\rput(5.25,1.25){$v_{14}$}
\rput(5.25,3.25){$v_6$}

\end{pspicture}\]  
\caption{An example of properly connected edge-coloured graph. For readability in black and white, blue edges are represented with a double line.}
\label{mainexample}
\end{figure}

In this paper, the problem of connecting edge-colouring is defined as follows:

\pbDef{Connecting edge-colouring}
{A connected undirected graph $G=(V,E)$.}
{The smallest number of colours $k$ such that there exists a colouring function $c:E\mapsto [1,k]$ such that $G_c=(V,E,c)$ is properly connected.}

\subsection*{Terminology and definitions}

This paper follows the notation of \cite{bookjbjgutin}.

A \textbf{walk} of \textbf{length} $l$ is a sequence $v_0,v_1,\dots,v_l$ of adjacent vertices where $v_0$ and $v_l$ are the \textbf{end vertices}. If $v_0=v_l$, we say that the walk is \textbf{closed}.
A \textbf{path}, sometimes referred to as \textbf{elementary path} to avoid any ambiguity, is a walk whose vertices are all different and a \textbf{cycle} or \textbf{elementary cycle} is a closed walk whose vertices are all different except the end vertices.

The \textbf{distance} between two sets of vertices $S_1$ and $S_2$ is the length of a shortest walk that has an end vertex in $S_1$ and one in $S_2$. This definition allows for example to define the distance between two vertices or between a vertex and an edge or a path.

An \textbf{ear-decomposition} of a graph $G$ is an ordered set $(C,P_1,\dots,P_k)$ where 

$\bullet$ $C$ is a cycle in $G$ and each $P_i$ is a path or a cycle;

$\bullet$ $C$ and the $P_i$ partition the edges of the graph;

$\bullet$ for every $i$, the end vertices of $P_i$ are vertices of $C,P_1,\dots,P_{i-1}$ and its other vertices are not.
\\
It is well known that the 2-edge-connected graphs are exactly those that admit an ear-decomposition. Every cycle in a 2-edge-connected graph can be used as the starting cycle in an ear-decomposition. 

\section{The cases $k\neq 2$}

\subsection{Trivial bounds}

The number of colours required to connect a graph of $n$ vertices can be anywhere between 1 and $n-1$. Every connected graph can be connected with $n-1$ colours, for example by choosing a spanning tree and giving a different colour to all its edges. Complete graphs can be properly connected with only one colour while $n-1$ colours are required to connect a star. 

The number of colours required to connect a graph is also bounded by the chromatic index of the graph: indeed, if no two adjacent edges have the same colour, every walk in the graph is properly coloured but this condition is far from necessary. For example, the graphs with highest chromatic indexes are the complete graphs but they are those that require the fewest colours to be connected.

\subsection{Trees}

The number of colours required to connect a tree is exactly its maximum degree $\Delta$. Indeed, a greedy colouring of the edges of a rooted tree by order of increasing depth provides a proper edge-colouring of the tree using only $\Delta$ colours and thereby proves that $\Delta$ colours are enough to connect the tree. Conversely, if we colour the edges of the tree with fewer than $\Delta$ colours, every vertex $u$ of degree $\Delta$ will have two adjacent edges, say $uv$ and $uw$, with same colour and there are no properly coloured walks between $v$ and $w$.

\subsection{The case $k=1$}

Determining if a graph can be connected with only one colour comes down to determining if the graph is complete and can be done in $O(|V|+|E|)$. 

Indeed, if the graph is complete, all the vertices can be connected by a walk of length one and one colour is enough to connect the graph. Otherwise, two non-adjacent vertices $u$ and $v$ require a walk of length at least 2 and the first two edges of the walk must have different colours.

\subsection{Complexity of connecting $k$-edge-colouring with $k\geqslant 3$}

\begin{theorem} Any graph with a cycle can be connected with 3 colours.\label{lemmacycle}\end{theorem}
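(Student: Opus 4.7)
The plan is to build an explicit 3-edge-colouring of $G$ and verify case by case that every pair of vertices is connected by a properly coloured walk. Fix any cycle $C$ of $G$. I would first colour $C$ itself: if $|C|$ is even, alternate colours $1$ and $2$ around $C$; if $|C|$ is odd, use the cyclic sequence $1,2,1,2,\ldots,1,2,3$, the third colour appearing once to absorb the parity defect. Either way the colour sequence around $C$ is a properly coloured closed walk, so every pair of vertices of $V(C)$ is already joined by a proper walk in $C$. Moreover every $c\in V(C)$ sees two distinct colours on its cycle-edges, say $\alpha_c,\beta_c$; I write $\gamma_c\in\{1,2,3\}$ for the remaining ``free'' colour at $c$.

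Next I would run BFS from the source set $V(C)$ in $G$ to obtain a spanning forest $F$ rooted on $V(C)$: every $v\notin V(C)$ has a depth $d(v)\ge 1$, a parent $p(v)$, and a root $r(v)\in V(C)$. I colour the tree edge $vp(v)$ with $\gamma_{r(v)}$ when $d(v)$ is odd and with $\alpha_{r(v)}$ when $d(v)$ is even, and give any remaining edge of $G$ any colour in $\{1,2,3\}$. The colours along any root-to-leaf path of $F$ then alternate between $\gamma_r$ and $\alpha_r$, so up- and down-walks inside $F$ are proper; and the edge of $F$ incident to any root $r$ has colour $\gamma_r$, which differs from both $\alpha_r$ and $\beta_r$, so every junction between $F$ and $C$ at a root is automatically a proper colour change.

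Proper connectivity of an arbitrary pair $u,v$ then breaks into cases: if $u,v\in V(C)$, walk along $C$; if exactly one is off $C$, walk along $C$ to the relevant root and descend $F$; if both are off $C$ with distinct roots $r_u\neq r_v$, go up $F$ to $r_u$, along a sub-walk of $C$ to $r_v$, and down $F$ to $v$. Each of these walks is proper by the two observations of the previous paragraph.

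The main obstacle is the remaining case $u,v\notin V(C)$ sharing a common root $r$: the straight tree walk through the LCA of $u$ and $v$ in $F$ can use colour $\alpha_r$ on both sides of the LCA and so fail to be proper. I would resolve this by detouring all the way around $C$ once: go up from $u$ to $r$, traverse $C$ completely, then descend to $v$. A full traversal of $C$ starting at $r$ begins with one of $\{\alpha_r,\beta_r\}$ and ends with the other, both distinct from $\gamma_r$, so the two tree/cycle transitions at $r$ (the one before the loop and the one after) are both proper, and the walk descends properly to $v$. This single lap around $C$ is the only place where the hypothesis that $G$ contains a cycle is actually used, and it completes the verification.
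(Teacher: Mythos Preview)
Your proof is correct and follows essentially the same approach as the paper: properly $3$-colour the cycle, then colour outward edges by parity of distance to an anchor on $C$ using the anchor's free colour first, and route every pair through $C$ (using a full lap when both endpoints share the same anchor). The only cosmetic difference is that the paper colours all of $G\setminus E(C)$ by distance from one chosen cycle vertex per component, whereas you colour just a BFS forest rooted at $V(C)$ and leave the remaining edges arbitrary; since your connecting walks use only tree and cycle edges, this changes nothing.
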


Note that this theorem already appears in \cite{MelvilleGoddard}. However, we leave it in this paper for the sake of completeness. Indeed, our proof is constructive and we would like our paper to provide a general algorithm for colouring optimally any undirected graph. 

\begin{proof}

Let $C$ be an elementary cycle in the graph and let $V_C$ and $E_C$ be the vertices and edges of $C$. Let $\chi$ be a proper edge-colouring of $C$ using at most 3 colours.

We set $G'=G\setminus E_C$. Since $G$ is connected, we know that every connected component $C_i$ of $G'$ contains a vertex $v_i$ of $V_C$. For every connected component of $G'$, we 
know that $v_i$ has two incident edges in $E_C$. Let $a_i\in\{1,2,3\}$ be the colour that is not used by $\chi$ to colour any incident edge of $v_i$ in $C$ and let $b_i\neq a_i$ be another colour of $\{1,2,3\}$. We extend $\chi$ by using colour $a_i$ on every edge of $C_i$ at even distance from $v_i$ in $G'$ and colour $b_i$ on every edge at odd distance.
This construction is illustrated in Figure \ref{figlemmacycle}. 

We claim that the resulting colouring $\chi$ connects the graph. Indeed, let $u$ and $w$ be vertices of the graph and let $C_i$ and $C_j$ be their respective components in $G'$. 
The vertices $u$ and $w$ can be connected by:
\begin{itemize}
 \item going from $u$ to $v_i$ using a shortest walk in $G'$
 \item using the cycle $c$ to connect $v_i$ and $v_j$ (if $v_i=v_j$, we use the entire cycle and not an empty walk);
 \item going from $v_j$ to $w$ using a shortest walk in $G'$.\qedhere
\end{itemize}

  \begin{figure}[!h]
\begin{subfigure}{0.48\linewidth}
\[\begin{pspicture}(-1,0)(5,3.16)
\psset{unit=0.9cm}
\psline[linecolor=gray, linestyle=dashed](-0.81,0.66)(0.19,1.84)
\psline[linecolor=gray, linestyle=dashed](-0.81,0.66)(0.19,0.66)
\psline[linecolor=gray, linestyle=dashed](0.19,0.66)(0.19,1.84)
\psline[linecolor=gray, linestyle=dashed](0.19,0.66)(1.19,0.66)
\psline[linecolor=gray, linestyle=dashed](0.19,1.84)(1.19,1.84)
\psline[linecolor=red](1.19,0.66)(1.19,1.84)
\psline[linecolor=blue](1.19,0.66)(2.31,0.3)
\colorprint{\psline[linecolor=blue, linewidth=3pt](1.19,0.66)(2.31,0.3)
\psline[linecolor=white](1.19,0.66)(2.31,0.3)}
\psline[linecolor=red](2.31,2.2)(3,1.25)
\psline[linecolor=blue](2.31,2.2)(1.19,1.84)
\colorprint{\psline[linecolor=blue, linewidth=3pt](2.31,2.2)(1.19,1.84)
\psline[linecolor=white](2.31,2.2)(1.19,1.84)}
\colorprint{\psline[linecolor=green, linewidth=5pt](2.31,0.3)(3,1.25)
\psline[linecolor=white, linewidth=3pt](2.31,0.3)(3,1.25)}
\psline[linecolor=green](2.31,0.3)(3,1.25)
\psline[linecolor=gray, linestyle=dashed](2.31,2.2)(1.4,2.91)
\psline[linecolor=gray, linestyle=dashed](2.31,2.2)(3.02,2.91)
\psline[linecolor=gray, linestyle=dashed](3.02,2.91)(4.02,2.91)
\psline[linecolor=gray, linestyle=dashed](3,1.25)(4,1.25)
\psline[linecolor=gray, linestyle=dashed](4,1.25)(4.71,1.96)
\psline[linecolor=gray, linestyle=dashed](4,1.25)(4.71,0.54)
\psline[linecolor=gray, linestyle=dashed](4.71,1.96)(4.71,0.54)

 \pscircle[fillstyle=solid, fillcolor=white](-0.81,0.66){0.25}
 \pscircle[fillstyle=solid, fillcolor=white](0.19,0.66){0.25}
 \pscircle[fillstyle=solid, fillcolor=white](0.19,1.84){0.25}
 \pscircle[fillstyle=solid, fillcolor=white](1.19,0.66){0.25}
 \pscircle[fillstyle=solid, fillcolor=white](1.19,1.84){0.25}
 \pscircle[fillstyle=solid, fillcolor=white](2.31,2.2){0.25}
 \pscircle[fillstyle=solid, fillcolor=white](1.4,2.91){0.25}
 \pscircle[fillstyle=solid, fillcolor=white](3.02,2.91){0.25}
 \pscircle[fillstyle=solid, fillcolor=white](4.02,2.91){0.25}
 \pscircle[fillstyle=solid, fillcolor=white](2.31,0.3){0.25}
 \pscircle[fillstyle=solid, fillcolor=white](3,1.25){0.25}
 \pscircle[fillstyle=solid, fillcolor=white](4,1.25){0.25}
 \pscircle[fillstyle=solid, fillcolor=white](4.71,1.96){0.25}
 \pscircle[fillstyle=solid, fillcolor=white](4.71,0.54){0.25}

\end{pspicture}\]
\caption{The first step is to choose a cycle \\ $C$ in the graph and colour it properly.}
\end{subfigure}
\begin{subfigure}{0.48\linewidth}
\[\begin{pspicture}(-1,0)(5,3.16)
\psset{unit=0.9cm}
\psline[linecolor=black](-0.81,0.66)(0.19,0.66)
\psline[linecolor=black](0.19,0.66)(0.19,1.84)
\psline[linecolor=black](0.19,0.66)(1.19,0.66)
\psline(-0.81,0.66)(0.19,1.84)
\psline[linecolor=black](0.19,1.84)(1.19,1.84)
\psline[linecolor=black](2.31,2.2)(1.4,2.91)
\psline[linecolor=black](2.31,2.2)(3.02,2.91)
\psline[linecolor=black](3.02,2.91)(4.02,2.91)
\psline[linecolor=black](3,1.25)(4,1.25)
\psline[linecolor=black](4,1.25)(4.71,1.96)
\psline(4,1.25)(4.71,0.54)
\psline[linecolor=black](4.71,1.96)(4.71,0.54)

 \pscircle[fillstyle=solid, fillcolor=white](-0.81,0.66){0.25}
 \pscircle[fillstyle=solid, fillcolor=white](0.19,0.66){0.25}
 \pscircle[fillstyle=solid, fillcolor=white](0.19,1.84){0.25}
 \pscircle[fillstyle=solid, fillcolor=white](1.19,0.66){0.25}
 \pscircle[fillstyle=solid, fillcolor=white](1.19,1.84){0.25}
 \pscircle[fillstyle=solid, fillcolor=white](2.31,2.2){0.25}
 \pscircle[fillstyle=solid, fillcolor=white](1.4,2.91){0.25}
 \pscircle[fillstyle=solid, fillcolor=white](3.02,2.91){0.25}
 \pscircle[fillstyle=solid, fillcolor=white](4.02,2.91){0.25}
 \pscircle[fillstyle=solid, fillcolor=white](2.31,0.3){0.25}
 \pscircle[fillstyle=solid, fillcolor=white](3,1.25){0.25}
 \pscircle[fillstyle=solid, fillcolor=white](4,1.25){0.25}
 \pscircle[fillstyle=solid, fillcolor=white](4.71,1.96){0.25}
 \pscircle[fillstyle=solid, fillcolor=white](4.71,0.54){0.25}
 
 \rput(1.19,0.66){$v_1$}
 \rput(2.31,2.2){$v_2$}
 \rput(3,1.25){$v_3$}
 
\end{pspicture}\]
\caption{We create $G'$ by removing the edges of the chosen cycle and we pick arbitrarily a vertex of $C$ in each connected component of $G'$.}
\end{subfigure}

\begin{subfigure}{0.48\linewidth}
\[\begin{pspicture}(-1,0)(5,3.16)
\psset{unit=0.9cm}
\psline[linecolor=red](-0.81,0.66)(0.19,0.66)
\psline[linecolor=red](0.19,0.66)(0.19,1.84)
\colorprint{\psline[linecolor=green, linewidth=5pt](0.19,0.66)(1.19,0.66)
\psline[linecolor=white, linewidth=3pt](0.19,0.66)(1.19,0.66)}
\psline[linecolor=green](0.19,0.66)(1.19,0.66)
\colorprint{\psline[linecolor=green, linewidth=5pt](-0.81,0.66)(0.19,1.84)
\psline[linecolor=white, linewidth=3pt](-0.81,0.66)(0.19,1.84)}
\psline[linecolor=green](-0.81,0.66)(0.19,1.84)
\colorprint{\psline[linecolor=green, linewidth=5pt](0.19,1.84)(1.19,1.84)
\psline[linecolor=white, linewidth=3pt](0.19,1.84)(1.19,1.84)}
\psline[linecolor=green](0.19,1.84)(1.19,1.84)
\psline[linecolor=red, linestyle=dotted](1.19,0.66)(1.19,1.84)
\psline[linecolor=blue, linestyle=dotted](1.19,0.66)(2.31,0.3)
\psline[linecolor=red, linestyle=dotted](2.31,2.2)(3,1.25)
\psline[linecolor=blue, linestyle=dotted](2.31,2.2)(1.19,1.84)
\psline[linecolor=green, linestyle=dotted](2.31,0.3)(3,1.25)
\colorprint{\psline[linecolor=green, linewidth=5pt](2.31,2.2)(1.4,2.91)
\psline[linecolor=white, linewidth=3pt](2.31,2.2)(1.4,2.91)}
\psline[linecolor=green](2.31,2.2)(1.4,2.91)
\colorprint{\psline[linecolor=green, linewidth=5pt](2.31,2.2)(3.02,2.91)
\psline[linecolor=white, linewidth=3pt](2.31,2.2)(3.02,2.91)}
\psline[linecolor=green](2.31,2.2)(3.02,2.91)
\psline[linecolor=blue](3.02,2.91)(4.02,2.91)
\psline[linecolor=blue](3,1.25)(4,1.25)
\colorprint{\psline[linecolor=blue, linewidth=3pt](3.02,2.91)(4.02,2.91)
\psline[linecolor=white](3.02,2.91)(4.02,2.91)
\psline[linecolor=blue, linewidth=3pt](3,1.25)(4,1.25)
\psline[linecolor=white](3,1.25)(4,1.25)}
\psline[linecolor=red](4,1.25)(4.71,1.96)
\psline[linecolor=red](4,1.25)(4.71,0.54)
\psline[linecolor=blue](4.71,1.96)(4.71,0.54)
\colorprint{\psline[linecolor=blue, linewidth=3pt](4.71,1.96)(4.71,0.54)
\psline[linecolor=white](4.71,1.96)(4.71,0.54)}

 \pscircle[fillstyle=solid, fillcolor=white](-0.81,0.66){0.25}
 \pscircle[fillstyle=solid, fillcolor=white](0.19,0.66){0.25}
 \pscircle[fillstyle=solid, fillcolor=white](0.19,1.84){0.25}
 \pscircle[fillstyle=solid, fillcolor=white](1.19,0.66){0.25}
 \pscircle[fillstyle=solid, fillcolor=white](1.19,1.84){0.25}
 \pscircle[fillstyle=solid, fillcolor=white](2.31,2.2){0.25}
 \pscircle[fillstyle=solid, fillcolor=white](1.4,2.91){0.25}
 \pscircle[fillstyle=solid, fillcolor=white](3.02,2.91){0.25}
 \pscircle[fillstyle=solid, fillcolor=white](4.02,2.91){0.25}
 \pscircle[fillstyle=solid, fillcolor=white](2.31,0.3){0.25}
 \pscircle[fillstyle=solid, fillcolor=white](3,1.25){0.25}
 \pscircle[fillstyle=solid, fillcolor=white](4,1.25){0.25}
 \pscircle[fillstyle=solid, fillcolor=white](4.71,1.96){0.25}
 \pscircle[fillstyle=solid, fillcolor=white](4.71,0.54){0.25}
 
 \rput(1.19,0.66){$v_1$}
 \rput(2.31,2.2){$v_2$}
 \rput(3,1.25){$v_3$}

\end{pspicture}\]
\caption{In each component $C_i$, we colour the edges at even distance form $v_i$ with the colour that is not used by the edges incident to the $v_i$ in $C$ and the other edges with another colour.}
\end{subfigure}
\begin{subfigure}{0.48\linewidth}
\[\begin{pspicture}(-1,0)(5,3.16)
\psset{unit=0.9cm}
\psline[linecolor=red](-0.81,0.66)(0.19,0.66)
\psline[linecolor=red](0.19,0.66)(0.19,1.84)
\colorprint{\psline[linecolor=green, linewidth=5pt](0.19,0.66)(1.19,0.66)
\psline[linecolor=white, linewidth=3pt](0.19,0.66)(1.19,0.66)}
\psline[linecolor=green](0.19,0.66)(1.19,0.66)
\colorprint{\psline[linecolor=green, linewidth=5pt](-0.81,0.66)(0.19,1.84)
\psline[linecolor=white, linewidth=3pt](-0.81,0.66)(0.19,1.84)}
\psline[linecolor=green](-0.81,0.66)(0.19,1.84)
\colorprint{\psline[linecolor=green, linewidth=5pt](0.19,1.84)(1.19,1.84)
\psline[linecolor=white, linewidth=3pt](0.19,1.84)(1.19,1.84)}
\psline[linecolor=green](0.19,1.84)(1.19,1.84)
\psline[linecolor=red](1.19,0.66)(1.19,1.84)
\psline[linecolor=blue](1.19,0.66)(2.31,0.3)
\colorprint{\psline[linecolor=blue, linewidth=3pt](1.19,0.66)(2.31,0.3)
\psline[linecolor=white](1.19,0.66)(2.31,0.3)}
\psline[linecolor=red](2.31,2.2)(3,1.25)
\psline[linecolor=blue](2.31,2.2)(1.19,1.84)
\colorprint{\psline[linecolor=blue, linewidth=3pt](2.31,2.2)(1.19,1.84)
\psline[linecolor=white](2.31,2.2)(1.19,1.84)}
\colorprint{\psline[linecolor=green, linewidth=5pt](2.31,0.3)(3,1.25)
\psline[linecolor=white, linewidth=3pt](2.31,0.3)(3,1.25)}
\psline[linecolor=green](2.31,0.3)(3,1.25)
\colorprint{\psline[linecolor=green, linewidth=5pt](2.31,2.2)(1.4,2.91)
\psline[linecolor=white, linewidth=3pt](2.31,2.2)(1.4,2.91)}
\psline[linecolor=green](2.31,2.2)(1.4,2.91)
\colorprint{\psline[linecolor=green, linewidth=5pt](2.31,2.2)(3.02,2.91)
\psline[linecolor=white, linewidth=3pt](2.31,2.2)(3.02,2.91)}
\psline[linecolor=green](2.31,2.2)(3.02,2.91)
\psline[linecolor=blue](3.02,2.91)(4.02,2.91)
\colorprint{\psline[linecolor=blue, linewidth=3pt](3.02,2.91)(4.02,2.91)
\psline[linecolor=white](3.02,2.91)(4.02,2.91)}
\psline[linecolor=blue](3,1.25)(4,1.25)
\colorprint{\psline[linecolor=blue, linewidth=3pt](3,1.25)(4,1.25)
\psline[linecolor=white](3,1.25)(4,1.25)}
\psline[linecolor=red](4,1.25)(4.71,1.96)
\psline[linecolor=red](4,1.25)(4.71,0.54)
\psline[linecolor=blue](4.71,1.96)(4.71,0.54)
\colorprint{\psline[linecolor=blue, linewidth=3pt](4.71,1.96)(4.71,0.54)
\psline[linecolor=white](4.71,1.96)(4.71,0.54)}

 \pscircle[fillstyle=solid, fillcolor=white](-0.81,0.66){0.25}
 \pscircle[fillstyle=solid, fillcolor=white](0.19,0.66){0.25}
 \pscircle[fillstyle=solid, fillcolor=white](0.19,1.84){0.25}
 \pscircle[fillstyle=solid, fillcolor=white](1.19,0.66){0.25}
 \pscircle[fillstyle=solid, fillcolor=white](1.19,1.84){0.25}
 \pscircle[fillstyle=solid, fillcolor=white](2.31,2.2){0.25}
 \pscircle[fillstyle=solid, fillcolor=white](1.4,2.91){0.25}
 \pscircle[fillstyle=solid, fillcolor=white](3.02,2.91){0.25}
 \pscircle[fillstyle=solid, fillcolor=white](4.02,2.91){0.25}
 \pscircle[fillstyle=solid, fillcolor=white](2.31,0.3){0.25}
 \pscircle[fillstyle=solid, fillcolor=white](3,1.25){0.25}
 \pscircle[fillstyle=solid, fillcolor=white](4,1.25){0.25}
 \pscircle[fillstyle=solid, fillcolor=white](4.71,1.96){0.25}
 \pscircle[fillstyle=solid, fillcolor=white](4.71,0.54){0.25}

\end{pspicture}\]
\caption{The resulting colouring connects the graph.}
\end{subfigure}
\caption{An example of how to construct a connecting 3-edge-colouring of a graph with a cycle.}
\label{figlemmacycle}
\end{figure}
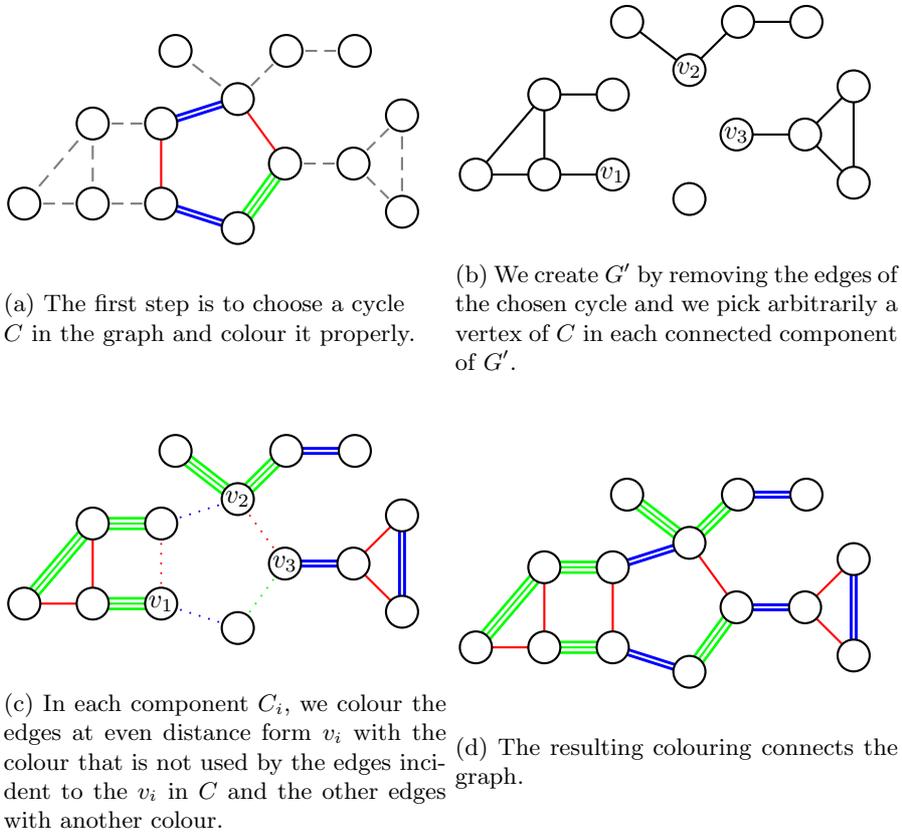
 \end{proof}

 The complexity of $k$-colouring for $k\geqslant 3$ quickly follows.
 
 \begin{corollary}
If $k\geqslant 3$, we can decide in polynomial time if a graph $G$ can be connected with $k$ colours. 
\end{corollary}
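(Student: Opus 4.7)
The plan is to split into two cases depending on whether $G$ contains a cycle, both of which have already been essentially handled in the preceding subsections. Since $G$ is connected, it either contains a cycle or is a tree, and this dichotomy can be detected in time $O(|V|+|E|)$ by, for example, a depth-first search (or equivalently by checking whether $|E|\geqslant |V|$).

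In the first case, suppose $G$ contains a cycle. Theorem \ref{lemmacycle} provides a connecting $3$-edge-colouring. Since $k\geqslant 3$, we may view this colouring as a function $c:E\mapsto[1,k]$ (the extra colours simply go unused), and so the answer is YES. In the second case, $G$ is a tree, and from the subsection on trees we know that the proper-walk connection number of a tree is exactly its maximum degree $\Delta(G)$: a greedy colouring of the edges by increasing depth uses $\Delta(G)$ colours and connects the tree, whereas any colouring with strictly fewer colours leaves some vertex of degree $\Delta(G)$ with two identically-coloured incident edges, disconnecting its two neighbours through that vertex (and in a tree there is no alternative route). Hence $G$ can be connected with $k$ colours iff $k\geqslant \Delta(G)$, which is again testable in linear time.

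Combining these two cases gives the polynomial-time algorithm: test whether $G$ contains a cycle; if yes, return YES; otherwise compute $\Delta(G)$ and return YES iff $\Delta(G)\leqslant k$. There is no real obstacle in the proof since both ingredients (Theorem \ref{lemmacycle} for cycles and the characterisation of trees) are already available; the only point worth being careful about is the observation that a colouring using fewer than $k$ colours still counts as a valid colouring with codomain $[1,k]$, so that Theorem \ref{lemmacycle} indeed settles the cyclic case for every $k\geqslant 3$.
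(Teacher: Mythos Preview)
Your proof is correct and follows essentially the same approach as the paper: split into the tree case (where the answer is YES iff $\Delta(G)\leqslant k$) and the non-tree case (where Theorem~\ref{lemmacycle} guarantees the answer is always YES). Your write-up is simply more detailed than the paper's two-line version.
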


\begin{proof}
If the graph is a tree, the question comes down to deciding if the graph has a vertex of degree strictly greater than $k$, which is easy. Otherwise, the answer is always yes.
\end{proof}

 Hence, the only remaining case is $k=2$.
 
\section{Connecting 2-edge-colouring}

All the colourings we consider in this section are 2-colourings.

\subsection{Bipartite graphs}

We present in this subsection a characterization of the bipartite graphs that can be connected with two colours. This question has also been answered independently by \cite{MelvilleGoddard} and \cite{Ducoffe} but we keep it for the sake of completeness. Another reason for keeping the proof in the paper is that it illustrates a very nice correspondence between strong connectivity for bipartite digraphs and connecting 2-edge-colourings of bipartite graphs. This has lead to a number of nice proofs of results on 2-edge-coloured graphs, see \cite[Section 16.7]{bookjbjgutin}.

\begin{theorem}\label{thmbipartite}
 A bipartite graph $G$ can be connected with two colours if and only if it can be made 2-edge-connected by adding at most one edge.
\end{theorem}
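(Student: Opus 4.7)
The plan is to use the correspondence alluded to in the introduction between 2-edge-colourings of a bipartite graph and orientations of it. Fix a bipartition $(X,Y)$ of $G$ and, given an edge-colouring $c:E\to\{1,2\}$, form the orientation $D_c$ by orienting every edge of colour $1$ from $X$ to $Y$ and every edge of colour $2$ from $Y$ to $X$; this is a bijection between 2-edge-colourings of $G$ and orientations of $G$. Since any walk in a bipartite graph alternates between $X$ and $Y$, alternating colours along a walk in $(G,c)$ is the same thing as traversing arcs of $D_c$ consistently in one direction, so a proper walk from $u$ to $v$ in $(G,c)$ is precisely a directed walk from $u$ to $v$ or from $v$ to $u$ in $D_c$, the direction being fixed by the colour of the first edge used. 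Setting up this correspondence carefully is the main technical obstacle; once it is in place, $(G,c)$ is properly connected if and only if $D_c$ is \emph{unilateral}, meaning that for every pair of vertices $u,v$ at least one of them reaches the other by a directed walk.

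It then suffices to prove the purely structural statement: a connected graph $G$ admits a unilateral orientation if and only if it can be made 2-edge-connected by adding at most one edge. The latter is in turn equivalent to the bridge-block tree $T$ of $G$ being a path (possibly a single vertex, in which case $G$ is already 2-edge-connected), since adding an edge $uv$ kills exactly the bridges lying on the $u$-$v$ path in $T$.

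For the ``only if'' direction, suppose $T$ has a vertex $B_0$ of degree at least three and pick three bridges $f_1,f_2,f_3$ incident with $B_0$, corresponding to three pairwise vertex-disjoint subtrees $T_1,T_2,T_3$ of $T-B_0$. In any orientation of $G$, each $f_i$ is oriented either into $B_0$ or out of $B_0$, so by pigeonhole two of them, say $f_1$ and $f_2$, point the same way relative to $B_0$. If both are oriented out of $B_0$, then no vertex in the portion of $G$ corresponding to $T_1$ can reach anything outside that portion, and likewise for $T_2$, so picking a vertex in each gives a pair neither of which reaches the other; the ``both into $B_0$'' case is symmetric. In either case the orientation fails to be unilateral.

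For the ``if'' direction, suppose $T$ is the path $B_1,B_2,\dots,B_k$ with bridges $f_1,\dots,f_{k-1}$. Apply Robbins' theorem to endow each non-trivial $B_i$ with a strong orientation and orient every $f_i$ from $B_i$ to $B_{i+1}$. A straightforward induction on $j-i$ shows that, for $i\le j$, any vertex of $B_i$ reaches any vertex of $B_j$ by walking strongly inside the current block to the tail of the next bridge, crossing it, and iterating. The orientation is therefore unilateral, and pulling it back through the correspondence yields a connecting 2-edge-colouring of $G$.
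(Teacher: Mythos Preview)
Your proof is correct and takes a genuinely different route from the paper's.

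The paper establishes the two implications separately and somewhat asymmetrically. For necessity it argues directly with the colouring: if the bridge tree has three leaves, a parity analysis of walk lengths in the bipartite graph forces contradictory colour constraints on the three pendant bridges. For sufficiency it adds the extra edge $e$, takes an ear-decomposition of $G+e$ through $e$, and builds by hand an orientation in which every pair is joined by a directed walk avoiding $e$; only at the very end is this converted back to a 2-colouring via the bipartition. Your argument front-loads the colouring/orientation dictionary and then works entirely on the orientation side, reducing the theorem to the clean structural fact that a connected graph has a unilateral orientation iff its bridge-block tree is a path. Your necessity proof (pigeonhole on the directions of three bridges at a branching block) is shorter and does not rely on bipartiteness at all; your sufficiency proof invokes Robbins' theorem block-by-block instead of unrolling an ear-decomposition. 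What the paper's approach buys is that it is explicitly constructive from first principles (it never cites Robbins), which matters because the authors want a self-contained polynomial-time algorithm; what your approach buys is modularity and a statement about unilateral orientations that holds for all connected graphs, with bipartiteness used only to translate back to colours.
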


\begin{proof} We prove the two implications separately:
\begin{itemize}
 \item [$\Rightarrow$:] Let $G$ be a graph and let us consider the tree $T(G)$ whose vertices are the 2-edge-connected components of $G$ and whose edges are the bridges of $G$. If $T(G)$ has only one leaf (and thus, only one vertex), this means that $G$ is 2-edge-connected. It $T(G)$ has two leaves, it is a path and one can 2-edge-connect it by adding an edge between the two leaves. Hence, if $G$ is not 2-edge-connected and cannot be made 2-edge-connected by adding only one edge, we know that $T(G)$ has at least three leaves. Let $e_1=u_1v_1$, $e_2=u_2v_2$ and $e_3=u_3v_3$ be bridges connecting three distinct leaves of $T(G)$ to the rest of the tree such that the vertices $v_i$ belong to the leaves.

Since $G$ is bipartite, if there is a walk of odd length between $u_1$ and $u_2$, we know that all the walks between $u_1$ and $u_2$ are of odd. Thus, every walk between $v_1$ and $v_2$ consists of $e_1$, a subwalk of odd length and $e_2$, which means that $e_1$ and $e_2$ must have the same colour in any connecting 2-edge-colouring. Similarly, if there is a walk of even length between $u_1$ and $u_2$, the edges $e_1$ and $e_2$ must have different colours. By applying this observation to $e_1$ and $e_3$ too, we find that if the distance between $u_1$ and $u_2$ has the same parity as the distance between $u_1$ and $u_3$, $e_2$ and $e_3$ must have the same colour and conversely, if these distances have different parity, $e_2$ and $e_3$ must have different colours. However, if the distance between $u_1$ and $u_2$ has the same parity as the distance between $u_1$ and $u_3$, this means that there exists a walk of even length between $u_2$ and $u_3$ (going through $u_1$), which means that $e_2$ and $e_3$ must have different colours for $v_2$ to be connected to $v_3$, contradicting the above. The same contradiction arises if the distance between $u_1$ and $u_2$ has different parity than the distance between $u_1$ and $u_3$. This is illustrated in Figure \ref{evencycles}. Hence, if $G$ cannot be made 2-edge-connected by adding an edge, then $G$ cannot be connected with 2 colours.
 
 \begin{figure}[!h]
\begin{subfigure}{\linewidth}
\begin{pspicture}(2.5,3.5)
\psset{unit=0.9cm}
\psline[linecolor=red](0.25,0.25)(1.25,1.25)
\psline[linecolor=blue](2.25,0.25)(1.25,1.25)
\colorprint{\psline[linecolor=blue, linewidth=3pt](2.25,0.25)(1.25,1.25)
\psline[linecolor=white](2.25,0.25)(1.25,1.25)}
\psline[linecolor=red](1.25,1.25)(0.25,2.25)
\psline[linecolor=blue](1.25,1.25)(2.25,2.25)
\colorprint{\psline[linecolor=blue, linewidth=3pt](1.25,1.25)(2.25,2.25)
\psline[linecolor=white](1.25,1.25)(2.25,2.25)}
\psline[linecolor=blue](0.25,2.25)(1.25,3.25)
\colorprint{\psline[linecolor=blue, linewidth=3pt](0.25,2.25)(1.25,3.25)
\psline[linecolor=white](0.25,2.25)(1.25,3.25)}
\psline[linecolor=red](2.25,2.25)(1.25,3.25)

 \pscircle[fillstyle=solid, fillcolor=white](0.25,0.25){0.25}
 \pscircle[fillstyle=solid, fillcolor=white](2.25,0.25){0.25}
 \pscircle[fillstyle=solid, fillcolor=white](1.25,1.25){0.25}
 \pscircle[fillstyle=solid, fillcolor=white](0.25,2.25){0.25}
 \pscircle[fillstyle=solid, fillcolor=white](2.25,2.25){0.25}
 \pscircle[fillstyle=solid, fillcolor=white](1.25,3.25){0.25}

\end{pspicture}
\begin{pspicture}(-1.4,0)(3.9,3.5)
\psset{unit=0.9cm}
\psline[linecolor=red](1.25,1.25)(0.25,2.25)
\psline[linecolor=blue](1.25,1.25)(2.25,2.25)
\colorprint{\psline[linecolor=blue, linewidth=3pt](1.25,1.25)(2.25,2.25)
\psline[linecolor=white](1.25,1.25)(2.25,2.25)}
\psline[linecolor=blue](0.25,2.25)(1.25,3.25)
\colorprint{\psline[linecolor=blue, linewidth=3pt](0.25,2.25)(1.25,3.25)
\psline[linecolor=white](0.25,2.25)(1.25,3.25)}
\psline[linecolor=red](2.25,2.25)(1.25,3.25)
\psline[linecolor=blue](0.25,2.25)(-1.16,2.25)
\colorprint{\psline[linecolor=blue, linewidth=3pt](0.25,2.25)(-1.16,2.25)
\psline[linecolor=white](0.25,2.25)(-1.16,2.25)}
\psline[linecolor=red](2.25,2.25)(3.61,2.25)

 \pscircle[fillstyle=solid, fillcolor=white](1.25,1.25){0.25}
 \pscircle[fillstyle=solid, fillcolor=white](-1.16,2.25){0.25}
 \pscircle[fillstyle=solid, fillcolor=white](3.61,2.25){0.25}
  \pscircle[fillstyle=solid, fillcolor=white](0.25,2.25){0.25}
 \pscircle[fillstyle=solid, fillcolor=white](2.25,2.25){0.25}
 \pscircle[fillstyle=solid, fillcolor=white](1.25,3.25){0.25}

\end{pspicture}
\begin{pspicture}(3.75,3.5)
\psset{unit=0.9cm}
\psline[linecolor=red](1.25,1.25)(0.25,2.25)
\psline[linecolor=blue](1.25,1.25)(2.25,2.25)
\colorprint{\psline[linecolor=blue, linewidth=3pt](1.25,1.25)(2.25,2.25)
\psline[linecolor=white](1.25,1.25)(2.25,2.25)}
\psline[linecolor=blue](0.25,2.25)(1.25,3.25)
\colorprint{\psline[linecolor=blue, linewidth=3pt](0.25,2.25)(1.25,3.25)
\psline[linecolor=white](0.25,2.25)(1.25,3.25)}
\psline[linecolor=red](1.25,1.25)(2.25,0.25)
\psline[linecolor=red](2.25,2.25)(1.25,3.25)
\psline[linecolor=red](2.25,2.25)(3.25,3.25)

 \pscircle[fillstyle=solid, fillcolor=white](1.25,1.25){0.25}
 \pscircle[fillstyle=solid, fillcolor=white](2.25,0.25){0.25}
 \pscircle[fillstyle=solid, fillcolor=white](3.25,3.25){0.25}
  \pscircle[fillstyle=solid, fillcolor=white](0.25,2.25){0.25}
 \pscircle[fillstyle=solid, fillcolor=white](2.25,2.25){0.25}
 \pscircle[fillstyle=solid, fillcolor=white](1.25,3.25){0.25}
\end{pspicture}
\caption{Three graphs that have three 2-edge-connected components: the induced $C_4$ and each of the two vertices of degree one. The bridges are therefore the edges connecting the vertices of degree one to the rest of the graph. Two bridges at odd distance must have the same colour and two bridges at even distance must have different colours.}
\label{evencycles1}
\end{subfigure}

\begin{subfigure}{\linewidth}
\begin{pspicture}(2.5,4)
\psset{unit=0.9cm}
\psline[linecolor=black](0.25,0.75)(1.25,1.75)
\psline[linecolor=black](2.25,0.75)(1.25,1.75)
\psline[linecolor=black](1.25,0.34)(1.25,1.75)
\psline[linecolor=black](1.25,1.75)(0.25,2.75)
\psline[linecolor=black](1.25,1.75)(2.25,2.75)
\psline[linecolor=black](0.25,2.75)(1.25,3.75)
\psline[linecolor=black](2.25,2.75)(1.25,3.75)

 \pscircle[fillstyle=solid, fillcolor=white](0.25,0.75){0.25}
 \pscircle[fillstyle=solid, fillcolor=white](2.25,0.75){0.25}
 \pscircle[fillstyle=solid, fillcolor=white](1.25,1.75){0.25}
 \pscircle[fillstyle=solid, fillcolor=white](1.25,0.34){0.25}
 \pscircle[fillstyle=solid, fillcolor=white](0.25,2.75){0.25}
 \pscircle[fillstyle=solid, fillcolor=white](2.25,2.75){0.25}
 \pscircle[fillstyle=solid, fillcolor=white](1.25,3.75){0.25}

\end{pspicture}
\begin{pspicture}(-1.4,0)(3.9,4)
\psset{unit=0.9cm}
\psline[linecolor=black](1.25,0.16)(1.25,1.75)
\psline[linecolor=black](1.25,1.75)(0.25,2.75)
\psline[linecolor=black](1.25,1.75)(2.25,2.75)
\psline[linecolor=black](0.25,2.75)(1.25,3.75)
\psline[linecolor=black](2.25,2.75)(1.25,3.75)
\psline[linecolor=black](0.25,2.75)(-1.16,2.75)
\psline[linecolor=black](2.25,2.75)(3.61,2.75)

 \pscircle[fillstyle=solid, fillcolor=white](1.25,0.34){0.25}
 \pscircle[fillstyle=solid, fillcolor=white](1.25,1.75){0.25}
 \pscircle[fillstyle=solid, fillcolor=white](-1.16,2.75){0.25}
 \pscircle[fillstyle=solid, fillcolor=white](3.61,2.75){0.25}
  \pscircle[fillstyle=solid, fillcolor=white](0.25,2.75){0.25}
 \pscircle[fillstyle=solid, fillcolor=white](2.25,2.75){0.25}
 \pscircle[fillstyle=solid, fillcolor=white](1.25,3.75){0.25}
 
\end{pspicture}
\begin{pspicture}(2.5,4)
\psset{unit=0.9cm}
\psline[linecolor=black](0.25,0.75)(1.25,1.75)
\psline[linecolor=black](2.25,0.75)(1.25,1.75)
\psline[linecolor=black](1.25,1.75)(0.25,2.75)
\psline[linecolor=black](1.25,1.75)(2.25,2.75)
\psline[linecolor=black](0.25,2.75)(1.25,3.75)
\psline[linecolor=black](2.25,2.75)(1.25,3.75)
\psline[linecolor=black](2.25,2.75)(3.61,2.75)

 \pscircle[fillstyle=solid, fillcolor=white](0.25,0.75){0.25}
 \pscircle[fillstyle=solid, fillcolor=white](2.25,0.75){0.25}
 \pscircle[fillstyle=solid, fillcolor=white](1.25,1.75){0.25}
 \pscircle[fillstyle=solid, fillcolor=white](3.61,2.75){0.25}
  \pscircle[fillstyle=solid, fillcolor=white](0.25,2.75){0.25}
 \pscircle[fillstyle=solid, fillcolor=white](2.25,2.75){0.25}
 \pscircle[fillstyle=solid, fillcolor=white](1.25,3.75){0.25}
 
\end{pspicture}
\caption{Hence, if the tree induced by the bridges has three leaves, the graph cannot be connected with two colours, as is the case with the three graphs depicted here.}
\end{subfigure}
\caption{}
\label{evencycles}
\end{figure}

 \item[$\Leftarrow$:] Assume that there exists an edge $e$ such that $G+e$ is 2-edge-connected (if $G$ is already 2-edge-connected, any edge $e$ can be used in the rest of the proof, even if $e$ is already in $G$). Let us consider an ear-decomposition $C,P_1,P_2,\dots,P_k$ of $G+e$ such that the cycle
 $C$ uses the edge $e$.
 
 We now build by induction on $i\in [0,k]$ an orientation of $G+e$ such that for every pair of vertices $\{u,v\}$ of $C\cup P_1\cup \dots\cup P_i$, there exists a directed walk from $u$ to $v$ or from $v$ to $u$ that does not use $e$.
 \begin{itemize}
 \item[$\bullet$] We orient the edges of $C$ in such a way that $C$ becomes a directed cycle. Hence, $C$ satisfies the induction hypothesis.
 \item[$\bullet$] For $i\in [1,k]$, let $a$ and $b$ be the extremities of $P_i$ such that there exists a directed path from $a$ to $b$ in $C\cup P_1\cup \dots \cup P_{i-1}$ that does not use $e$. We then orient the edges of $P_i$ as a directed path from $b$ to $a$. The vertex sets of $C\cup P_1\cup\dots\cup P_{i-1}$ and $P_i$ both satisfy the induction hypothesis but it remains to prove that their union also does. Let $u\in P_i$ and $v\in C\cup P_1\cup\dots\cup P_{i-1}$. If there exists a directed walk $W$ from $a$ to $v$ that does not use $e$, then one can use $P_i$ from $u$ to $a$ and $W$ from $a$ to $v$. Otherwise, we know that there exists a directed walk from $v$ to $a$ that does not use $e$. By induction, we know that there also exists one from $a$ to $b$, and we can use $P_i$ to go from $b$ to $u$. The induction hypothesis stands.
  \end{itemize}

 Let $(V_1, V_2)$ be a bi-partition of the vertices of $G$. We now replace all the arcs going from a vertex of $V_1$ to a vertex of $V_2$ by a red edge and all the arcs going from a vertex of $V_2$ to a vertex of $V_1$ by a blue edge. Every directed walk is thus replaced by a properly coloured walk and this 2-edge-colouring connects $G$.\qedhere
 \end{itemize}
 
\end{proof}

 This criteria comes down to checking whether the tree induced by the bridges of the graph is a path, which can  be done in linear time via a depth-first search.

 Note that in the case of bipartite graphs, if two vertices can be connected by a properly coloured walk, then they can also be connected by a properly coloured path. However, as illustrated in Figure \ref{mainexample}, the presence of odd cycles can allow for much more complicated connecting walks.
 
 As illustrated in Figure \ref{oddcyclea}, odd cycles can make it possible to connect graphs that are arbitrarily far from being 2-edge-connected. On the other hand, the graph depicted in Figure \ref{oddcycleb} can be made 2-edge-connected by adding the edge $uv$ and still, no colouring of its edges can make it properly connected. For example, the edge-colouring depicted in Figure \ref{oddcycleb} does not connect the vertices $v$ and $w$. Thus, Theorem \ref{thmbipartite} does not extend to all graphs.

 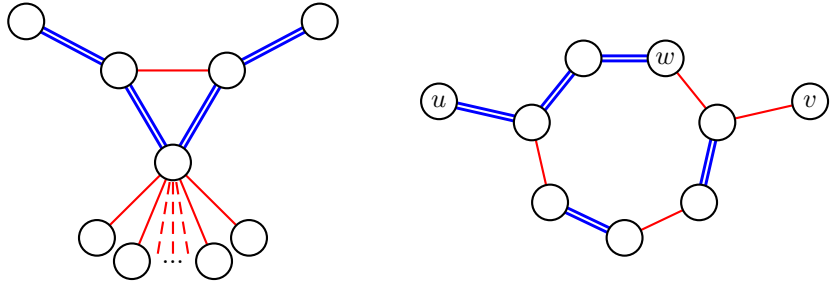
\begin{figure}[!h]
\begin{subfigure}{0.48\linewidth}
\[\begin{pspicture}(2.5,3.7)
\psline[linecolor=red](0.25,0.25)(1.25,1.25)
\psline[linecolor=red](2.25,0.25)(1.25,1.25)
\psline[linecolor=red, linestyle=dashed](1.25,0.04)(1.25,1.25)
\psline[linecolor=red, linestyle=dashed](1.05,0.04)(1.25,1.25)
\psline[linecolor=red, linestyle=dashed](1.45,0.04)(1.25,1.25)
\psline[linecolor=red](0.71,-0.06)(1.25,1.25)
\psline[linecolor=red](1.79,-0.06)(1.25,1.25)
\psline[linecolor=blue](1.25,1.25)(0.54,2.47)
\colorprint{\psline[linecolor=blue, linewidth=3pt](1.25,1.25)(0.54,2.47)
\psline[linecolor=white](1.25,1.25)(0.54,2.47)}
\psline[linecolor=blue](1.25,1.25)(1.96,2.47)
\colorprint{\psline[linecolor=blue, linewidth=3pt](1.25,1.25)(1.96,2.47)
\psline[linecolor=white](1.25,1.25)(1.96,2.47)}
\psline[linecolor=red](0.54,2.47)(1.96,2.47)

\psline[linecolor=blue](-0.68,3.12)(0.54,2.47)
\colorprint{\psline[linecolor=blue, linewidth=3pt](-0.68,3.12)(0.54,2.47)
\psline[linecolor=white](-0.68,3.12)(0.54,2.47)}
\psline[linecolor=blue](3.18,3.12)(1.96,2.47)
\colorprint{\psline[linecolor=blue, linewidth=3pt](3.18,3.12)(1.96,2.47)
\psline[linecolor=white](3.18,3.12)(1.96,2.47)}

 \pscircle[fillstyle=solid, fillcolor=white](0.25,0.25){0.25}
 \pscircle[fillstyle=solid, fillcolor=white](2.25,0.25){0.25}
 \pscircle[fillstyle=solid, fillcolor=white](1.25,1.25){0.25}
 \pscircle[fillstyle=solid, fillcolor=white](0.71,-0.06){0.25}
 \pscircle[fillstyle=solid, fillcolor=white](1.79,-0.06){0.25}
\pscircle[fillstyle=solid, fillcolor=white](0.54,2.47){0.25}
\pscircle[fillstyle=solid, fillcolor=white](1.96,2.47){0.25}
\pscircle[fillstyle=solid, fillcolor=white](3.18,3.12){0.25}
\pscircle[fillstyle=solid, fillcolor=white](-0.68,3.12){0.25}
\rput(1.25,-0.06){...}

\end{pspicture}\]
\caption{A properly connected 2-edge-coloured graph with an arbitrary number of leaves.}
\label{oddcyclea}
\end{subfigure}
\begin{subfigure}{0.48\linewidth}
\[\begin{pspicture}(2.5,3.7)

\psline[linecolor=red](1.25,0.25)(2.23,0.72)
\psline[linecolor=blue](2.23,0.72)(2.47,1.78)
\colorprint{\psline[linecolor=blue, linewidth=3pt](2.23,0.72)(2.47,1.78)
\psline[linecolor=white](2.23,0.72)(2.47,1.78)}
\psline[linecolor=red](2.47,1.78)(1.79,2.63)
\psline[linecolor=blue](1.79,2.63)(0.71,2.63)
\colorprint{\psline[linecolor=blue, linewidth=3pt](1.79,2.63)(0.71,2.63)
\psline[linecolor=white](1.79,2.63)(0.71,2.63)}
\psline[linecolor=blue](0.71,2.63)(0.03,1.78)
\colorprint{\psline[linecolor=blue, linewidth=3pt](0.71,2.63)(0.03,1.78)
\psline[linecolor=white](0.71,2.63)(0.03,1.78)}
\psline[linecolor=red](0.03,1.78)(0.27,0.72)
\psline[linecolor=blue](0.27,0.72)(1.25,0.25)
\colorprint{\psline[linecolor=blue, linewidth=3pt](0.27,0.72)(1.25,0.25)
\psline[linecolor=white](0.27,0.72)(1.25,0.25)}

\psline[linecolor=red](2.47,1.78)(3.69,2.06)
\psline[linecolor=blue](0.03,1.78)(-1.19,2.06)
\colorprint{\psline[linecolor=blue, linewidth=3pt](0.03,1.78)(-1.19,2.06)
\psline[linecolor=white](0.03,1.78)(-1.19,2.06)}

\pscircle[fillstyle=solid, fillcolor=white](1.25,0.25){0.25}
\pscircle[fillstyle=solid, fillcolor=white](2.23,0.72){0.25}
\pscircle[fillstyle=solid, fillcolor=white](2.47,1.78){0.25}
\pscircle[fillstyle=solid, fillcolor=white](1.79,2.63){0.25}
\pscircle[fillstyle=solid, fillcolor=white](0.71,2.63){0.25}
\pscircle[fillstyle=solid, fillcolor=white](0.03,1.78){0.25}
\pscircle[fillstyle=solid, fillcolor=white](0.27,0.72){0.25}

\pscircle[fillstyle=solid, fillcolor=white](3.69,2.06){0.25}
\pscircle[fillstyle=solid, fillcolor=white](-1.19,2.06){0.25}

\rput(-1.19,2.06){$u$}
\rput(3.69,2.06){$v$}
\rput(1.79,2.63){$w$}

\end{pspicture}\]
\caption{No colouring of the edges can connect this graph.}
\label{oddcycleb}
\end{subfigure}
\caption{Counter-examples to the generalization of Theorem \ref{thmbipartite} to non-bipartite graphs.}
\end{figure}

In the next subsections, we study the impact of odd cycles on the connectability of a graph.

\subsection{Stubborn edges and pivots}

We define the \textbf{stubborn edges} of a graph as the edges that belong to every closed walk of odd length. We denote by $\mathscr S$ the set of stubborn edges of a graph. Note that one can check whether a given edge $e$ is stubborn in time $O(n+m)$ by checking whether $G-e$ is bipartite. In the case of a bipartite graph, every closed walk is even and every edge is therefore stubborn.

\begin{prop}\label{propevencycle}
 Let $G$ be a non-bipartite graph. Then, no stubborn edge can appear exactly once in a closed walk of even length in $G$.
\end{prop}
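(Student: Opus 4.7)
The plan is to argue by contradiction: suppose $e = uv$ is stubborn but appears exactly once in some closed walk $W$ of even length, based at a vertex $x$. The first step, directly from the definition of stubbornness, is to note that every odd closed walk of $G$ contains $e$, so $G-e$ has no odd closed walk and is therefore bipartite.

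The next step is to pin down the structure of $G$ around $e$. If $e$ were a bridge of $G$, then $G-e$ would split into two connected components, one containing $u$ and one containing $v$; choosing a bipartition of each component with $u$ and $v$ placed in opposite classes would yield a bipartition of all of $G$, contradicting its non-bipartiteness. Hence $G-e$ is still connected and has a well-defined bipartition $(A,B)$. A similar argument then shows that $u$ and $v$ must lie in the \emph{same} class of $(A,B)$: otherwise the bipartition would be consistent with $e$ and would extend to a bipartition of $G$, again impossible.

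Finally, decompose $W = W_1 \cdot e \cdot W_2$, where $W_1$ runs from $x$ to $u$ and $W_2$ from $v$ back to $x$. Since $e$ appears only once in $W$, neither $W_1$ nor $W_2$ uses $e$, so both are walks in the bipartite graph $G-e$. In a bipartite graph the parity of the length of a walk between two vertices is determined solely by whether its endpoints share a class. Because $u$ and $v$ lie in the same class, $|W_1|$ and $|W_2|$ must have the same parity, so $|W| = |W_1| + 1 + |W_2|$ is odd, contradicting the assumption that $|W|$ is even.

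The bulk of the argument is routine parity bookkeeping once $G-e$ is recognised as bipartite; the one place requiring care is the two-step reduction that uses the non-bipartiteness of $G$ first to rule out $e$ being a bridge and then to force $u$ and $v$ into the same class of the bipartition of $G-e$. After that, the parity count delivers the contradiction immediately.
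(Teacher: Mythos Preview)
Your proof is correct, but it takes a noticeably different route from the paper's. The paper argues in three lines by direct construction: from the even closed walk $C$ containing $uv$ exactly once, the subwalk $C-uv$ is an odd $u$--$v$ walk avoiding $uv$; then take any odd closed walk in $G$ (one exists since $G$ is non-bipartite) and replace each occurrence of $uv$ by $C-uv$. Each substitution changes the length by an even amount, so the result is still an odd closed walk, and it now avoids $uv$, contradicting stubbornness.

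Your argument instead goes through the global structure of $G-e$: you establish that $G-e$ is bipartite, rule out $e$ being a bridge, force $u$ and $v$ into the same bipartition class, and then read off the parity contradiction from the decomposition $W = W_1\cdot e\cdot W_2$. This is longer and requires the two auxiliary reductions you flag, but it yields more structural information (that $u$ and $v$ lie in the same class of the bipartition of $G-e$) as a by-product. The paper's substitution trick is slicker and avoids any appeal to connectedness; your approach tacitly assumes $G$ is connected when you say a bridge deletion gives exactly two components, though this is harmless in context and easily patched by restricting to the component containing $e$.
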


\begin{proof} If an edge $uv$ appears exactly once in an even closed walk $C$, then $C-uv$ is a walk of odd length between $u$ and $v$ that does not use the edge $uv$. Consider now an odd closed walk in $G$ and replace every occurrence of $uv$ by $C-uv$. The resulting walk is still closed, odd and does not use $uv$ which contradicts the stubbornness of $uv$.
\end{proof}

Given a 2-edge-coloured graph, we call a vertex $u$ a \textbf{pivot} if and only if there exists an odd properly coloured closed walk $C$ starting and ending in $u$. Note that $C$ can repeat vertices and edges. Since $C$ is odd and properly coloured, its first and last edges both have the same colour and $C$ cannot be concatenated with itself to yield a new properly coloured walk, unlike in the case of non-edge-coloured graphs.

\vspace{0.35cm}
The following important properties hold.

\begin{prop}
\label{samedirection}
 Let $G$ be a connected graph and $uv\in E(G)$. The edge $uv$ is a stubborn edge if and only if there is no 2-edge-colouring of $G$ such that the edge $uv$ can be used in both directions (to go from $u$ to $v$ and to go from $v$ to $u$) in the same properly coloured walk.
\end{prop}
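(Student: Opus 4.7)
The plan is to prove both implications separately, crucially exploiting that only two colours are available.

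For $(\Rightarrow)$, I argue by contradiction: suppose $uv$ is stubborn and some 2-edge-colouring $c$ admits a properly coloured walk $W$ using $uv$ in both directions. Among the occurrences of $uv$ in $W$ there must be two consecutive ones pointed in opposite directions. The subwalk $W'$ of $W$ strictly between these two occurrences contains no further copy of $uv$ and is a closed walk based at $v$ (or at $u$, depending on the pair of directions chosen). Because $W$ is properly coloured, the first and last edges of $W'$, each adjacent in $W$ to one of those two uses of $uv$, both carry the unique colour different from $c(uv)$. Since $W'$ is properly coloured and begins and ends with the same colour, $|W'|$ is odd. This produces an odd closed walk that avoids $uv$, contradicting stubbornness.

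For $(\Leftarrow)$, I use the observation already noted in the paper that $uv$ is stubborn if and only if $G-uv$ is bipartite. Hence, if $uv$ is not stubborn, $G-uv$ contains an odd cycle $C$. Up to exchanging $u$ and $v$, I may assume that $v$ lies in the connected component of $G-uv$ containing $C$ (otherwise both sides of the bridge $uv$ would be bipartite and so would $G-uv$, contradicting the choice of $C$). Pick a shortest path $P$ from $v$ to $V(C)$ in $G-uv$; by minimality, $P$ meets $V(C)$ only at its endpoint, so $P$ and $C$ share no edges. Let $W_v$ be the closed lollipop walk at $v$ that traces $P$, then goes once around $C$, then retraces $P$ in reverse; its length $2|P|+|C|$ is odd and it uses only edges of $G-uv$.

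The main step is to equip $W_v$ with a proper 2-edge-colouring. I assign each edge the colour determined by the parity of its position in $W_v$: edges of $C$ occur once and edges of $P$ occur twice, but since $|C|$ is odd those two positions have the same parity, so the assignment is well defined. Consecutive positions along $W_v$ have opposite parities, so $W_v$ is properly coloured; and since $|W_v|$ is odd, its first and last edges share a common colour, say $1$. I then set $c(uv)=2$ and colour all remaining edges of $G$ arbitrarily. The walk $u,v,W_v,v,u$ is then properly coloured and uses $uv$ in both directions, as required.

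The main obstacle is the construction of a closed walk based at $v$ that can be simultaneously properly 2-coloured. The lollipop built from a shortest path to an odd cycle resolves this cleanly: edge-disjointness of $P$ and $C$ prevents assignment conflicts, and the oddness of $|C|$ forces the two appearances of each edge of $P$ to land at positions of matching parity.
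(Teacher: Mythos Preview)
Your proof is correct and follows essentially the same approach as the paper's: the $(\Rightarrow)$ direction is identical (extract an odd closed walk avoiding $uv$ from the segment between two consecutive opposite-direction uses), and the $(\Leftarrow)$ direction builds the same lollipop walk $uv + P + C + P^{-1} + vu$ and colours it alternately. Your version is in fact slightly more careful than the paper's, since you take $C$ to be an elementary odd cycle and explicitly verify that the two occurrences of each edge of $P$ receive the same parity colour, whereas the paper works with an arbitrary odd closed walk $C$ and glosses over this consistency check.
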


\begin{proof}
 $\Rightarrow$: let $G$ be 2-edge-coloured and let $W$ be a properly coloured walk that uses $uv$ in both directions. Consider two consecutive occurrences of $uv$ in $W$ with different directions. Because the walk has to alternate colours, there must be an odd number of edges between two occurrences of an edge. Because the edge is used in two different directions, the subwalk of $W$ between the two occurrences is closed, must be odd and cannot use $uv$. Hence, $uv$ is not stubborn.
 
 $\Leftarrow$: let $uv$ be a non-stubborn edge and let $C$ be an odd closed walk that does not use $uv$. Let $P$ be a shortest path between $\{u,v\}$ and $C$. Therefore, $P$ uses neither the edge $uv$ nor an edge of $C$. By symmetry, say that $u$ is the endpoint of $P$ in $\{u,v\}$ and let $w$ be its endpoint in $C$ (it may happen that $u=w$ and $v\in C$ but this does not invalidate the rest of the proof). We give colour 1 to the edge $uv$ and we colour alternatively the path $P$ starting at $u$ with colour 2. We then colour alternatively the edges of $C$ starting and ending on $w$ in such a way that the edges adjacent to $w$ in $C$ are coloured 2 if $P$ has even length and 1 if $P$ has odd length. By concatenating $vu$, $P$ from $u$ to $w$, $C$, $P$ from $w$ to $u$ and $uv$, we form a properly coloured walk that uses $uv$ in both directions.
\end{proof}

\begin{prop}\label{exactlyonce}
 A properly coloured odd closed walk $C$ in a 2-edge-coloured graph uses each stubborn edge exactly once.
 \end{prop}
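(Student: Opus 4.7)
The plan is to argue separately that each stubborn edge appears at least once in $C$ and at most once in $C$. The first direction is immediate: by the very definition of stubborn edges, every stubborn edge is used by every closed walk of odd length, and $C$ is one such walk, so each stubborn edge appears at least once in $C$. Note also that the existence of the odd closed walk $C$ ensures $G$ is non-bipartite, so Proposition \ref{propevencycle} is applicable.

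For the upper bound, I would argue by contradiction. Suppose a stubborn edge $e=uv$ appears at least twice in $C$. First I would invoke Proposition \ref{samedirection} to conclude that all occurrences of $e$ in $C$ must traverse the edge in the same direction: otherwise the walk $C$ itself would witness $uv$ being used in both directions in a properly coloured walk, contradicting the stubbornness of $uv$. So WLOG every occurrence of $e$ in $C$ goes from $u$ to $v$.

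Now I would take two cyclically consecutive occurrences of $e$ in $C$ and consider the sub-walk $W$ strictly between them; by construction $W$ does not use the edge $uv$, and it starts at $v$ and ends at $u$. The key computation is a parity observation on the colouring: since $C$ is properly coloured with only two colours, edges at positions of the same parity receive the same colour, and edges at positions of opposite parity receive different colours. The two chosen copies of $e$ carry the same colour, so they sit at positions of the same parity in $C$, which forces the number of edges of $W$ to be odd. Concatenating $W$ with the edge $uv$ (traversed from $u$ to $v$) then yields a closed walk of even length in which the edge $uv$ occurs exactly once. This directly contradicts Proposition \ref{propevencycle}, completing the proof.

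The only subtle step is the parity computation that gives $|W|$ odd; everything else is bookkeeping. It is worth double-checking the edge case where the two consecutive occurrences of $e$ coincide as a single occurrence traversed twice (i.e.\ when $C$ has length, say, exactly one more than the gap between occurrences in the cyclic order), but the same counting argument goes through because we only use that both copies of $e$ share the same colour in the cyclic sequence.
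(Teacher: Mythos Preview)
Your proof is correct and follows essentially the same route as the paper's: both invoke Proposition~\ref{samedirection} to force all occurrences of $uv$ to be in the same direction, and then combine Proposition~\ref{propevencycle} with a colour-alternation/parity count on the subwalk between two consecutive occurrences of $uv$. The only difference is the order of these last two ingredients---the paper applies Proposition~\ref{propevencycle} first (deducing the subwalk $C'=uv\cdot W$ is odd) and finishes with a colour conflict in $C$, whereas you compute the parity of $W$ directly from the 2-colouring and finish by invoking Proposition~\ref{propevencycle}---which is purely cosmetic.
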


 \begin{proof}
 Let $C$ be a properly coloured closed walk. By the definition of stubborn edges, $C$ uses each stubborn edge at least once. It remains to prove that it uses them at most once. Suppose that $uv$ is a stubborn edge that appears at least twice in $C$. By Proposition \ref{samedirection}, $uv$ is used in the same direction each time, say from $u$ to $v$. Let $C'$ be the subwalk of $C$ starting with the first occurrence of $uv$ and ending just before the second. Hence, $C'$ is properly coloured (since it is a subwalk of $C$), is closed (starts and ends on $u$) and uses $uv$ exactly once. By Proposition \ref{propevencycle}, $C'$ cannot be even. Hence, its first and last edge both have the same colour $c(uv)$. This leads to a contradiction since $C$ uses the last edge of $C'$ and $uv$ consecutively.
 \end{proof}

    \subsection{$\mathscr S$-free components}
  
Let $G$ be a graph and let $G\setminus\mathscr S$ be the graph obtained from $G$ by removing all the stubborn edges. By an \textbf{$\mathscr S$-free component} of $G$, we mean a connected component of $G\setminus\mathscr S$. Note that if $\mathscr S$ is non-empty, then all $\mathscr S$-free components are bipartite.

\begin{prop}
 Let $G=(V,E)$ be a connected non-bipartite graph and let $\mathcal K$ be an $\mathscr S$-free component of $G$. Then, either $\mathcal K=V(G)$ or there are exactly two edges connecting vertices of $\mathcal K$ to vertices of $V\setminus\mathcal K$.
\end{prop}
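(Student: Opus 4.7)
The plan is to show that the number $k$ of edges in $\partial \mathcal K$ equals exactly $2$, noting first that every such edge is stubborn because $\mathcal K$ is a connected component of $G \setminus \mathscr S$ (any edge leaving $\mathcal K$ must therefore belong to $\mathscr S$). If $\mathcal K = V(G)$ there is nothing to prove; otherwise the connectedness of $G$ gives $k \geq 1$, and I would establish $k \geq 2$ and $k \leq 2$ separately.

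For the lower bound I would first show that every connected component $D$ of $G - \mathcal K$ must be joined to $\mathcal K$ by at least two cut edges. If some $D$ were attached by a single cut edge $e$, then $e$ would be a bridge of $G$. No simple cycle can use a bridge, so since $G$ is non-bipartite, any odd cycle of $G$ would be an odd closed walk avoiding the stubborn edge $e$, contradicting stubbornness. Summing over the components of $G-\mathcal K$ then yields $k \geq 2$.

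For the upper bound, suppose for contradiction that $k \geq 3$ and fix three distinct cut edges $e_1 = x_1 y_1$, $e_2 = x_2 y_2$, $e_3 = x_3 y_3$, where $x_i \in \mathcal K$ and $y_i \in V \setminus \mathcal K$. Let $D$ be the component of $G - \mathcal K$ containing $y_1$; by the lower-bound step there exists another cut edge $e_a \neq e_1$ whose outer endpoint $y_a$ also lies in $D$. Since $\{e_2, e_3\} \setminus \{e_a\}$ is non-empty, we can pick an edge $e^* \in \{e_2, e_3\}$ with $e^* \neq e_a$. Concatenating $e_1$, a walk from $y_1$ to $y_a$ inside $D$, the edge $e_a$, and a walk from $x_a$ back to $x_1$ inside $\mathcal K$ produces a closed walk $W$ in which the only cut edges used are $e_1$ and $e_a$, each appearing exactly once.

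The argument is then finished by a parity dichotomy, which I expect to be the cleanest step but requires both halves of the setup to land. If $W$ has odd length, then $W$ is an odd closed walk avoiding the stubborn edge $e^*$, contradicting the definition of a stubborn edge. If $W$ has even length, then $W$ is an even closed walk in which the stubborn edge $e_1$ appears exactly once, contradicting Proposition \ref{propevencycle}. In either case we reach a contradiction, so $k \leq 2$, and combined with the lower bound we conclude $k = 2$. The main obstacle is guaranteeing that two cut endpoints live in a common component of $G - \mathcal K$, and this is precisely what the lower-bound step provides, so the two halves of the proof dovetail.
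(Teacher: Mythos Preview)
Your proof is correct and follows the same strategy as the paper's: the cut edges of $\mathcal K$ are stubborn, at least two are needed (you argue per component of $G-\mathcal K$, the paper via the parity of boundary crossings of an elementary odd cycle), and if a third existed one could build a closed walk using exactly two of them and derive a contradiction from Proposition~\ref{propevencycle}. The only cosmetic difference is that the paper extracts the outside portion of this walk as a subwalk of a fixed odd cycle rather than from connectivity of a component of $G-\mathcal K$, so the resulting walk is automatically even and the paper does not need your explicit odd/even dichotomy.
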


\begin{proof}
 In this proof, a \textbf{$\mathcal K$-bridge} is an edge connecting a vertex of $\mathcal K$ to vertex of $V\setminus\mathcal K$. Let us first note that all $\mathcal K$-bridges are stubborn, by definition of $\mathscr S$-free components.

 Let us assume that $\mathcal K\neq V(G)$. Since $G$ is non-bipartite, it contains an elementary odd cycle and every elementary odd cycle has to use all the $\mathcal K$-bridges. Since an odd closed walk can only use a stubborn edge once (by Proposition \ref{exactlyonce}), there must be at least two $\mathcal K$-bridges in the graph. All we have left to prove is that $G$ cannot have strictly more than two $\mathcal K$-bridges. 
 
 Let $C$ be an odd closed walk, let $u\notin \mathcal K$ be a vertex of $C$, let $e_1=v_1u_1$ and $e_2=u_2v_2$ be the $\mathcal K$-bridges right before and after $u$ in $C$ (with $u_1$ and $u_2\notin \mathcal K$) and let us assume that there is another $\mathcal K$-bridge $e_3$ in $G$. Thus, $C$ defines a walk $W_1$ from $u_1$ to $u_2$ that does not use any bridge.  By connectivity of $\mathcal K$ in $G\setminus \mathscr S$, there exists a walk $W_2$ from $v_1$ to $v_2$ that does not use any bridge either. The concatenation of $W_1$, $e_1$, $W_2$ and $e_2$ defines a closed walk $W'$ that must be even since it does not contain $e_3$, but $W'$ uses $e_1$ exactly once, which contradicts Proposition \ref{propevencycle}.
\end{proof}

\begin{prop} If a graph $G$ has several stubborn edges, none of them can have its two endpoints in the same $\mathscr S$-free component.\label{propsevse}
 \end{prop}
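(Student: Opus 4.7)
The plan is to argue by contradiction: assume there is a stubborn edge $e_1=uv$ with both endpoints in the same $\mathscr S$-free component $\mathcal K$, and also at least one other stubborn edge $e_2\neq e_1$ in $G$. The key object to exploit is a walk $P$ from $u$ to $v$ inside $\mathcal K$, which exists because $\mathcal K$ is a connected component of $G\setminus \mathscr S$. Crucially, $P$ uses no stubborn edges at all (in particular not $e_1$ and not $e_2$), so the closed walk $W := P\cdot e_1$ is a closed walk avoiding $e_2$ and using $e_1$ exactly once.

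Now I would split on parity. If the length of $P$ is even, then $W$ is an odd closed walk avoiding the stubborn edge $e_2$, which immediately contradicts the definition of stubbornness of $e_2$. If instead $|P|$ is odd, then $W$ is an even closed walk containing the stubborn edge $e_1$ exactly once, which contradicts Proposition \ref{propevencycle}. Either way we reach a contradiction, which establishes the proposition.

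The only minor subtlety is the applicability of Proposition \ref{propevencycle}, which assumes $G$ is non-bipartite. I would dispose of the bipartite case at the start: if $G$ is bipartite there are no odd closed walks at all, so every edge is vacuously stubborn, $G\setminus\mathscr S$ has no edges, and each $\mathscr S$-free component is a single vertex, so a stubborn edge cannot possibly have both endpoints in one component. Once bipartite $G$ is ruled out, the case analysis above applies.

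I do not expect any serious obstacle; the hardest part is simply recognizing that the walk $P$ from $u$ to $v$ inside $\mathcal K$ gives us a closed walk whose interaction with the two possible parities triggers exactly the two structural restrictions we already have on stubborn edges (the definition itself for the odd case, Proposition \ref{propevencycle} for the even case).
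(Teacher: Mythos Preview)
Your proposal is correct and follows essentially the same argument as the paper: take a stubborn-edge-free walk $P$ from $u$ to $v$ inside $\mathcal K$, close it up with $e_1$, and observe that the resulting closed walk can be neither odd (it misses $e_2$) nor even (Proposition~\ref{propevencycle}). Your explicit treatment of the bipartite case is a small bonus of care, since Proposition~\ref{propevencycle} is stated only for non-bipartite $G$ and the paper's proof tacitly relies on it without separating that case.
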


\begin{proof}
 Let $uv$ be a stubborn edge such that $u$ and $v$ belong to a same $\mathscr S$-free component. Then there exists a walk from $u$ to $v$ that uses no stubborn edge. This walk forms a closed walk with $uv$ that only contains one stubborn edge and can therefore not be odd (since the graph contains several stubborn edges) but cannot be even by Proposition \ref{propevencycle}, which is a contradiction.
\end{proof}

Note that a consequence of Proposition \ref{propsevse} is that if $G$ has two stubborn edges or more, $G\setminus\mathscr S$ cannot be connected. 

Putting everything together, we have the following:

\begin{theorem}\ 
\begin{itemize}
 \item A connected graph with no or only one stubborn edge consists of exactly one $\mathscr S$-free component.
 \item A connected graph with $k\geqslant 2$ stubborn edges consists of $k$ $\mathscr S$-free components $C_1,\dots,C_k$ and we can label the stubborn edges $e_1,\dots,e_k$ such that for all $i$, $e_i$ connects a vertex of $C_i$ to a vertex of $C_{(i+1)\!\!\mod k}$.
\end{itemize}\label{thmflexcompo}
\end{theorem}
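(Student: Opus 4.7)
The plan is to assemble the two preceding propositions, restricting to the non-bipartite case (in the bipartite case every edge is stubborn, so $G\setminus\mathscr{S}$ is an independent set and the claim is either vacuous or degenerate). For the first bullet, if $\mathscr{S}=\emptyset$ then $G\setminus\mathscr{S}=G$ is connected by hypothesis. If $\mathscr{S}=\{uv\}$, I would show that $uv$ cannot be a bridge: otherwise $G-uv$ splits into two components $A\ni u$ and $B\ni v$, and since $G$ is non-bipartite it contains an elementary odd cycle, which cannot traverse a bridge and so lies entirely in one of $A,B$. That cycle is an odd closed walk avoiding $uv$, contradicting the stubbornness of $uv$.

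For the second bullet, assume $G$ has $k\geqslant 2$ stubborn edges. By Proposition \ref{propsevse} every stubborn edge has its two endpoints in distinct $\mathscr{S}$-free components, so in particular no $\mathscr{S}$-free component $\mathcal{K}$ can equal $V(G)$ (otherwise both endpoints of every stubborn edge would lie in $\mathcal{K}$). Hence the preceding proposition applies to each $\mathscr{S}$-free component and forces exactly two stubborn edges to be incident to it. Double-counting (component, incident stubborn-edge) pairs then gives $2c=2k$, where $c$ is the number of $\mathscr{S}$-free components, so $c=k$.

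To obtain the cyclic labelling, I would introduce the auxiliary multigraph $H$ whose vertices are the $\mathscr{S}$-free components of $G$ and whose edges are the stubborn edges of $G$ with the natural incidences (parallel edges are permitted, as happens when $k=2$). By the previous step $H$ is $2$-regular with $k$ vertices and $k$ edges, and since $G$ is connected so is $H$; a connected $2$-regular multigraph is a single cycle, and traversing it yields a labelling $C_1,\dots,C_k$ of the components and $e_1,\dots,e_k$ of the stubborn edges with $e_i$ joining $C_i$ to $C_{(i+1)\bmod k}$, as required. The main obstacle is establishing the global $2$-regularity of $H$, that is, verifying that the preceding "exactly two $\mathcal{K}$-bridges" proposition applies to every $\mathscr{S}$-free component; this reduces to ruling out $\mathcal{K}=V(G)$, which is exactly what Proposition \ref{propsevse} supplies once $k\geqslant 2$, after which the remainder is a routine double-count and a standard observation about $2$-regular connected multigraphs.
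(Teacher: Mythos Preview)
Your proposal is correct and matches the paper's intent: the paper itself offers no explicit proof, merely writing ``Putting everything together, we have the following'' after the two propositions you invoke, so you are spelling out precisely the assembly the authors leave implicit. Your bridge argument for the single-stubborn-edge case, the double count $2c=2k$ using Proposition~\ref{propsevse} together with the ``exactly two $\mathcal K$-bridges'' proposition, and the auxiliary $2$-regular multigraph are exactly the natural way to make this rigorous, including the observation that the non-bipartite hypothesis is tacitly needed throughout.
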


A direct consequence of this theorem is that every pair of odd closed walks use all the stubborn edges in the same direction (up to the choice of the orientation of the walk) and in the same order (up to cyclic permutation).

\vspace{0.35cm}
We now prove a lemma that will be useful for the proof of Proposition \ref{pivotcomponent}. This lemma studies the number of pairs of consecutive edges of the same colour that a walk uses. For example, if a walk $W$ uses the edges $e_1, e_2, e_3, e_4, e_5$ and $e_6$ where $e_4$ and $e_5$ are red and the others are blue, then $W$ contains three pairs of consecutive edges of the same colour ($e_1e_2$, $e_2e_3$ and $e_4e_5$).

 \begin{lemma}\label{lemmaparity}
 Let $C_1$ and $C_2$ be two odd closed walks in a graph $G$, let $e_1=v_1u_1$ and $e_2=u_2v_2$ be two stubborn edges such that $C_1$ and $C_2$ define walks $W_1$ and $W_2$ between $u_1$ and $u_2$ that do not use $e_1$ and $e_2$. Let $c$ be an arbitrary 2-edge-colouring of $G$ and let $n_1$ and $n_2$ be the number of pairs of consecutive edges of same colour in the walks  $e_1W_1e_2$ and $e_1W_2e_2$ respectively. Then, $n_1$ and $n_2$ have same parity.
\end{lemma}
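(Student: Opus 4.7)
The plan is to reduce the claim about $n_1$ and $n_2$ to a statement about the parities of $|W_1|$ and $|W_2|$, and then to derive that parity equality from the defining property of stubborn edges.

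First, I would establish a general parity formula: for any 2-edge-coloured walk $P = f_1 f_2 \cdots f_\ell$, the number of same-colour consecutive pairs plus the number of colour changes equals $\ell - 1$, and because only two colours are used, the number of colour changes is even if and only if $c(f_1) = c(f_\ell)$. Hence the parity of the number of same-colour consecutive pairs in $P$ is determined by $\ell$ and by whether $c(f_1) = c(f_\ell)$. Applying this observation to the walks $e_1 W_i e_2$, which share the same first edge $e_1$ and last edge $e_2$ (so the indicator of $c(e_1)=c(e_2)$ does not depend on $i$) and which have length $|W_i|+2$, I obtain that $n_1 \equiv n_2 \pmod{2}$ if and only if $|W_1| \equiv |W_2| \pmod{2}$.

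To finish, I would show $|W_1| \equiv |W_2| \pmod{2}$ by concatenating $W_1$ with the reverse of $W_2$ to form a closed walk $W$ at $u_1$ of length $|W_1|+|W_2|$. By hypothesis, neither $W_1$ nor $W_2$ uses the stubborn edge $e_1$, so $W$ does not either. Since every odd closed walk must contain every stubborn edge, $W$ cannot be odd; therefore $|W_1|+|W_2|$ is even, as required.

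I expect the main conceptual step to be the parity reduction in the first paragraph, namely recognising that once the endpoint colours are fixed, $n_i \bmod 2$ depends only on $|W_i| \bmod 2$. Once this invariant is identified, the closed-walk argument in the second step follows almost verbatim from the definition of stubbornness and requires no further work.
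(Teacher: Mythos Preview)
Your proposal is correct and is essentially the same argument as the paper's. The paper phrases the first step as observing that the number $\bar{n_i}$ of colour changes in $e_1W_ie_2$ has parity determined solely by whether $c(e_1)=c(e_2)$, and that $n_i+\bar{n_i}$ equals the length of $e_1W_ie_2$ minus one; it then concatenates $W_1$ and $W_2$ into a closed walk avoiding a stubborn edge to force $|W_1|\equiv|W_2|\pmod 2$, exactly as you do.
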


\begin{proof}
Let $\bar{n_1}$ and $\bar{n_2}$ be the number of pairs of consecutive edges of different colours in $e_1W_1e_2$ and $e_1W_2e_2$. We notice that if $e_2$ and $e_1$ have the same colour, then $\bar{n_1}$ and $\bar{n_2}$ are even and that if they have different colours, $\bar{n_1}$ and $\bar{n_2}$ are odd. Hence, $\bar{n_1}$ and $\bar{n_2}$ have same parity.

 Also note that $W_1$ and $W_2$ form a closed walk that does not contain $e_1$ and $e_2$. By stubbornness of $e_1$ and $e_2$, this closed walk is even, which means that the lengths of $W_1$ and $W_2$ have same parity. In other words, $n_1+\bar{n_1}$ and $n_2+\bar{n_2}$ have same parity. The lemma follows.
\end{proof}

\begin{prop}\label{pivotcomponent}
All the pivots of a 2-edge-coloured graph $G$ are in the same $\mathscr S$-free component.
\end{prop}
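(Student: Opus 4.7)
The plan is to argue by contradiction using Lemma~\ref{lemmaparity}. First I dispose of the easy case: if $G$ has at most one stubborn edge, then by Theorem~\ref{thmflexcompo} the graph $G\setminus\mathscr S$ is already connected, so there is only one $\mathscr S$-free component and the statement is trivial. Hence I may assume $G$ has $k\geq 2$ stubborn edges and label the $\mathscr S$-free components $C_1,\dots,C_k$ and the stubborn edges $e_1,\dots,e_k$ as in Theorem~\ref{thmflexcompo}, with $e_i$ joining $C_i$ to $C_{(i+1)\bmod k}$.

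Suppose for contradiction that there are pivots $u\in C_1$ and $v\in C_j$ with $j\neq 1$, and let $W_u$ and $W_v$ be odd properly coloured closed walks through $u$ and $v$ respectively. By Proposition~\ref{exactlyonce} each of these walks uses every stubborn edge exactly once, and by the remark after Theorem~\ref{thmflexcompo} they traverse the stubborn edges in the same cyclic order up to reversal; reversing an odd properly coloured walk leaves it odd and properly coloured, so I can assume both walks traverse $e_1,\dots,e_k$ in this common cyclic order.

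The crucial step is to apply Lemma~\ref{lemmaparity} to the pair $(e_{j-1},e_j)$; write $e_{j-1}=v_1u_1$ and $e_j=u_2v_2$ with $u_1,u_2\in C_j$. In $W_u$, the portion between the unique occurrences of $e_{j-1}$ and $e_j$ is a genuine subwalk $W_1$ lying in $C_j$, and $e_{j-1}W_1e_j$ is properly coloured, giving $n_1=0$. In $W_v$, on the other hand, $v$ itself lies inside $C_j$, so $W_v$ starts at $v$, exits $C_j$ via $e_j$, tours the remaining components, re-enters $C_j$ via $e_{j-1}$, and returns to $v$. The corresponding walk $W_2$ from $u_1$ to $u_2$ inside $C_j$ is therefore obtained by concatenating the final piece of $W_v$ (from $u_1$ to $v$) with its initial piece (from $v$ to $u_2$). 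The only place where $e_{j-1}W_2e_j$ can fail to be properly coloured is the junction at $v$, where the original last and first edges of $W_v$ meet; because an odd properly coloured walk has its first and last edges of the same colour, this junction contributes exactly one same-colour pair, so $n_2=1$. Since $0\not\equiv 1\pmod 2$, this contradicts Lemma~\ref{lemmaparity} and the proposition follows.

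The main subtlety I expect is the bookkeeping around the junction at $v$: one has to see that in $e_{j-1}W_2e_j$ the only defect with respect to proper colouring is precisely the meeting of the original first and last edges of $W_v$, and invoke the fact that in an odd properly coloured walk those two edges share a colour. A minor point is the orientation ambiguity between $W_u$ and $W_v$ around the cyclic arrangement of stubborn edges, which is disposed of by reversing one of the walks if needed.
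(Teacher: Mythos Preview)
Your proof is correct and uses essentially the same mechanism as the paper: apply Lemma~\ref{lemmaparity} to the two stubborn edges bounding one $\mathscr S$-free component and count same-colour defects. The only cosmetic difference is which component you pick: the paper works in the component $\mathcal K_p$ of the \emph{first} pivot, obtaining $n_1=1$ there and then arguing directly that $n_2$ must be odd, hence equal to $1$, which forces $p_2\in\mathcal K_p$; you instead work in the component $C_j$ of the \emph{second} pivot and derive the contradiction $n_1=0$, $n_2=1$. Your contradiction is arguably slightly cleaner since both defect counts are computed explicitly rather than bounded, but the two arguments are mirror images of each other.
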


\begin{proof}
If $G\setminus\mathscr S$ is connected, the proposition immediately follows. Let us now assume that $G\setminus \mathscr S$ is not connected. 

Let $C_1$ and $C_2$ be two properly coloured odd closed walks and let $p_1$ and $p_2$ respectively be their pivots. We denote by $\mathcal K_p$ the $\mathscr S$-free component that contains $p_1$.

Let $e_1=v_1u_1$ and $e_2=u_2v_2$ with $u_1,u_2\in \mathcal K_p$ be the two edges connecting $\mathcal K_p$ to the rest of the graph.
  
Hence, $C_1$ defines two walks between $u_1$ and $u_2$. One of them, let us call it $W_1$, stays within $\mathcal K_p$ and uses no stubborn edge. This walk uses $p_1$ and we know that the two edges adjacent to $p_1$ have the same colour. Hence, this walk contains $n_1=1$ pair of consecutive edges of the same colour. Similarly, $C_2$ defines a walk $W_2$ between $u_1$ and $u_2$ that stays within $\mathcal K_p$ and uses no stubborn edge. By Lemma \ref{lemmaparity}, the number $n_2$ of pairs of consecutive edges of the same colour in $W_2$ is odd. Since $C_2$ is properly coloured, we find that $n_2=1$ and thus, $W_2$ contains $p_2$. 

Hence, if a pivot $p_1$ belongs to an $\mathscr S$-free component $\mathcal K_p$, every other pivot $p_2$ in the graph belongs to $\mathcal K_p$ too.
\end{proof}

 \begin{prop}\label{propCN}
 Let $G$ be a non-bipartite 2-edge-coloured graph. Then, there exists an $\mathscr S$-free component $\mathcal K$ of $G$ such that there is no properly coloured walk between two vertices $u,v\notin \mathcal K$ that goes through a vertex of $\mathcal K$.
\end{prop}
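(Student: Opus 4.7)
The plan is to choose the component $\mathcal K$ via a parity argument on an odd closed walk of $G$ and then to use Lemma~\ref{lemmaparity} to rule out any hypothetical properly coloured walk through $\mathcal K$ between two vertices outside $\mathcal K$. If $G$ has at most one stubborn edge, Theorem~\ref{thmflexcompo} yields a single $\mathscr S$-free component equal to $V(G)$ and the claim is vacuous. So I assume $|\mathscr S|=k\geq 2$, in which case $G$ has $k$ $\mathscr S$-free components $\mathcal K_1,\ldots,\mathcal K_k$ cyclically linked by the stubborn edges. I first pick any odd closed walk $C$ of $G$ (which exists since $G$ is non-bipartite). By Proposition~\ref{exactlyonce}, $C$ uses each stubborn edge exactly once, and hence splits cyclically into $k$ \emph{sections}, each consisting of the subwalk $W_j$ of $C$ inside $\mathcal K_j$ together with its two bordering stubborn edges; every consecutive pair of edges of $C$ lies in exactly one section.

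The first key point is that the number of same-coloured consecutive pairs in any closed walk has the same parity as its length (a cyclic colour sequence must flip colour an even number of times). Hence $C$, being odd, has an odd total number of same-coloured consecutive pairs, and therefore at least one section carries an odd count; I take $\mathcal K$ to be the corresponding component. The second key point, essentially Lemma~\ref{lemmaparity} applied to $\mathcal K$, is that this parity is intrinsic to $\mathcal K$ and the colouring: for any walk $W$ inside $\mathcal K$ between the endpoints (in $\mathcal K$) of its two incident stubborn edges $e_1,e_2$, the number of same-coloured pairs in $e_1 W e_2$ has the same parity as the count in $\mathcal K$'s section of $C$, namely odd. (The lemma is stated for walks arising from odd closed walks, but its proof uses only the fact that two walks inside $\mathcal K$ between the same pair of vertices have equal length parity, which follows from stubbornness.)

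Now suppose for contradiction that there is a properly coloured walk $W$ from $u$ to $v$ with $u,v\notin\mathcal K$ passing through some vertex of $\mathcal K$. Since $e_1$ and $e_2$ are the only edges joining $\mathcal K$ to the rest of $G$, and since Proposition~\ref{samedirection} forbids using a stubborn edge in both directions inside a properly coloured walk, the first time $W$ enters $\mathcal K$ and the first time it subsequently leaves $\mathcal K$ must use different stubborn edges. Hence $W$ contains a sub-walk of the form $e_i W'' e_j$ with $\{i,j\}=\{1,2\}$, where $W''$ stays inside $\mathcal K$ and runs between the two stubborn-edge endpoints. Being a sub-walk of the properly coloured $W$, this block contains zero same-coloured consecutive pairs, contradicting the odd invariant established above. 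The main technical wrinkle I expect is direction symmetry: the two sub-cases $(i,j)=(1,2)$ and $(i,j)=(2,1)$ correspond to $W''$ running between the two endpoints in opposite directions, so one has to notice that reversing a walk preserves its multiset of consecutive pairs and apply Lemma~\ref{lemmaparity} with the appropriate orientation of the stubborn edges in either case.
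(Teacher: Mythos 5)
Your proposal is correct and follows essentially the same route as the paper: pick an odd closed walk, observe that it has an odd number of same-coloured consecutive pairs, pigeonhole to find an $\mathscr S$-free component $\mathcal K$ carrying an odd count, and then use Proposition \ref{samedirection} to force any offending walk to cross $\mathcal K$ from one stubborn bridge to the other, so that Lemma \ref{lemmaparity} (in the slightly generalized form you correctly identify) yields the parity contradiction. One small repair: you invoke Proposition \ref{exactlyonce} to split $C$ into $k$ sections, but that proposition only applies to \emph{properly coloured} odd closed walks, and your $C$ is arbitrary; instead take $C$ to be an elementary odd cycle, which uses each stubborn edge exactly once simply because stubbornness forces at least one use and elementarity forbids a second.
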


\begin{proof}
 If $G\setminus\mathscr S$ is connected, we are done so assume that $G\setminus\mathscr S$ is disconnected.
 Since $G$ is not bipartite, there exists an odd closed walk $C$ in $G$. Since we only use two colours, we know that $C$ necessarily has an odd number of pairs of consecutive edges of same colour.
 For each such pair, we look at the $\mathscr S$-free component that contains the vertex between the two adjacent edges of the same colour. We thus know that there exists an $\mathscr S$-free component $\mathcal K$ that contains an odd number of occurrences of such vertices of $C$.
Let $e_1=u_1v_1$ and $e_2=u_2v_2$ be the edges between $\mathcal K$ and $V\setminus\mathcal K$, with $u_1$ and $u_2\in\mathcal K$. The walk $C$ defines a walk $W_1$ in $\mathcal K$ from $u_1$ to $u_2$. By the choice of $\mathcal K$, the number $n_1$ of pairs of consecutive edges of the same colour in $e_1W_1e_2$ is odd.
 
 Suppose that $W$ is a properly coloured walk connecting two vertices $u$ and $v\notin\mathcal K$ and going through $w\in\mathcal K$. By symmetry, we assume that $W$ uses $e_1$ to go from $u$ to $w$, and thus, to enter $\mathcal K$. Since $e_1$ is stubborn, by Proposition \ref{samedirection}, $W$ cannot use $e_1$ in the opposite direction and must therefore use $e_2$ to go from $w$ to $v$. This means that $W$ defines a walk $W_2$ from $u_1$ to $u_2$ such that $e_1W_2e_2$ is properly coloured and therefore does not contain any pair of consecutive edges of the same colour. However, by Lemma \ref{lemmaparity}, the walk $e_1W_2e_2$ must contain an odd number of pair of adjacent edges of the same colour, which is a contradiction.
 \end{proof}

 \subsection{Connecting graphs with no stubborn edge}

 In \cite{MelvilleGoddard}, Melville and Goddard proved that if a graph contains two edge-disjoint odd cycles, then it can always be connected with two colours. Studying the stubborn edges leads to a nice generalization of this result and we prove in this subsection that two colours actually suffice to connect any graph that contains no stubborn edge (Theorem \ref{thmnostubborn}). We first need to study the structure of such graphs.

\begin{prop}\label{prop3cycles}
 If a graph $G$ has no stubborn edge, then we can find in polynomial time a set of at most three odd cycles such that
  no edge belongs to all of them.
\end{prop}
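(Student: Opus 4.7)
My plan is a polynomial-time constructive algorithm that builds the three odd cycles step by step, exploiting the assumption that no edge is stubborn at every stage.

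First, I compute an odd cycle $C_1\subseteq G$ by BFS in $O(|V|+|E|)$. Then I pick an arbitrary edge $e_1\in E(C_1)$; because $e_1$ is not stubborn, $G-e_1$ is non-bipartite and a second odd cycle $C_2\subseteq G-e_1$ can be extracted the same way, with $e_1\in E(C_1)\setminus E(C_2)$. If $E(C_1)\cap E(C_2)=\emptyset$ I return $\{C_1,C_2\}$. Otherwise let $A:=E(C_1)\cap E(C_2)$, which is a proper subset of $E(C_1)$. I now seek a third odd cycle $C_3$ with $E(C_3)\cap A=\emptyset$, since such a $C_3$ immediately gives $E(C_1)\cap E(C_2)\cap E(C_3)=\emptyset$. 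Such a $C_3$ exists iff $G\setminus A$ is non-bipartite, which is tested in linear time; if so, I extract $C_3$ from $G\setminus A$ and output $\{C_1,C_2,C_3\}$.

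The remaining and most technical case is when $G\setminus A$ is bipartite. Then $A$ is an edge-bipartization set of $G$ and, since no single edge bipartizes $G$, $|A|\geq 2$. The plan is to iteratively shrink $A$ until one of the two favorable cases above is reached. Fix a bipartition $(V_1,V_2)$ of $G\setminus A$ and classify each edge of $A$ as crossing $(V_1,V_2)$ or internal to a part; since $C_1$ and $C_2$ are odd and contain all of $A$, the number of internal $A$-edges is odd and at least three (else a single such edge would be stubborn). Picking any $a\in A$, using the non-stubbornness of $a$ to obtain an odd cycle $C'\subseteq G-a$, and taking the symmetric difference $C_1\oplus C_2\oplus C'$ in the cycle space over $\mathbb F_2$, yields an element whose total parity is odd and whose intersection with $A$ equals $E(C')\cap A\subsetneq A$ (since $a\notin C'$). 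Decomposing into simple cycles by Veblen's theorem and extracting an odd simple sub-cycle $D$, we get an odd cycle $D$ with $E(D)\cap A\subsetneq A$. Replacing $C_2$ by $D$ strictly reduces the intersection, and after at most $|A|\leq|E(C_1)|$ iterations we reach $A=\emptyset$ or $G\setminus A$ non-bipartite. Each iteration runs in linear time, so the algorithm is polynomial overall.

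The most delicate technical point is establishing that each iteration makes strict progress on the pairwise intersection $E(C_1)\cap E(C_2)$, not only on $E(D)\cap A$. The sub-cycle $D$ might carry extra edges of $C_1$ that lie outside $A$, so one must select $D$ (among the simple odd components of the symmetric difference) carefully and prove via a parity analysis rooted in the bipartition $(V_1,V_2)$ that the new pair $(C_1,D)$ has strictly smaller intersection than the old $(C_1,C_2)$; this is the step I expect to be the main obstacle in turning the sketch into a rigorous proof.
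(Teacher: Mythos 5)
Your setup (extracting $C_1$, a second odd cycle $C_2$ avoiding a non-stubborn edge of $C_1$, testing whether $G\setminus A$ is bipartite, and pulling an odd simple cycle $D$ out of the cycle-space element $C_1\oplus C_2\oplus C'$ with $E(D)\cap A\subseteq E(C')\cap A\subsetneq A$) is all sound. But the step you yourself flag as the "main obstacle" is not a technicality to be filled in later: it is the entire content of the proposition, and the progress measure you propose does not work. The odd component $D$ of $C_1\oplus C_2\oplus C'$ contains, besides $E(C_1)\cap E(C_2)\cap E(C')$, every edge of $E(C_1)\setminus(E(C_2)\cup E(C'))$ that the decomposition assigns to it; if $C'$ avoids most of $C_1\setminus A$, then $D$ can absorb nearly all of $C_1\setminus A$, and $|E(C_1)\cap E(D)|$ can be strictly larger than $|A|$. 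So "replacing $C_2$ by $D$ strictly reduces the intersection" is false for the replacement as described, no argument is given for an alternative well-founded measure, and without one the iteration has no termination guarantee. As it stands the proposal reduces the proposition to exactly the claim it needs to prove.

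For comparison, the paper avoids this trap by using a different, carefully chosen measure and a different replacement rule. It first normalises the intersection: $P$ is the shortest subpath of $C_1$ containing all of $E(C_1)\cap E(C_2)$, and the quantity that strictly decreases is the length of $P$ (not the cardinality of the intersection). The third cycle $C_3$ is taken to avoid the \emph{first edge} of $P$ specifically, and the new cycle is not an arbitrary odd component of a symmetric difference but is built explicitly: one identifies a segment $W_3$ of $C_3$ running from a vertex $u$ of $C_1\cup C_2$ outside $P$ to the first vertex $v_i$ of $P$ hit by $C_3$, using no edge of $C_1\cup C_2$, and closes it with whichever of the two arcs of $C_1$ (or $C_2$) between $u$ and $v_i$ gives odd parity. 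Because the attachment points are controlled, the intersection of the new pair is provably contained in a proper subpath of $P$, which is what makes the induction go through. If you want to rescue your symmetric-difference formulation, you would need to restrict which odd component $D$ you take and prove an analogous containment statement; simply citing parity and Veblen's theorem does not do it.
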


\begin{proof}
Since $G$ contains no stubborn edge, we know that it contains at least two elementary odd cycles.
If there are two cycles with no common edge, the property immediately holds.
Otherwise, let $C_1$ and $C_2$ be two elementary odd cycles $C_1$ and $C_2$ and let $P=v_1\dots v_k$ be the shortest subpath of $C_1$ that contains all the edges of $C_1\cap C_2$. Hence, the first and last edges of $P$ belong to $C_2$ too. Let $e_i=v_iv_{i+1}$ for $1\leqslant i \leqslant k$.

Let $C_3$ be an elementary odd cycle that does not use $e_1$. We know that such a cycle exists because $e_1$ is not stubborn.
If $C_3$ does not use any edge of $C_1\cap C_2$ (which is notably the case if $k=1$), then $C_1\cap C_2\cap C_3$ is empty and we are done. Else, let $e_i=v_iv_{i+1}$ be the first edge of $P$ that belongs to $C_3$ (hence, $2\leq i\leq k$). If all the edges of $C_3$ that belong to $C_1$ or $C_2$ are in $P$, this means that $C_1$ and $C_3$ are two elementary cycles whose intersection is contained in a subpath of $C_1$ strictly shorter than $P$. We iterate this process with $C_1$ and $C_3$ instead of $C_1$ and $C_2$. Else, let $e$ be the last edge of $C_1\cup C_2\setminus P$ that appears in $C_3$ before $C_3$ uses $e_i$. 
Thus, $e$ has an endpoint $u$ such that $C_3$ defines a walk $W_3$ between $u$ and $v_i$ that uses no edge of $C_1$ or $C_2$. If $u\in C_1$, then $C_1$ defines two walks $W_1$ and $W_2$ of different parity from $u$ to $v_i$. One of these walks forms with $W_3$ an odd cycle $C_4$ whose intersection with $C_2$ is contained in a subpath of $C_4$ strictly shorter than $P$. Similarly, if $u\in C_2$, then $C_2$ defines two walks of different parity from $u$ to $v_i$ and one of these walks forms with $W_3$ an odd cycle $C_4$ whose intersection with $C_1$ is contained in a subpath of $C_1$ strictly shorter than $P$.

Hence, we can iterate this process with $C_4$ and $C_2$ or $C_1$ and $C_4$ instead of $C_1$ and $C_2$. We know that it eventually ends since $P$ is shorter at each iteration and the process necessarily ends if $P$ has length one.
\end{proof} 

Note that the proof above can be turned into a polynomial time algorithm to build three odd cycles $C_1$, $C_2$ and $C_3$ such that no edge belong to all three of them. By iterating the above algorithm as long as we can find an odd cycle $C_3$ that does not use $e_1$ or $e_{k-1}$ but still shares edges with $P$, we can ensure that the intersection between $C_1$ and $C_2$ is a path $P$ minimal by inclusion \textit{i.e.} such that no two odd cycles intersect in a proper subpath of $P$.

\vspace{0.35cm}
Again, the following theorem has appeared independently in \cite{MelvilleGoddard} but we leave it in our paper because the construction we use in its proof will serve as a basis in the proof of Theorem \ref{thmnostubborn}.

\begin{theorem}\label{2disjcycles}
 If a connected graph has two edge-disjoint odd cycles, then it can be connected with two colours.
\end{theorem}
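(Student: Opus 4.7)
The plan is to generalise the orient-then-colour strategy from Theorem~\ref{thmbipartite}, using the two edge-disjoint odd cycles to manufacture \emph{pivots} that compensate for the absence of a bipartition. Note first that the hypothesis implies $G$ has no stubborn edge (any edge lies outside at least one of $C_1$, $C_2$), so pivots are not obstructed by parity constraints coming from $\mathscr S$.

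First, I would colour a ``backbone'' that contains both odd cycles. Let $P$ be a shortest path in $G$ from $V(C_1)$ to $V(C_2)$ (of length zero if they share a vertex), and set $H := C_1 \cup P \cup C_2$. Colour the edges of $C_1$ alternately with two colours; since $|E(C_1)|$ is odd, exactly one vertex $p_1 \in V(C_1)$ ends up incident to two same-coloured edges, so traversing $C_1$ once from $p_1$ is a properly coloured odd closed walk and $p_1$ is a pivot. Do the same on $C_2$ to obtain a pivot $p_2$. Finally colour $P$ alternately, choosing the colours at its endpoints to be compatible with the existing colouring of the cycles.

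Second, I would extend the colouring to the rest of $G$. If $H$ is not already 2-edge-connected (e.g.\ because $P$ has edges, which would then be bridges of $H$), add a single auxiliary edge $e$ to make $H$ 2-edge-connected, exactly as in the bipartite proof, and then take an ear decomposition $H, Q_1, \ldots, Q_k$ of $G + e$ starting with $H$. Orient the ears inductively, as in Theorem~\ref{thmbipartite}, so that for every pair of vertices in $H \cup Q_1 \cup \ldots \cup Q_i$ there is a directed walk between them avoiding $e$. Translate each oriented ear into a 2-colouring by alternating red and blue along its direction, choosing the starting colour to match the edge at its attachment vertex on $H$.

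Third, I would verify connectivity. For any $u,v \in V(G)$, the oriented walk from $u$ to $v$ provided above yields a candidate walk whose colour alternation may fail only at a bounded number of ``break-points'' where two ears meet with incompatible colours at the junction. At each such break-point, I insert a detour through the nearer pivot, using its pivot walk to flip the outgoing colour. The main obstacle will be this final verification: guaranteeing that the pivots are always available with the correct parity. This is where the hypothesis of \emph{two} edge-disjoint cycles becomes essential: if the parity we arrive with at $p_1$ does not match, we can push on through $P$ to $p_2$, and the odd length of $C_2$ supplies the remaining parity flip. The remaining work -- degenerate attachments, break-points coinciding with pivots, and the case where $P$ is trivial -- is routine bookkeeping rather than a conceptual obstruction, and is where the construction naturally suggests the generalisation that will become Theorem~\ref{thmnostubborn}.
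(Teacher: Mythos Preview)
Your plan has a genuine structural gap in step two. You propose to take an ear decomposition of $G+e$ starting from $H=C_1\cup P\cup C_2$, but Theorem~\ref{2disjcycles} assumes only that $G$ is connected, not that it is (or is one edge away from being) 2-edge-connected. The graph $G$ may have arbitrarily many bridges and pendant vertices outside $H$; a single auxiliary edge $e$ cannot repair all of them, so the ear decomposition you invoke need not exist. The analogy with Theorem~\ref{thmbipartite} breaks down precisely here: that theorem has ``can be made 2-edge-connected by adding one edge'' as a \emph{hypothesis}, whereas here it is not available.

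Even if you patched this (say by adding many auxiliary edges), the orientation-to-colouring translation in step two is underspecified: without a bipartition there is no canonical way to turn a directed ear into an alternating colouring that is compatible at \emph{both} endpoints, and your ``break-point'' repair in step three is doing all the real work. You claim the residual verification is routine bookkeeping, but you have not shown that every break-point can reach a pivot by a properly coloured walk of the required starting colour, nor that repairing one break-point does not create another. This is not obviously finite or convergent.

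The paper's argument sidesteps all of this by abandoning ear decompositions entirely. It fixes a single vertex $u\in C_1$, removes the edges of $P\cup C_2$, and colours every remaining edge by the parity of its BFS distance from $u$; then it colours the rest by BFS distance from $u$ in the complementary subgraph, with the parities swapped. BFS colouring automatically gives properly coloured shortest paths from $u$ to every reachable vertex, and the odd cycles guarantee two pivot walks at $u$ with opposite starting colours. No 2-edge-connectivity is needed, and there are no break-points to repair.
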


\begin{proof}
 Let $G$ be a connected graph and let $C_1$ and $C_2$ be two edge-disjoint odd cycles in $G$. Let $P$ be a path connecting a vertex $u\in C_1$ to a vertex $v\in C_2$ such that no intermediate vertex of $P$ belongs to either $C_1$ or $C_2$.

 Note that the shortest path from any vertex of $G$ to $C_1\cup P\cup C_2$  uses no edge of $C_1$, $P$ or $C_2$. Furthermore, all the vertices of $C_1$ can be reached from $u$ by walks that only use edges of $C_1$ and use therefore no edge of $P$ or $C_2$. Similarly, all the vertices of $P$ or $C_2$ can be reached from $u$ without using any edge from $C_1$. Hence, every vertex of the graph can be reached from $u$ either without using any edge of $C_1$ or without using any edge of $P$ and $C_2$. 
 
 We denote by $V_1$ the set of vertices of $G$ that can be reached from $u$ without using the edges of $P$ or $C_2$ and by $V_2$ be its complement (those vertices can thus be reached from $u$ without using the edges of $C_1$). Consider for example the graph depicted in Figure \ref{fig2dc1}, let $C_1=(v_1,v_3,v_4,v_5,v_2,v_1)$, $C_2=(v_8,v_{11},v_9,v_8)$ and $P=(v_4,v_8)$. Here, we have $u=v_4$, $V_1=\{v_1,v_2,v_3,v_4,v_5,v_6\}$ and $V_2=\{v_7,v_8,v_9,v_{10},v_{11},v_{12}\}$.
 
 We create the graph $H$ from $G$ by removing all the edges of $P$ and $C_2$. We then use colour 1 on all the edges that are at even distance of $u$ in $H$ and colour 2 and all the edges at odd distance. We do not colour the edges that cannot be reached from $u$ in $H$. This first step is illustrated in Figure \ref{fig2dc2} (where red is colour 1 and blue is colour 2). 
 By doing so, we colour all the edges except those of $P$ and $C_2$ (that are not in $H$) and those whose end vertices are in $V_2$ (that cannot be reached from $u$ in $H$). After this step, it is possible that $C_1$ is not properly coloured (as is the case in Figure \ref{fig2dc2}) but we can still prove that there exists an odd properly coloured closed walk $W_1$ with pivots $u$ in $H$. Indeed, we create properly coloured shortest walks from $u$ to any vertex of $V_1$ starting with colour 1 and ending with a colour that depends on the length of the walk. Moreover, if two vertices are adjacent, their distance from $u$ can either be the same or differ by 1. By comparing the distance from $u$ of all the pair of adjacent vertices of $C_1$, we find that there must be two adjacent vertices $w$ and $w'$ at same distance from $u$ since $C_1$ is closed and odd. Thus, the concatenation of a shortest walk from $u$ to $w$, the edge between $w$ and $w'$ and a shortest walk from $w'$ to $u$ is properly coloured too. For example, in Figure \ref{fig2dc2}, we see that the vertices $v_1$ and $v_3$ are both at the same distance from $u=v_4$. We can thus construct the properly coloured closed walk $W_1=(v_4,v_1,v_3,v_4)$ in $H$.
 
 We then create a copy $H'$ of $G$ by removing all the already coloured edges. We use colour 2 on all the edges at even distance from $u$ in $H'$ and colour 1 on all the edges at odd distance from $u$. This second step is illustrated in Figure \ref{fig2dc3}. Just like before, this step creates an odd properly coloured closed walk $W_2$ with pivot $u$ in $C_2$. Indeed, there always exists two adjacent vertices in $C_2$ that are at the same distance from $u$. In Figure \ref{fig2dc3}, the walk $W_2=(v_4,v_8,v_{11}v_9,v_4)$ is odd and properly coloured.
 
 We claim that the resulting colouring connects the graph. Indeed, the first step 
 creates properly coloured walks from every vertex of $V_1$ to $u$ that ends on an edge coloured 1 and the second creates properly coloured walks from $u$ to every vertex of $V_2$ that starts with an edge coloured 2. Furthermore, $W_1$ goes from $u$ to $u$ starting and ending on an edge coloured 1 and $W_2$ goes from $u$ to $u$ starting and ending with an edge coloured 2. The walks of the first steps together with $W_2$ allow to connect any two vertices of $V_1$. For example, in Figure \ref{fig2dc4}, one can go from $v_2$ to $v_3$ by using $(v_2, v_1,v_4)$ to reach $u=v_4$, use $W_2=(v_4,v_8,v_{11},v_9,v_8,v_4)$ and finally, $(v_4,v_3)$ to reach $v_3$. Similarly, the walks of the second step together with $W_1$ connect any two vertices of $V_2$. Finally, the two steps together connect the vertices of $V_1$ with those of $V_2$. For example, in Figure \ref{fig2dc4}, one can go from $v_6$ to $v_7$ by going from $v_6$ to $u=v_4$ with the walk $(v_6,v_3,v_4)$ and then use $(v_4,v_8,v_9,v_{10},v_7)$ to reach $v_7$.\qedhere

\begin{figure}[!h]
\begin{subfigure}{0.48\linewidth}
\[\begin{pspicture}(3.96,2.5)

\psline(3.71,0.75)(3.71,1.75)
\psline(3.71,0.75)(2.85,0.25)
\psline[linewidth=2pt](3.71,1.75)(2.85,1.25)
\psline[linewidth=2pt](3.71,1.75)(2.85,2.25)
\psline(2.85,0.25)(2.85,1.25)
\psline(2.85,0.25)(1.98,0.75)
\psline[linewidth=2pt](2.85,1.25)(2.85,2.25)
\psline[linewidth=2pt](2.85,2.25)(1.12,1.25)
\psline(1.98,2.25)(1.12,2.25)
\psline[linewidth=2pt](1.12,0.25)(1.12,1.25)
\psline[linewidth=2pt](1.12,0.25)(0.25,0.75)
\psline[linewidth=2pt](1.12,1.25)(1.12,2.25)
\psline(1.12,1.25)(0.25,1.75)
\psline[linewidth=2pt](1.12,2.25)(0.25,1.75)
\psline[linewidth=2pt](0.25,0.75)(0.25,1.75)

\pscircle[fillstyle=solid, fillcolor=white](3.71,0.75){0.25}
\pscircle[fillstyle=solid, fillcolor=white](3.71,1.75){0.25}
\pscircle[fillstyle=solid, fillcolor=white](2.85,0.25){0.25}
\pscircle[fillstyle=solid, fillcolor=white](2.85,1.25){0.25}
\pscircle[fillstyle=solid, fillcolor=white](2.85,2.25){0.25}
\pscircle[fillstyle=solid, fillcolor=white](1.98,0.75){0.25}
\pscircle[fillstyle=solid, fillcolor=white](1.98,2.25){0.25}
\pscircle[fillstyle=solid, fillcolor=white](1.12,0.25){0.25}
\pscircle[fillstyle=solid, fillcolor=white](1.12,1.25){0.25}
\pscircle[fillstyle=solid, fillcolor=white](1.12,2.25){0.25}
\pscircle[fillstyle=solid, fillcolor=white](0.25,0.75){0.25}
\pscircle[fillstyle=solid, fillcolor=white](0.25,1.75){0.25}

\rput(0.25,0.75){$v_2$}
\rput(0.25,1.75){$v_1$}
\rput(1.12,0.25){$v_5$}
\rput(1.12,1.25){$v_4$}
\rput(1.12,2.25){$v_3$}
\rput(1.98,0.75){$v_7$}
\rput(1.98,2.25){$v_6$}
\rput(2.85,0.25){$v_{10}$}
\rput(2.85,1.25){$v_9$}
\rput(2.85,2.25){$v_8$}
\rput(3.71,0.75){$v_{12}$}
\rput(3.71,1.75){$v_{11}$}
\rput(3.13,1.75){$C_2$}
\rput(0.56,1.25){$C_1$}
\rput(2.01,1.5){$P$}

\end{pspicture}\]
\caption{An example of graph with two edge-disjoint odd cycles.}
\label{fig2dc1}
\end{subfigure}
\begin{subfigure}{0.48\linewidth}
\[\begin{pspicture}(3.96,2.5)

\psline(3.71,0.75)(3.71,1.75)
\psline(3.71,0.75)(2.85,0.25)
\psline[linecolor=gray, linestyle=dashed](3.71,1.75)(2.85,1.25)
\psline[linecolor=gray, linestyle=dashed](3.71,1.75)(2.85,2.25)
\psline(2.85,0.25)(2.85,1.25)
\psline(2.85,0.25)(1.98,0.75)
\psline[linecolor=gray, linestyle=dashed](2.85,1.25)(2.85,2.25)
\psline[linecolor=gray, linestyle=dashed](2.85,2.25)(1.12,1.25)
\psline[linecolor=blue](1.98,2.25)(1.12,2.25)
\colorprint{\psline[linecolor=blue, linewidth=3pt](1.98,2.25)(1.12,2.25)
\psline[linecolor=white](1.98,2.25)(1.12,2.25)}
\psline[linecolor=red](1.12,0.25)(1.12,1.25)
\psline[linecolor=blue](1.12,0.25)(0.25,0.75)
\colorprint{\psline[linecolor=blue, linewidth=3pt](1.12,0.25)(0.25,0.75)
\psline[linecolor=white](1.12,0.25)(0.25,0.75)}
\psline[linecolor=red](1.12,1.25)(1.12,2.25)
\psline[linecolor=red](1.12,1.25)(0.25,1.75)
\psline[linecolor=blue](1.12,2.25)(0.25,1.75)
\colorprint{\psline[linecolor=blue, linewidth=3pt](1.12,2.25)(0.25,1.75)
\psline[linecolor=white](1.12,2.25)(0.25,1.75)}
\psline[linecolor=blue](0.25,0.75)(0.25,1.75)
\colorprint{\psline[linecolor=blue, linewidth=3pt](0.25,0.75)(0.25,1.75)
\psline[linecolor=white](0.25,0.75)(0.25,1.75)}

\pscircle[fillstyle=solid, fillcolor=white](0.25,0.75){0.25}
\pscircle[fillstyle=solid, fillcolor=white](0.25,1.75){0.25}
\pscircle[fillstyle=solid, fillcolor=white](1.12,0.25){0.25}
\pscircle[fillstyle=solid, fillcolor=white](1.12,1.25){0.25}
\pscircle[fillstyle=solid, fillcolor=white](1.12,2.25){0.25}
\pscircle[fillstyle=solid, fillcolor=white](1.98,0.75){0.25}
\pscircle[fillstyle=solid, fillcolor=white](1.98,2.25){0.25}
\pscircle[fillstyle=solid, fillcolor=white](2.85,0.25){0.25}
\pscircle[fillstyle=solid, fillcolor=white](2.85,1.25){0.25}
\pscircle[fillstyle=solid, fillcolor=white](2.85,2.25){0.25}
\pscircle[fillstyle=solid, fillcolor=white](3.71,0.75){0.25}
\pscircle[fillstyle=solid, fillcolor=white](3.71,1.75){0.25}

\rput(0.25,0.75){$v_2$}
\rput(0.25,1.75){$v_1$}
\rput(1.12,0.25){$v_5$}
\rput(1.12,1.25){$v_4$}
\rput(1.12,2.25){$v_3$}
\rput(1.98,0.75){$v_7$}
\rput(1.98,2.25){$v_6$}
\rput(2.85,0.25){$v_{10}$}
\rput(2.85,1.25){$v_9$}
\rput(2.85,2.25){$v_8$}
\rput(3.71,0.75){$v_{12}$}
\rput(3.71,1.75){$v_{11}$}

\end{pspicture}\]
\caption{The first step of the colouring algorithm.}
\label{fig2dc2}
\end{subfigure}

\begin{subfigure}{0.48\linewidth}
\[\begin{pspicture}(3.96,2.75)

\psline[linecolor=blue](3.71,0.75)(3.71,1.75)
\colorprint{\psline[linecolor=blue, linewidth=3pt](3.71,0.75)(3.71,1.75)
\psline[linecolor=white](3.71,0.75)(3.71,1.75)}
\psline[linecolor=red](3.71,0.75)(2.85,0.25)
\psline[linecolor=blue](3.71,1.75)(2.85,1.25)
\colorprint{\psline[linecolor=blue, linewidth=3pt](3.71,1.75)(2.85,1.25)
\psline[linecolor=white](3.71,1.75)(2.85,1.25)}
\psline[linecolor=red](3.71,1.75)(2.85,2.25)
\psline[linecolor=blue](2.85,0.25)(2.85,1.25)
\colorprint{\psline[linecolor=blue, linewidth=3pt](2.85,0.25)(2.85,1.25)
\psline[linecolor=white](2.85,0.25)(2.85,1.25)}
\psline[linecolor=red](2.85,0.25)(1.98,0.75)
\psline[linecolor=red](2.85,1.25)(2.85,2.25)
\psline[linecolor=blue](2.85,2.25)(1.12,1.25)
\colorprint{\psline[linecolor=blue, linewidth=3pt](2.85,2.25)(1.12,1.25)
\psline[linecolor=white](2.85,2.25)(1.12,1.25)}
\psline[linecolor=gray, linestyle=dashed](1.98,2.25)(1.12,2.25)
\psline[linecolor=gray, linestyle=dashed](1.12,0.25)(1.12,1.25)
\psline[linecolor=gray, linestyle=dashed](1.12,0.25)(0.25,0.75)
\psline[linecolor=gray, linestyle=dashed](1.12,1.25)(1.12,2.25)
\psline[linecolor=gray, linestyle=dashed](1.12,1.25)(0.25,1.75)
\psline[linecolor=gray, linestyle=dashed](1.12,2.25)(0.25,1.75)
\psline[linecolor=gray, linestyle=dashed](0.25,0.75)(0.25,1.75)

\pscircle[fillstyle=solid, fillcolor=white](0.25,0.75){0.25}
\pscircle[fillstyle=solid, fillcolor=white](0.25,1.75){0.25}
\pscircle[fillstyle=solid, fillcolor=white](1.12,0.25){0.25}
\pscircle[fillstyle=solid, fillcolor=white](1.12,1.25){0.25}
\pscircle[fillstyle=solid, fillcolor=white](1.12,2.25){0.25}
\pscircle[fillstyle=solid, fillcolor=white](1.98,0.75){0.25}
\pscircle[fillstyle=solid, fillcolor=white](1.98,2.25){0.25}
\pscircle[fillstyle=solid, fillcolor=white](2.85,0.25){0.25}
\pscircle[fillstyle=solid, fillcolor=white](2.85,1.25){0.25}
\pscircle[fillstyle=solid, fillcolor=white](2.85,2.25){0.25}
\pscircle[fillstyle=solid, fillcolor=white](3.71,0.75){0.25}
\pscircle[fillstyle=solid, fillcolor=white](3.71,1.75){0.25}

\rput(0.25,0.75){$v_2$}
\rput(0.25,1.75){$v_1$}
\rput(1.12,0.25){$v_5$}
\rput(1.12,1.25){$v_4$}
\rput(1.12,2.25){$v_3$}
\rput(1.98,0.75){$v_7$}
\rput(1.98,2.25){$v_6$}
\rput(2.85,0.25){$v_{10}$}
\rput(2.85,1.25){$v_9$}
\rput(2.85,2.25){$v_8$}
\rput(3.71,0.75){$v_{12}$}
\rput(3.71,1.75){$v_{11}$}

\end{pspicture}\]
\caption{The second step of the algorithm.}
\label{fig2dc3}
\end{subfigure}
\begin{subfigure}{0.48\linewidth}
\[\begin{pspicture}(3.96,2.75)

\psline[linecolor=blue](3.71,0.75)(3.71,1.75)
\colorprint{\psline[linecolor=blue, linewidth=3pt](3.71,0.75)(3.71,1.75)
\psline[linecolor=white](3.71,0.75)(3.71,1.75)}
\psline[linecolor=red](3.71,0.75)(2.85,0.25)
\psline[linecolor=blue](3.71,1.75)(2.85,1.25)
\colorprint{\psline[linecolor=blue, linewidth=3pt](3.71,1.75)(2.85,1.25)
\psline[linecolor=white](3.71,1.75)(2.85,1.25)}
\psline[linecolor=red](3.71,1.75)(2.85,2.25)
\psline[linecolor=blue](2.85,0.25)(2.85,1.25)
\colorprint{\psline[linecolor=blue, linewidth=3pt](2.85,0.25)(2.85,1.25)
\psline[linecolor=white](2.85,0.25)(2.85,1.25)}
\psline[linecolor=red](2.85,0.25)(1.98,0.75)
\psline[linecolor=red](2.85,1.25)(2.85,2.25)
\psline[linecolor=blue](2.85,2.25)(1.12,1.25)
\colorprint{\psline[linecolor=blue, linewidth=3pt](2.85,2.25)(1.12,1.25)
\psline[linecolor=white](2.85,2.25)(1.12,1.25)}
\psline[linecolor=blue](1.98,2.25)(1.12,2.25)
\colorprint{\psline[linecolor=blue, linewidth=3pt](1.98,2.25)(1.12,2.25)
\psline[linecolor=white](1.98,2.25)(1.12,2.25)}
\psline[linecolor=red](1.12,0.25)(1.12,1.25)
\psline[linecolor=blue](1.12,0.25)(0.25,0.75)
\colorprint{\psline[linecolor=blue, linewidth=3pt](1.12,0.25)(0.25,0.75)
\psline[linecolor=white](1.12,0.25)(0.25,0.75)}
\psline[linecolor=red](1.12,1.25)(1.12,2.25)
\psline[linecolor=red](1.12,1.25)(0.25,1.75)
\psline[linecolor=blue](1.12,2.25)(0.25,1.75)
\colorprint{\psline[linecolor=blue, linewidth=3pt](1.12,2.25)(0.25,1.75)
\psline[linecolor=white](1.12,2.25)(0.25,1.75)}
\psline[linecolor=blue](0.25,0.75)(0.25,1.75)
\colorprint{\psline[linecolor=blue, linewidth=3pt](0.25,0.75)(0.25,1.75)
\psline[linecolor=white](0.25,0.75)(0.25,1.75)}

\pscircle[fillstyle=solid, fillcolor=white](0.25,0.75){0.25}
\pscircle[fillstyle=solid, fillcolor=white](0.25,1.75){0.25}
\pscircle[fillstyle=solid, fillcolor=white](1.12,0.25){0.25}
\pscircle[fillstyle=solid, fillcolor=white](1.12,1.25){0.25}
\pscircle[fillstyle=solid, fillcolor=white](1.12,2.25){0.25}
\pscircle[fillstyle=solid, fillcolor=white](1.98,0.75){0.25}
\pscircle[fillstyle=solid, fillcolor=white](1.98,2.25){0.25}
\pscircle[fillstyle=solid, fillcolor=white](2.85,0.25){0.25}
\pscircle[fillstyle=solid, fillcolor=white](2.85,1.25){0.25}
\pscircle[fillstyle=solid, fillcolor=white](2.85,2.25){0.25}
\pscircle[fillstyle=solid, fillcolor=white](3.71,0.75){0.25}
\pscircle[fillstyle=solid, fillcolor=white](3.71,1.75){0.25}

\rput(0.25,0.75){$v_2$}
\rput(0.25,1.75){$v_1$}
\rput(1.12,0.25){$v_5$}
\rput(1.12,1.25){$v_4$}
\rput(1.12,2.25){$v_3$}
\rput(1.98,0.75){$v_7$}
\rput(1.98,2.25){$v_6$}
\rput(2.85,0.25){$v_{10}$}
\rput(2.85,1.25){$v_9$}
\rput(2.85,2.25){$v_8$}
\rput(3.71,0.75){$v_{12}$}
\rput(3.71,1.75){$v_{11}$}

\end{pspicture}\]
\caption{The resulting colouring.}
\label{fig2dc4}
\end{subfigure}
\caption{An example of how to construct a connecting 2-edge-colouring of a graph with two edge-disjoint odd cycles.}
\end{figure}
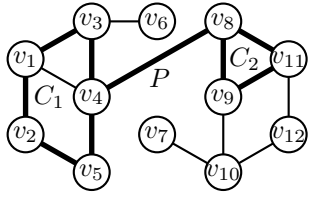
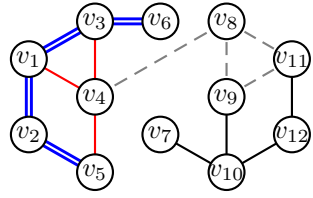
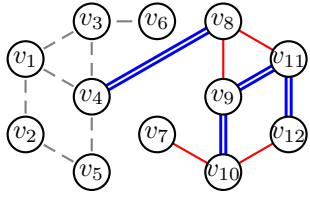
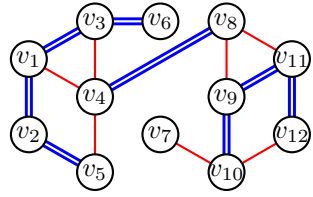
\end{proof}

\begin{theorem}\label{thmnostubborn}
 If a connected graph has no stubborn edge, then it can be connected with two colours.
\end{theorem}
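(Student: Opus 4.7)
My plan is to combine Proposition~\ref{prop3cycles} with the two-step BFS construction of Theorem~\ref{2disjcycles}. First, apply Proposition~\ref{prop3cycles} to obtain three odd cycles $C_1, C_2, C_3$ in $G$ with $C_1\cap C_2\cap C_3=\emptyset$. If some pair among them is edge-disjoint, Theorem~\ref{2disjcycles} immediately yields a connecting 2-edge-colouring and we are done, so I may assume every pair of these cycles shares at least one edge.

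In this remaining case, the empty triple intersection guarantees that for any edge $e\in C_i\cap C_j$, we have $e\notin C_k$. By the minimality remark after Proposition~\ref{prop3cycles}, I may also assume $C_1\cap C_2$ lies along a single path $P$, and the empty triple intersection then forces $C_3$ to avoid every edge of $P$. The plan is to replay the construction of Theorem~\ref{2disjcycles} using $C_1$ as the "first" odd cycle and $C_3$ as the "second", with a shortest path $P'$ from a vertex $u\in C_1$ to $C_3$ (possibly of length zero, since $C_1\cap C_3$ already shares vertices) playing the role of the connecting bridge. I then define $V_1$ as the set of vertices reachable from $u$ in $G\setminus(P'\cup C_3)$ and $V_2$ as its complement, and apply the two-step BFS colouring from $u$ just as in the original proof: colour~1 on even-distance edges and colour~2 on odd-distance edges in the $V_1$-side subgraph, then swap the colours for the second step on the remaining edges.

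The main obstacle is that $C_1$ and $C_3$ need not be edge-disjoint; they may share the edges of $E_{13}=(C_1\cap C_3)\setminus C_2$, so $C_1$ is no longer entirely contained in the $V_1$-side subgraph, and the clean argument of Theorem~\ref{2disjcycles} that produces an odd properly coloured closed walk at $u$ from the $C_1$-BFS does not apply directly. I expect to handle this by combining the argument with the parity count of Lemma~\ref{lemmaparity} along $C_1$: since $|C_1|$ is odd and each edge of $C_1$ is coloured in exactly one of the two steps with a determined colour-parity relative to $u$, a careful counting forces the existence of two adjacent vertices of $C_1$ at the same BFS-distance from $u$ in $V_1$, which produces the odd closing pair and hence an odd properly coloured closed walk with pivot $u$ on the first side. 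A symmetric argument applied to $C_3$ produces a second odd properly coloured closed walk with pivot $u$ on the $V_2$-side. Once both pivot walks at $u$ are in place, the final connectivity argument of Theorem~\ref{2disjcycles} carries over verbatim: any two vertices are joined by concatenating a BFS walk to $u$, one of the pivot walks (to align the outgoing colour if needed), and a BFS walk from $u$ to the other vertex.
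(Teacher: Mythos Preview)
Your overall strategy---reduce to the two-step BFS of Theorem~\ref{2disjcycles} using a pair of odd cycles furnished by Proposition~\ref{prop3cycles}---is the same as the paper's, but the place where the real work lies is precisely the step you wave through, and your proposed fix for it does not hold up.

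\textbf{The gap.} You correctly flag that $C_1$ and $C_3$ may share edges, so once you delete $E(C_3)$ to form the first-step graph $H$, the cycle $C_1$ is no longer an intact odd cycle in $H$. The pivot-walk argument of Theorem~\ref{2disjcycles} needs exactly that: it compares BFS-distances from $u$ along the \emph{closed} odd cycle to force two adjacent vertices at equal depth. With $C_1$ broken into arcs this parity count says nothing. Your appeal to Lemma~\ref{lemmaparity} cannot repair this: that lemma is stated for two stubborn edges $e_1,e_2$, and the hypothesis of the theorem is precisely that $G$ has \emph{no} stubborn edge, so the lemma is vacuous here. The sentence ``a careful counting forces the existence of two adjacent vertices of $C_1$ at the same BFS-distance from $u$ in $V_1$'' is therefore an assertion of the very thing that needs proving.

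\textbf{How the paper handles it.} Rather than trying to avoid the overlap, the paper embraces it. It works with $C_1$ and $C_2$ (not $C_1$ and $C_3$), whose intersection is the minimal path $Q$, and it \emph{keeps the edges of $Q$ in both} auxiliary graphs $H$ and $H'$. This guarantees $C_1\subseteq H$ and $C_2\subseteq H'$ are intact odd cycles, so the pivot walks $W_1,W_2$ exist for free by the Theorem~\ref{2disjcycles} argument. The cost is that every edge of $Q$ is coloured twice, once in each BFS, and one must show the two colours agree. This compatibility is the actual content of the proof: the paper uses $C_3$ (which avoids $Q$) together with the even cycle $C_4=(C_1\cup C_2)\setminus Q$ to build an auxiliary odd cycle $\mathscr{C}$ through an endpoint of $Q$, and a chain of parity identities then shows $d_H(u_i,e)+d_{H'}(u_i,e)\equiv |\mathscr{C}|\equiv 1$ for every $e\in Q$. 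Your proposal has no analogue of this compatibility argument, and your choice of the pair $(C_1,C_3)$ instead of $(C_1,C_2)$ does not set up the structure needed to run one.
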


\begin{proof}
 If there are two odd cycles with no edge in common, the claim immediately follows from Theorem \ref{2disjcycles}. We assume in the rest of the proof that this is not the case. 
 
 The proof of Proposition \ref{prop3cycles} provides two odd cycles $C_1$ and $C_2$ whose intersection is a minimal path $Q$ (which means that no two odd cycles intersect in a proper subpath of $Q$), and an odd cycle $C_3$ such that no edge belongs to $C_1$, $C_2$ and $C_3$. By merging $C_1$ and $C_2$ and removing $Q$, we create an elementary even cycle $C_4$ whose intersection with $C_3$ is non-empty since $C_3$ intersects $C_1$ and $C_2$ but not $Q$. Hence, $C_3$ can be decomposed as the concatenation of paths $Q_1, P_1, Q_2, P_2,...$ such that the $Q_i$ are subpaths of $C_4$ too and the $P_i$ use no edge of $C_4$. Let $u_i$ and $v_i$ be the end vertices of $P_i$. Note that for all $i$, $u_i$ and $v_i$ both belong to $C_4$. Hence, $C_4$ defines two paths between $u_i$ and $v_i$ but since $C_4$ is even, they both have same parity. Since $C_3$ is odd and $C_4$ is even, we know that there exists $i$ such that $P_i$ has different parity from the walks $W_i$ and $W'_i$ that $C_4$ defines between $u_i$ and $v_i$. Hence, the concatenation of $P_i$ and $W_i$ and of $P_i$ and $W'_i$ are two odd cycles that we call respectively $\mathscr C$ and $\mathscr C'$. These cycles will be useful later in the proof as they provide paths of different parity between their vertices. Since we assumed that the graph does not contain two edge-disjoint odd cycles, $\mathscr C$ and $\mathscr C'$ must share edges with both $C_1$ and $C_2$. Hence, one of $u_i$ and $v_i$ belongs to $C_1$ and the other belongs to $C_2$. By symmetry, we may assume that $u_i\in C_1$ and $v_i\in C_2$. 
 
 Consider for example the graph depicted in Figure \ref{fignostub}, that has no stubborn edge. Let $C_1=(v_1,v_2,v_3,v_4,v_5,v_6,v_7,v_1)$ and let $C_2=(v_1,v_8,v_9,v_{10},v_{11},v_3,$ $v_2,v_1)$. Those two cycles intersect in a path $Q=(v_1,v_2,v_3)$. The even cycle $C_4$ we create from $C_1$ and $C_2$ is $(v_1,v_8,v_9,v_{10},v_{11},v_3,v_4,v_5,v_6,v_7,v_1)$.
 Let us consider the odd cycle $C_3=(v_4,v_5,v_6,v_{12},v_9,v_{10},v_{11},v_4)$ that does not use any edge of $Q$.  Following the notation of the proof, we decompose it as the concatenation of $Q_1=(v_4,v_5,v_6)$, $P_1=(v_6,v_{12},v_9)$, $Q_2=(v_9,v_{10},v_{11})$ and $P_2=(v_{11},v_4)$. As expected, there exists $i$ (here, $i=2$) such that the walks defined by $C_4$ between the end vertices of $P_i$ do not have the same parity as $P_i$. We can see that $v_4$ belongs to $C_1$ and $v_{11}$ belongs to $C_2$. The even cycle $C_4$ defines two path between $v_4$ and $v_{11}$ and those paths together with $P_i$ creates two odd cycles $\mathscr C$ and $\mathscr C'$: $(v_4,v_{11},v_3,v_4)$ and $(v_4,v_{11},v_{10},v_9,v_8,v_1,v_7,v_6,v_5,v_4)$.

    \begin{figure}[!h]
   \begin{subfigure}{0.48\linewidth}
\[\begin{pspicture}(0,-0.95)(5.5,3.96)
\psline(0.92,0.63)(2.25,-0.13)
\psline(3.58,0.63)(3.58,2.17)
\psline(2.25,2.94)(0.92,2.17)
\psline(0.25,0.25)(2.25,-0.9)
\psline(4.25,0.25)(2.25,-0.9)
\psline(2.25,3.71)(0.25,2.56)
\psline(0.25,0.25)(0.25,2.56)
\psline(4.25,0.25)(4.25,2.56)
\psline(2.25,3.71)(4.25,2.56)
\psline(0.25,0.25)(1.58,1.02)
\psline(2.25,3.71)(2.25,2.17)
\psline(4.25,0.25)(2.92,1.02)
\psline(1.58,1.02)(2.25,2.17)
\psline(1.58,1.02)(2.92,1.02)
\psline(2.25,2.17)(2.92,1.02)

\pscircle[linecolor=black, fillstyle=solid, fillcolor=white](0.25,0.25){0.25}
\pscircle[linecolor=black, fillstyle=solid, fillcolor=white](2.25,3.71){0.25}
\pscircle[linecolor=black, fillstyle=solid, fillcolor=white](4.25,0.25){0.25}
\pscircle[linecolor=black, fillstyle=solid, fillcolor=white](0.92,0.63){0.25}
\pscircle[linecolor=black, fillstyle=solid, fillcolor=white](3.58,0.63){0.25}
\pscircle[linecolor=black, fillstyle=solid, fillcolor=white](2.25,2.94){0.25}
\pscircle[linecolor=black, fillstyle=solid, fillcolor=white](1.58,1.02){0.25}
\pscircle[linecolor=black, fillstyle=solid, fillcolor=white](2.92,1.02){0.25}
\pscircle[linecolor=black, fillstyle=solid, fillcolor=white](2.25,2.17){0.25}
\pscircle[linecolor=black, fillstyle=solid, fillcolor=white](2.25,-0.13){0.25}
\pscircle[linecolor=black, fillstyle=solid, fillcolor=white](0.92,2.17){0.25}
\pscircle[linecolor=black, fillstyle=solid, fillcolor=white](3.58,2.17){0.25}
\pscircle[linecolor=black, fillstyle=solid, fillcolor=white](2.25,-0.9){0.25}
\pscircle[linecolor=black, fillstyle=solid, fillcolor=white](0.25,2.56){0.25}
\pscircle[linecolor=black, fillstyle=solid, fillcolor=white](4.25,2.56){0.25}

\rput(0.25,0.25){$v_6$}
\rput(2.25,3.71){$v_1$}
\rput(4.25,0.25){$v_9$}
\rput(0.92,0.63){$v_5$}
\rput(3.58,0.63){$v_{10}$}
\rput(2.25,2.94){$v_2$}
\rput(1.58,1.02){$v_4$}
\rput(2.92,1.02){$v_{11}$}
\rput(2.25,2.17){$v_3$}
\rput(2.25,-0.13){$v_{14}$}
\rput(0.92,2.17){$v_{13}$}
\rput(3.58,2.17){$v_{15}$}
\rput(2.25,-0.9){$v_{12}$}
\rput(0.25,2.56){$v_7$}
\rput(4.25,2.56){$v_8$}

\end{pspicture}\]
\caption{An example of graph with no \\ stubborn edge.}
\label{fignostub}
\end{subfigure}
   \begin{subfigure}{0.48\linewidth}
\[\begin{pspicture}(0,-0.95)(5.5,3.96)
\psline[linecolor=blue](0.92,0.63)(2.25,-0.13)
\colorprint{\psline[linecolor=blue, linewidth=3pt](0.92,0.63)(2.25,-0.13)
\psline[linecolor=white](0.92,0.63)(2.25,-0.13)}
\psline(3.58,0.63)(3.58,2.17)
\psline[linecolor=red](2.25,2.94)(0.92,2.17)
\psline[linecolor=red](0.25,0.25)(2.25,-0.9)
\psline[linecolor=blue](4.25,0.25)(2.25,-0.9)
\colorprint{\psline[linecolor=blue, linewidth=3pt](4.25,0.25)(2.25,-0.9)
\psline[linecolor=white](4.25,0.25)(2.25,-0.9)}
\psline[linecolor=blue](2.25,3.71)(0.25,2.56)
\colorprint{\psline[linecolor=blue, linewidth=3pt](2.25,3.71)(0.25,2.56)
\psline[linecolor=white](2.25,3.71)(0.25,2.56)}
\psline[linecolor=red](0.25,0.25)(0.25,2.56)
\psline[linecolor=gray, linestyle=dashed](4.25,0.25)(4.25,2.56)
\psline[linecolor=gray, linestyle=dashed](2.25,3.71)(4.25,2.56)
\psline[linecolor=blue](0.25,0.25)(0.92,0.63)
\colorprint{\psline[linecolor=blue, linewidth=3pt](0.25,0.25)(0.92,0.63)
\psline[linecolor=white](0.25,0.25)(0.92,0.63)}
\psline[linecolor=red](0.92,0.63)(1.58,1.02)
\psline[linecolor=red](2.25,3.71)(2.25,2.94)
\psline[linecolor=blue](2.25,2.17)(2.25,2.94)
\colorprint{\psline[linecolor=blue, linewidth=3pt](2.25,2.17)(2.25,2.94)
\psline[linecolor=white](2.25,2.17)(2.25,2.94)}
\psline[linecolor=gray, linestyle=dashed](4.25,0.25)(3.58,0.63)
\psline[linecolor=gray, linestyle=dashed](2.92,1.02)(3.58,0.63)
\psline[linecolor=red](1.58,1.02)(2.25,2.17)
\psline[linecolor=gray, linestyle=dashed](1.58,1.02)(2.92,1.02)
\psline[linecolor=gray, linestyle=dashed](2.25,2.17)(2.92,1.02)

\pscircle[linecolor=black, fillstyle=solid, fillcolor=white](0.25,0.25){0.25}
\pscircle[linecolor=black, fillstyle=solid, fillcolor=white](2.25,3.71){0.25}
\pscircle[linecolor=black, fillstyle=solid, fillcolor=white](4.25,0.25){0.25}
\pscircle[linecolor=black, fillstyle=solid, fillcolor=white](0.92,0.63){0.25}
\pscircle[linecolor=black, fillstyle=solid, fillcolor=white](3.58,0.63){0.25}
\pscircle[linecolor=black, fillstyle=solid, fillcolor=white](2.25,2.94){0.25}
\pscircle[linecolor=black, fillstyle=solid, fillcolor=white](1.58,1.02){0.25}
\pscircle[linecolor=black, fillstyle=solid, fillcolor=white](2.92,1.02){0.25}
\pscircle[linecolor=black, fillstyle=solid, fillcolor=white](2.25,2.17){0.25}
\pscircle[linecolor=black, fillstyle=solid, fillcolor=white](2.25,-0.13){0.25}
\pscircle[linecolor=black, fillstyle=solid, fillcolor=white](0.92,2.17){0.25}
\pscircle[linecolor=black, fillstyle=solid, fillcolor=white](3.58,2.17){0.25}
\pscircle[linecolor=black, fillstyle=solid, fillcolor=white](2.25,-0.9){0.25}
\pscircle[linecolor=black, fillstyle=solid, fillcolor=white](0.25,2.56){0.25}
\pscircle[linecolor=black, fillstyle=solid, fillcolor=white](4.25,2.56){0.25}

\rput(0.25,0.25){$v_6$}
\rput(2.25,3.71){$v_1$}
\rput(4.25,0.25){$v_9$}
\rput(0.92,0.63){$v_5$}
\rput(3.58,0.63){$v_{10}$}
\rput(2.25,2.94){$v_2$}
\rput(1.58,1.02){$v_4$}
\rput(2.92,1.02){$v_{11}$}
\rput(2.25,2.17){$v_3$}
\rput(2.25,-0.13){$v_{14}$}
\rput(0.92,2.17){$v_{13}$}
\rput(3.58,2.17){$v_{15}$}
\rput(2.25,-0.9){$v_{12}$}
\rput(0.25,2.56){$v_7$}
\rput(4.25,2.56){$v_8$}

\end{pspicture}\]
\caption{The first step of the colouring algorithm.}
\label{fignostub2}
\end{subfigure}

\begin{subfigure}{0.48\linewidth}
\[\begin{pspicture}(0,-0.95)(5.5,3.96)
\psline[linecolor=gray, linestyle=dashed](0.92,0.63)(2.25,-0.13)
\psline[linecolor=blue](3.58,0.63)(3.58,2.17)
\colorprint{\psline[linecolor=blue, linewidth=3pt](3.58,0.63)(3.58,2.17)
\psline[linecolor=white](3.58,0.63)(3.58,2.17)}
\psline[linecolor=gray, linestyle=dashed](2.25,2.94)(0.92,2.17)
\psline[linecolor=gray, linestyle=dashed](0.25,0.25)(2.25,-0.9)
\psline[linecolor=gray, linestyle=dashed](4.25,0.25)(2.25,-0.9)
\psline[linecolor=gray, linestyle=dashed](2.25,3.71)(0.25,2.56)
\psline[linecolor=gray, linestyle=dashed](0.25,0.25)(0.25,2.56)
\psline[linecolor=red](4.25,0.25)(4.25,2.56)
\psline[linecolor=blue](2.25,3.71)(4.25,2.56)
\colorprint{\psline[linecolor=blue, linewidth=3pt](2.25,3.71)(4.25,2.56)
\psline[linecolor=white](2.25,3.71)(4.25,2.56)}
\psline[linecolor=gray, linestyle=dashed](0.25,0.25)(0.92,0.63)
\psline[linecolor=gray, linestyle=dashed](0.92,0.63)(1.58,1.02)
\psline[linecolor=red](2.25,3.71)(2.25,2.94)
\psline[linecolor=blue](2.25,2.17)(2.25,2.94)
\colorprint{\psline[linecolor=blue, linewidth=3pt](2.25,2.17)(2.25,2.94)
\psline[linecolor=white](2.25,2.17)(2.25,2.94)}
\psline[linecolor=blue](4.25,0.25)(3.58,0.63)
\colorprint{\psline[linecolor=blue, linewidth=3pt](4.25,0.25)(3.58,0.63)
\psline[linecolor=white](4.25,0.25)(3.58,0.63)}
\psline[linecolor=red](2.92,1.02)(3.58,0.63)
\psline[linecolor=gray, linestyle=dashed](1.58,1.02)(2.25,2.17)
\psline[linecolor=blue](1.58,1.02)(2.92,1.02)
\colorprint{\psline[linecolor=blue, linewidth=3pt](1.58,1.02)(2.92,1.02)
\psline[linecolor=white](1.58,1.02)(2.92,1.02)}
\psline[linecolor=red](2.25,2.17)(2.92,1.02)

\pscircle[linecolor=black, fillstyle=solid, fillcolor=white](0.25,0.25){0.25}
\pscircle[linecolor=black, fillstyle=solid, fillcolor=white](2.25,3.71){0.25}
\pscircle[linecolor=black, fillstyle=solid, fillcolor=white](4.25,0.25){0.25}
\pscircle[linecolor=black, fillstyle=solid, fillcolor=white](0.92,0.63){0.25}
\pscircle[linecolor=black, fillstyle=solid, fillcolor=white](3.58,0.63){0.25}
\pscircle[linecolor=black, fillstyle=solid, fillcolor=white](2.25,2.94){0.25}
\pscircle[linecolor=black, fillstyle=solid, fillcolor=white](1.58,1.02){0.25}
\pscircle[linecolor=black, fillstyle=solid, fillcolor=white](2.92,1.02){0.25}
\pscircle[linecolor=black, fillstyle=solid, fillcolor=white](2.25,2.17){0.25}
\pscircle[linecolor=black, fillstyle=solid, fillcolor=white](2.25,-0.13){0.25}
\pscircle[linecolor=black, fillstyle=solid, fillcolor=white](0.92,2.17){0.25}
\pscircle[linecolor=black, fillstyle=solid, fillcolor=white](3.58,2.17){0.25}
\pscircle[linecolor=black, fillstyle=solid, fillcolor=white](2.25,-0.9){0.25}
\pscircle[linecolor=black, fillstyle=solid, fillcolor=white](0.25,2.56){0.25}
\pscircle[linecolor=black, fillstyle=solid, fillcolor=white](4.25,2.56){0.25}

\rput(0.25,0.25){$v_6$}
\rput(2.25,3.71){$v_1$}
\rput(4.25,0.25){$v_9$}
\rput(0.92,0.63){$v_5$}
\rput(3.58,0.63){$v_{10}$}
\rput(2.25,2.94){$v_2$}
\rput(1.58,1.02){$v_4$}
\rput(2.92,1.02){$v_{11}$}
\rput(2.25,2.17){$v_3$}
\rput(2.25,-0.13){$v_{14}$}
\rput(0.92,2.17){$v_{13}$}
\rput(3.58,2.17){$v_{15}$}
\rput(2.25,-0.9){$v_{12}$}
\rput(0.25,2.56){$v_7$}
\rput(4.25,2.56){$v_8$}

\end{pspicture}\]
\caption{The colouring provided by the second step of the algorithm is consistent with the one provided by the previous step.}
\label{fignostub3}
\end{subfigure}
\begin{subfigure}{0.48\linewidth}
\[\begin{pspicture}(0,-0.95)(5.5,3.96)
\psline[linecolor=blue](0.92,0.63)(2.25,-0.13)
\colorprint{\psline[linecolor=blue, linewidth=3pt](0.92,0.63)(2.25,-0.13)
\psline[linecolor=white](0.92,0.63)(2.25,-0.13)}
\psline[linecolor=blue](3.58,0.63)(3.58,2.17)
\colorprint{\psline[linecolor=blue, linewidth=3pt](3.58,0.63)(3.58,2.17)
\psline[linecolor=white](3.58,0.63)(3.58,2.17)}
\psline[linecolor=red](2.25,2.94)(0.92,2.17)
\psline[linecolor=red](0.25,0.25)(2.25,-0.9)
\psline[linecolor=blue](4.25,0.25)(2.25,-0.9)
\colorprint{\psline[linecolor=blue, linewidth=3pt](4.25,0.25)(2.25,-0.9)
\psline[linecolor=white](4.25,0.25)(2.25,-0.9)}
\psline[linecolor=blue](2.25,3.71)(0.25,2.56)
\colorprint{\psline[linecolor=blue, linewidth=3pt](2.25,3.71)(0.25,2.56)
\psline[linecolor=white](2.25,3.71)(0.25,2.56)}
\psline[linecolor=red](0.25,0.25)(0.25,2.56)
\psline[linecolor=red](4.25,0.25)(4.25,2.56)
\psline[linecolor=blue](2.25,3.71)(4.25,2.56)
\colorprint{\psline[linecolor=blue, linewidth=3pt](2.25,3.71)(4.25,2.56)
\psline[linecolor=white](2.25,3.71)(4.25,2.56)}
\psline[linecolor=blue](0.25,0.25)(0.92,0.63)
\colorprint{\psline[linecolor=blue, linewidth=3pt](0.25,0.25)(0.92,0.63)
\psline[linecolor=white](0.25,0.25)(0.92,0.63)}
\psline[linecolor=red](0.92,0.63)(1.58,1.02)
\psline[linecolor=red](2.25,3.71)(2.25,2.94)
\psline[linecolor=blue](2.25,2.17)(2.25,2.94)
\colorprint{\psline[linecolor=blue, linewidth=3pt](2.25,2.17)(2.25,2.94)
\psline[linecolor=white](2.25,2.17)(2.25,2.94)}
\psline[linecolor=blue](4.25,0.25)(3.58,0.63)
\colorprint{\psline[linecolor=blue, linewidth=3pt](4.25,0.25)(3.58,0.63)
\psline[linecolor=white](4.25,0.25)(3.58,0.63)}
\psline[linecolor=red](2.92,1.02)(3.58,0.63)
\psline[linecolor=red](1.58,1.02)(2.25,2.17)
\psline[linecolor=blue](1.58,1.02)(2.92,1.02)
\colorprint{\psline[linecolor=blue, linewidth=3pt](1.58,1.02)(2.92,1.02)
\psline[linecolor=white](1.58,1.02)(2.92,1.02)}
\psline[linecolor=red](2.25,2.17)(2.92,1.02)

\pscircle[linecolor=black, fillstyle=solid, fillcolor=white](0.25,0.25){0.25}
\pscircle[linecolor=black, fillstyle=solid, fillcolor=white](2.25,3.71){0.25}
\pscircle[linecolor=black, fillstyle=solid, fillcolor=white](4.25,0.25){0.25}
\pscircle[linecolor=black, fillstyle=solid, fillcolor=white](0.92,0.63){0.25}
\pscircle[linecolor=black, fillstyle=solid, fillcolor=white](3.58,0.63){0.25}
\pscircle[linecolor=black, fillstyle=solid, fillcolor=white](2.25,2.94){0.25}
\pscircle[linecolor=black, fillstyle=solid, fillcolor=white](1.58,1.02){0.25}
\pscircle[linecolor=black, fillstyle=solid, fillcolor=white](2.92,1.02){0.25}
\pscircle[linecolor=black, fillstyle=solid, fillcolor=white](2.25,2.17){0.25}
\pscircle[linecolor=black, fillstyle=solid, fillcolor=white](2.25,-0.13){0.25}
\pscircle[linecolor=black, fillstyle=solid, fillcolor=white](0.92,2.17){0.25}
\pscircle[linecolor=black, fillstyle=solid, fillcolor=white](3.58,2.17){0.25}
\pscircle[linecolor=black, fillstyle=solid, fillcolor=white](2.25,-0.9){0.25}
\pscircle[linecolor=black, fillstyle=solid, fillcolor=white](0.25,2.56){0.25}
\pscircle[linecolor=black, fillstyle=solid, fillcolor=white](4.25,2.56){0.25}

\rput(0.25,0.25){$v_6$}
\rput(2.25,3.71){$v_1$}
\rput(4.25,0.25){$v_9$}
\rput(0.92,0.63){$v_5$}
\rput(3.58,0.63){$v_{10}$}
\rput(2.25,2.94){$v_2$}
\rput(1.58,1.02){$v_4$}
\rput(2.92,1.02){$v_{11}$}
\rput(2.25,2.17){$v_3$}
\rput(2.25,-0.13){$v_{14}$}
\rput(0.92,2.17){$v_{13}$}
\rput(3.58,2.17){$v_{15}$}
\rput(2.25,-0.9){$v_{12}$}
\rput(0.25,2.56){$v_7$}
\rput(4.25,2.56){$v_8$}

\end{pspicture}\]
\caption{The resulting colouring connects the graph.}
\label{fignostub4}
\end{subfigure}
\caption{An example of how to construct a connecting 2-edge-colouring of a graph with no stubborn edge.  Here, $C_1=(v_1,v_2,v_3,v_4,v_5,v_6,v_7,v_1)$, $C_2=(v_1,v_8,v_9,v_{10},v_{11},v_3,v_2,v_1)$ and thus, $Q=(v_1,v_2,v_3)$.} 
\end{figure}
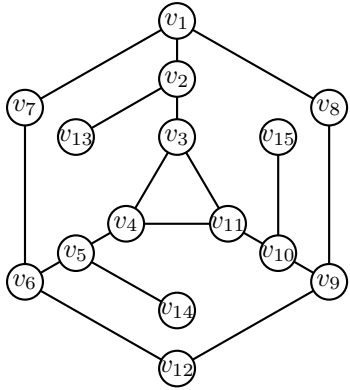
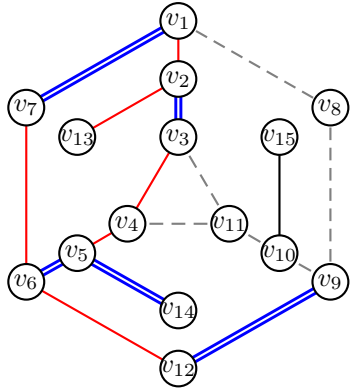
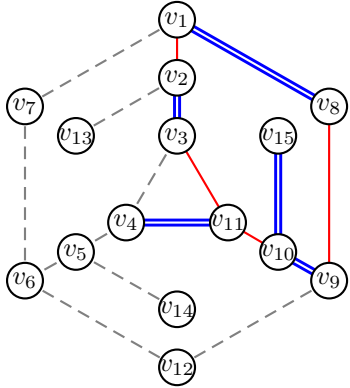
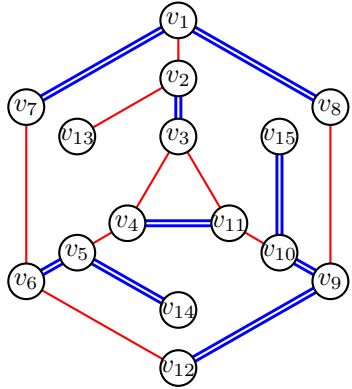
 
\vspace{0.35cm}
We want to extend the construction used in the proof of Theorem \ref{2disjcycles} to a case where the graph has no stubborn edge but where the cycles $C_1$ and $C_2$ intersect in a path $Q$. Unlike in the previous proof, there might therefore be vertices that one cannot reach from $u_i$ without using edges from $Q$, that belong to both $C_1$ and $C_2$. For example, in Figure \ref{fignostub}, one needs to use edges from $Q$ to go from $v_4$ to $v_2$ or $v_{13}$.
 
 Just like in the proof of Theorem \ref{2disjcycles}, we create the graph $H$ from $G$ by removing the edges of $C_2$ and $P_i$ but we do not remove the edges of $Q$. We then use colour 1 on all the edges that are at even distance of $u_i$ in $H$ and colour 2 and all the edges at odd distance. We do not colour the edges that cannot be reached from $u_i$ in $H$. This first step is illustrated in Figure \ref{fignostub2} (where red is colour 1, blue is colour 2 and $u_i=v_4$). Just like in the proof of Theorem \ref{2disjcycles}, this construction creates an odd properly coloured closed walk $W_1$ in $H$ with pivot $u_i$.
 
 We then create the graph $H'$ from $G$ by removing all the edges that have already be coloured but here again, we do not remove the edges of $Q$. We use colour 2 on all the edges at even distance from $u_i$ in $H'$ and colour 1 on all the edges at odd distance from $u_i$. This second step is illustrated in Figure \ref{fignostub3}. Since $H'$ contains the odd cycle $C_2$, our proof that this steps creates an odd properly coloured closed walk $W_2$ in $H'$ with pivot $u_i$ still holds.
 
 The same reasoning as before applies to prove that such a colouring would connect the graph. However, this construction colours twice the edges of $Q$ and is therefore only possible if those two colourings are compatible. Hence, we must prove for every edge $e\in Q$ that the distance from $u_i$ to $e$ in $H$ has different parity than the distance from $u_i$ to $e$ in $H'$ in order to ensure that the two steps give the same colour to $e$.
  
We denote by $d_{G'}$ the distance in a subgraph $G'$ of $G$. We denote by $a\equiv b$ the fact that $a$ and $b$ have same parity. This comes down to saying that $a+b\equiv 0$. We want to prove that $d_{H}(u_i,e)$ and $d_{H'}(u_i,e)$ have different parity. In other words, we want to prove that $d_{H}(u_i,e)+ d_{H'}(u_i,e)\equiv 1$.

For every vertex $u$ and edge $e$ of a cycle $C$, $C$ defines two paths leading from $u$ to $e$ that do not use the edge $e$. Note that $C$ is the concatenation of those two paths and the edge $e$ of length 1. Hence, if $C$ is odd, we know that the length of those two paths have same parity and thus, the same parity as $d_C(u,e)$. This leads us to a few useful observations:

\begin{itemize}
 \item For every pair of vertices $u$ and $v$ and edge $e$ of an odd cycle $C$, $d_C(u,e)\equiv d_{C-e}(u,v)+d_C(v,e)$. Indeed, one of the paths that $C$ defines between $u$ and $e$ goes through $v$ and we know that its length has the parity of $d_C(u,e)$. This path is the concatenation of the shortest (and unique) path between $u$ and $v$ in $C-e$ and a path that has the parity of $d_C(v,e)$. The claim follows.
 \item For every edge $e$ of $Q$, $d_H(u_i,e)\equiv d_{C_1}(u_i,e)$: let $P$ be the shortest path in $H$ between $u_i$ and $e$ and let $v_e$ be the endpoint of $P$ in $e$. We know that $C_1$ defines a walk $W$ from $u_i$ to $v_e$ that does not use $e$ and whose length has the parity of $d_{C_1}(u_i,e)$. If $P$ (and thus, $d_H(u_i,e)$) does not have the same parity as $W$, then $P$ and $W$ form an odd cycle in $H$ that does not use $e$, and whose intersection with $C_2$ is a proper subpath of $P$, contradicting the hypothesis that $C_1$ and $C_2$ are odd cycles that intersect in a minimal path.
 \item We can prove similarly that $d_{H'(u_i,e)}\equiv d_{C_2\cup P_i}(u_i,e)$ by considering the walks that the odd closed walk $P_i+C_2+P_i$ defines between $u_i$ and the endpoint of $e$.
\end{itemize}

Let $e$ be an edge of $Q$. Our goal is now to prove that \\ $d_{C_1}(u_i,e)+ d_{C_2\cup P_i}(u_i,e)\equiv 1$.

Let $a$ be an end vertex of $Q$. We also know that $a$ belongs to one of $\mathscr C$ and $\mathscr C'$. By symmetry, we may assume that $a\in \mathscr C$. Let $l_{\mathscr C}$ and $l_{P_i}$ be the length of $\mathscr C$ and $P_i$ respectively.

Note that $d_{C_1}(u_i,e)+ d_{C_2\cup P_i}(u_i,e)\equiv d_{C_1}(u_i,e)+ l_{P_i} +d_{C_2}(v_i,e)$ 

$\equiv d_{C_1-e}(u_i,a)+d_{C_1}(a,e)+ l_{P_i} +d_{C_2-e}(v_i,a)+d_{C_2}(a,e)$.

Since the two paths between a given vertex and a given edge in an odd cycle have same parity, $d_{C_1}(a,e)\equiv d_Q(a,e)\equiv d_{C_2}(a,e)$ and thus, $d_{C_1}(a,e)+d_{C_2}(a,e)\equiv 0$. This leaves us with $d_{C_1}(u_i,e)+ d_{C_2+P_i}(u_i,e)\equiv d_{C_1-e}(u_i,a)+ l_{P_i} +d_{C_2-e}(v_i,a)$.

Finally, observe that the concatenation of $P_i$, the walk from $u_i$ to $a$ in $C_1-e$ and the walk from $a$ to $v_i$ in $C_2-e$ is exactly the odd cycle $\mathscr C$. Thus $d_{H}(u_i,e)+ d_{H'}(u_i,e)\equiv l_{\mathscr C}\equiv 1$, which concludes the proof.\qedhere

\end{proof}

Note that the proofs of Theorem \ref{2disjcycles} and \ref{thmnostubborn} are constructive and if a connected graphs has no stubborn edge, a connecting 2-edge-colouring can be constructed in polynomial time.

\subsection{Characterization of the graphs that can be connected with two colours}

\begin{theorem}\label{thmnonbipartite}
 A connected non-bipartite graph $G$ can be connected with two colours if and only if there exists an $\mathscr S$-free component $\mathcal K$ of $G$ such that $G\setminus \mathcal K$ is empty or can be made 2-edge-connected by adding at most one edge.
\end{theorem}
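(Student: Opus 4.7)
The plan is to prove the two implications separately, reducing each to earlier results on $\mathscr S$-free components. For the $(\Rightarrow)$ direction the key tool is Proposition \ref{propCN}: in any 2-edge-coloured non-bipartite graph there is an $\mathscr S$-free component $\mathcal K$ through which no properly coloured walk between two outside vertices can pass, and we will show that $G \setminus \mathcal K$ must then be a bipartite graph that is connected by the induced colouring, so Theorem \ref{thmbipartite} delivers the structural conclusion. For the $(\Leftarrow)$ direction we construct a connecting colouring by gluing the colouring that Theorem \ref{thmbipartite} provides on $G \setminus \mathcal K$ with a construction inside $\mathcal K$ that follows the template of Theorem \ref{thmnostubborn}.

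For the forward direction, suppose $G$ admits a connecting 2-edge-colouring $c$ and let $\mathcal K$ be the $\mathscr S$-free component supplied by Proposition \ref{propCN}. If $\mathcal K = V(G)$ we are done, so assume otherwise; then by Theorem \ref{thmflexcompo} exactly two stubborn edges of $G$ have an endpoint in $\mathcal K$, and both are deleted when $\mathcal K$ is removed. Since every odd closed walk of $G$ uses every stubborn edge (by the very definition of stubbornness), $G \setminus \mathcal K$ contains no odd closed walk, so it is bipartite. Because $c$ connects $G$ and, by Proposition \ref{propCN}, the walks between vertices of $V \setminus \mathcal K$ already avoid $\mathcal K$, the restriction of $c$ connects all pairs of $V \setminus \mathcal K$ by walks lying in $G \setminus \mathcal K$; in particular $G \setminus \mathcal K$ is connected. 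The $\Rightarrow$ direction of Theorem \ref{thmbipartite}, applied to this connected bipartite properly connected graph, then yields that it can be made 2-edge-connected by adding at most one edge.

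For the backward direction, assume $\mathcal K$ satisfies the hypothesis. If $\mathcal K = V(G)$, then by Theorem \ref{thmflexcompo} $G$ has at most one stubborn edge: if it has none, Theorem \ref{thmnostubborn} directly provides a connecting 2-edge-colouring; if it has a single stubborn edge $e$, we take an odd cycle $C$ through $e$ and adapt the two-step BFS construction of Theorem \ref{2disjcycles}, rooted at an endpoint of $e$ and using $C$ as the central odd structure, with the parity analysis of Theorem \ref{thmnostubborn} to verify correctness. If $\mathcal K \neq V(G)$, the argument of the forward direction applies verbatim to show that $G \setminus \mathcal K$ is a connected bipartite graph to which Theorem \ref{thmbipartite} applies; we thus obtain a connecting 2-colouring $c_1$ of $G \setminus \mathcal K$. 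Let $e_1 = u_1v_1$ and $e_2 = u_2v_2$ be the two stubborn edges linking $\mathcal K$ to $V \setminus \mathcal K$, with $u_i \in \mathcal K$. We then colour the edges of $\mathcal K$ together with $e_1, e_2$ by the two-step BFS construction of Theorem \ref{thmnostubborn} rooted at $u_1$, choosing the colours of $e_1$ and $e_2$ to agree with $c_1$ at their outer endpoints. Lemma \ref{lemmaparity}, combined with a parity check analogous to the one closing the proof of Theorem \ref{thmnostubborn}, guarantees that the combined colouring produces an odd properly coloured closed walk through both $\mathcal K$ and $V \setminus \mathcal K$, which serves as a pivot; the verification that every pair of vertices of $G$ is properly connected then follows exactly as in that proof.

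The main obstacle will be the compatibility check at the boundary $\{e_1, e_2\}$ in the second case of the backward direction. The colour of each $e_i$ must be simultaneously consistent with the parity of the BFS layer in which $u_i$ lies inside $\mathcal K$, with the colour that $c_1$ assigns to its outer endpoint $v_i$, and with the parity constraint on the number of monochromatic consecutive edge-pairs in $e_1 W e_2$ for any walk $W$ through $\mathcal K$ (as controlled by Lemma \ref{lemmaparity}). As in Theorem \ref{thmnostubborn}, the combined freedom to globally swap the two colours, to exchange the roles of $e_1$ and $e_2$, and to choose the initial colour of the BFS inside $\mathcal K$ should provide just enough flexibility to satisfy all these constraints simultaneously.
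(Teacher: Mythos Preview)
Your forward direction is correct and essentially identical to the paper's: Proposition~\ref{propCN} supplies $\mathcal K$, the restriction of the colouring connects $G\setminus\mathcal K$, this graph is bipartite because it misses at least one stubborn edge, and Theorem~\ref{thmbipartite} closes the argument.

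The backward direction, however, has a genuine gap. You want to run a ``two-step BFS'' inside $\mathcal K$ in the style of Theorems~\ref{2disjcycles} and~\ref{thmnostubborn}, but those constructions are driven by odd cycles, and when $G$ has stubborn edges the component $\mathcal K$ is bipartite: it contains no odd cycle at all. So you cannot simply transplant that construction. What the paper actually uses is a different structural fact: since no edge of $\mathcal K$ lies in every odd closed walk (Proposition~\ref{propsevse}), Menger's theorem yields two \emph{edge-disjoint} walks $W_1,W_2$ in $\mathcal K$ between $u_1$ and $u_2$, necessarily of the same parity. The first BFS is run in $\mathcal K\setminus E(W_2)$ from $u_2$, the second in what remains from $u_1$; each pass is guaranteed to reach the other endpoint because $W_1$ (resp.\ $W_2$) survives. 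The single-stubborn-edge case is handled by the same Menger argument applied to $G-e$. Your proposal never isolates this pair of edge-disjoint walks, and without it the two-step BFS has nothing to stand on.

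The boundary compatibility is also not a matter of ``just enough flexibility''. The paper fixes it deterministically: pick a properly coloured walk $W$ from $w_1$ to $w_2$ in $G\setminus\mathcal K$, colour $e_1,e_2$ so that $e_1We_2$ is proper, and then choose the starting colour of each BFS so that it is compatible with the already-coloured $e_2$ (first pass) and $e_1$ (second pass). This immediately gives two odd properly coloured closed walks $\mathscr C_1=e_1We_2W_1'$ and $\mathscr C_2=e_2We_1W_2'$ with pivots $u_1$ and $u_2$, after which the case analysis for connecting any pair of vertices is routine. No appeal to Lemma~\ref{lemmaparity} or to a parity computation in the spirit of Theorem~\ref{thmnostubborn} is needed here; the hard parity work was specific to the overlapping-odd-cycles situation and does not arise in this proof.
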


\begin{proof}
 The fact that this condition is necessary follows quickly from Proposition \ref{propCN}. Indeed, let $G$ be a 2-edge-coloured graph. We know that there exists an $\mathscr S$-free component $\mathcal K$ of $G$ such that no two vertices $u$ and $v$ of $G\setminus\mathcal K$ can be connected by a properly coloured walk using a vertex of $\mathcal K$, which means that $G\setminus \mathcal K$ has to be properly connected. Since the stubborn edges that disconnect $\mathcal K$ from the rest of the graph have an endpoint in $\mathcal K$, they do not belong to $G\setminus \mathcal K$, which means that $G\setminus \mathcal K$ contains no odd closed walk and is therefore bipartite. Theorem \ref{thmbipartite} thus implies that the condition of Theorem \ref{thmnonbipartite} is necessary.
 
 Let us now prove that this condition is also sufficient. Let $\mathcal K$ be an $\mathscr S$-free component of $G$ such that $G\setminus \mathcal K$ is empty or can be made 2-edge-connected by adding at most one edge. If $G$ has no stubborn edge (in which case we have $\mathcal K=V(G)$ and $G\setminus\mathcal K$ is empty), Theorem \ref{thmnostubborn} implies that the graph can be connected with two colours. Otherwise, by Theorem \ref{thmflexcompo}, we know that $\mathcal K$ contains exactly two vertices $u_1$ and $u_2$ that are the endpoints of stubborn edges $e_1$ and $e_2$. We call $w_1$ and $w_2$ the other endpoint of $e_1$ and $e_2$. Note that it may happen that the graph only contains one stubborn edge, in which case $u_2=w_1$, $w_2=u_1$ and $e_1=e_2$.
 
 Since $G\setminus \mathcal K$ does not contain the stubborn edges that connect $\mathcal K$ to $G\setminus\mathcal K$, it cannot contain any odd closed walk and is therefore bipartite. Hence, by Theorem \ref{thmbipartite}, we know that $G\setminus\mathcal K$ can be connected with only two colours. We colour the edges of $G\setminus\mathcal K$ according to such a colouring and we want to prove that we can extend it to connect the entire graph $G$. We refer the reader to Figures \ref{fig1se} and \ref{fig2se} for an illustration of our construction in the cases where the graph has one or several stubborn edges respectively.
 
 If $G\setminus\mathcal K$ is non-empty, we know that there is a properly coloured walk $W$ between $w_1$ and $w_2$ and we colour $e_1$ and $e_2$ so that $e_1We_2$ is properly coloured too. If $G\setminus\mathcal K$ is empty, this means that $G$ has only one stubborn edge and we pick its colour arbitrarily. This step is illustrated in Figures \ref{fig1se2} and \ref{fig2se2}.
 
 If $G$ has several stubborn edges, then every odd closed walk consists of a walk in $\mathcal K$ between $u_1$ and $u_2$, $e_2$, a walk in $G\setminus\mathcal K$ between $w_2$ and $w_1$ and $e_1$. Since no edge of $\mathcal K$ appears in every odd closed walk (Proposition \ref{propsevse}), we know by Menger's theorem that there exists two edge-disjoint walks $W_1$ and $W_2$ in $\mathcal K$ that connect $u_1$ and $u_2$. Since they do not use stubborn edges, $W_1$ and $W_2$ cannot form an odd closed walk and therefore must have same parity. If $G$ has only one stubborn edge, an odd closed walk in $G$ consists of the stubborn edge $e_1$ and an even walk between $u_1$ and $u_2$ that avoids $e_1$. Here again, since there is no other stubborn edge, we find that there are two edge-disjoint walks $W_1$ and $W_2$ of same parity between $u_1$ and $u_2$ that use no stubborn edge.
 
We create the graph $H$ from $\mathcal K$ (or from $G-e$ if $G$ has only one stubborn edge) by removing all the edges of $W_2$. We colour the edges at even distance from $u_2$ with the opposite colour of the one we used on $e_2$ and we colour the edges at odd distance from $u_2$ with the colour of $e_2$. Since $W_1$ is included in $H$, we know that $u_1$ is reachable from $u_2$ and therefore, that this step of the algorithm creates a properly coloured walk $W'_1$ from $u_2$ to $u_1$ (every shortest path between $u_1$ and $u_2$ in $H$ is actually properly coloured). Also note that we have chosen the colour so that $e_2W'_1$ is properly coloured. This step is illustrated in Figures \ref{fig1se3} and \ref{fig2se3}.

We then create the graph $H'$ by removing all the already coloured edges of the graph. We colour the edges at even distance from $u_1$ with the opposite colour of the one we used on $e_1$ and we colour the edges at odd distance from $u_1$ with the colour of $e_1$. Since $W_2$ is included in $H'$, we know that this step of the algorithm creates another properly coloured walk $W'_2$ from $u_1$ to $u_2$. Here again, $e_1W'_2$ is properly coloured. This step is illustrated in Figures \ref{fig1se4} and \ref{fig2se4}.

If the graphs has several stubborn edges, recall that both $e_1We_2$ and $e_2W'_1$ are properly coloured. Hence, the closed walk $\mathscr C_1=e_1We_2W'_1$ is properly coloured and is odd since it contains the stubborn edge $e_1$ exactly once. 
Similarly, $\mathscr C_2=e_2We_1W'_2$ is odd and properly coloured too. The pivots of $\mathscr C_1$ and $\mathscr C_2$ are respectively $u_1$ and $u_2$. If the graph only has one stubborn edge, we prove similarly that 
$\mathscr C_1=e_1W'_1$ and $\mathscr C_2=e_1W'_2$ are properly coloured odd cycles of respective pivot $u_1$ and $u_2$.

    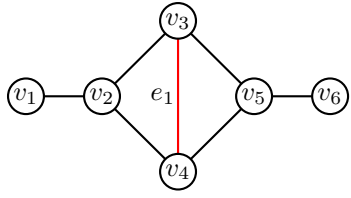
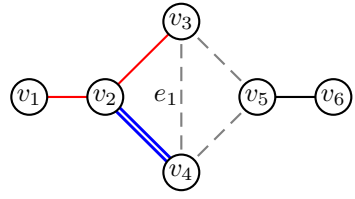
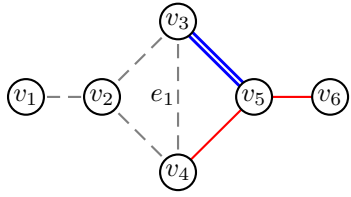
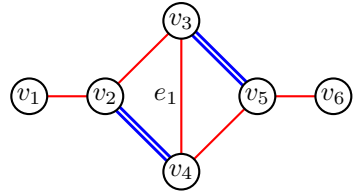
\begin{figure}[!h]
   \begin{subfigure}{0.48\linewidth}
\[\begin{pspicture}(4.5,2.5)

\psline(0.25,1.25)(1.25,1.25)
\psline(1.25,1.25)(2.25,0.25)
\psline(1.25,1.25)(2.25,2.25)
\psline[linecolor=red](2.25,0.25)(2.25,2.25)
\psline(2.25,0.25)(3.25,1.25)
\psline(2.25,2.25)(3.25,1.25)
\psline(3.25,1.25)(4.25,1.25)

 \pscircle[linecolor=black, fillstyle=solid, fillcolor=white](0.25,1.25){0.25}
 \pscircle[linecolor=black, fillstyle=solid, fillcolor=white](1.25,1.25){0.25}
 \pscircle[linecolor=black, fillstyle=solid, fillcolor=white](2.25,0.25){0.25}
 \pscircle[linecolor=black, fillstyle=solid, fillcolor=white](2.25,2.25){0.25}
 \pscircle[linecolor=black, fillstyle=solid, fillcolor=white](3.25,1.25){0.25}
 \pscircle[linecolor=black, fillstyle=solid, fillcolor=white](4.25,1.25){0.25}
 
 \rput(2.05,1.25){$e_1$}
 \rput(0.25,1.25){$v_1$}
 \rput(1.25,1.25){$v_2$}
 \rput(2.25,0.25){$v_4$}
 \rput(2.25,2.25){$v_3$}
 \rput(3.25,1.25){$v_5$}
 \rput(4.25,1.25){$v_6$}

\end{pspicture}\]
\caption{An example of graph with only \\ one stubborn edge $e_1=v_3v_4$. Its \\ removal does not disconnect the \\ graph. We pick its colour arbitrarily.\\ We set $u_1=v_3$ and $u_2=v_4$.}
\label{fig1se2}
\end{subfigure}
   \begin{subfigure}{0.48\linewidth}
\[\begin{pspicture}(4.5,2.5)
\psline[linecolor=red](0.25,1.25)(1.25,1.25)
\psline[linecolor=blue](1.25,1.25)(2.25,0.25)
\colorprint{\psline[linecolor=blue, linewidth=3pt](1.25,1.25)(2.25,0.25)
\psline[linecolor=white](1.25,1.25)(2.25,0.25)}
\psline[linecolor=red](1.25,1.25)(2.25,2.25)
\psline[linecolor=gray, linestyle=dashed](2.25,0.25)(2.25,2.25)
\psline[linecolor=gray, linestyle=dashed](2.25,0.25)(3.25,1.25)
\psline[linecolor=gray, linestyle=dashed](2.25,2.25)(3.25,1.25)
\psline(3.25,1.25)(4.25,1.25)

 \pscircle[linecolor=black, fillstyle=solid, fillcolor=white](0.25,1.25){0.25}
 \pscircle[linecolor=black, fillstyle=solid, fillcolor=white](1.25,1.25){0.25}
 \pscircle[linecolor=black, fillstyle=solid, fillcolor=white](2.25,0.25){0.25}
 \pscircle[linecolor=black, fillstyle=solid, fillcolor=white](2.25,2.25){0.25}
 \pscircle[linecolor=black, fillstyle=solid, fillcolor=white](3.25,1.25){0.25}
 \pscircle[linecolor=black, fillstyle=solid, fillcolor=white](4.25,1.25){0.25}
 
 \rput(0.25,1.25){$v_1$}
 \rput(1.25,1.25){$v_2$}
 \rput(2.25,0.25){$v_4$}
 \rput(2.25,2.25){$v_3$}
 \rput(3.25,1.25){$v_5$}
 \rput(4.25,1.25){$v_6$}
 
 \rput(2.05,1.25){$e_1$}
\end{pspicture}\]
\caption{We set $W_1=(v_3v_2v_4)$ and $W_2=(v_3v_5v_4)$. We create $H$ by removing $e$ and $W_2$ from $G$. Since $e$ is red, the edges at even distance from $u_2=v_4$ must be blue.}
\label{fig1se3}
\end{subfigure}

\begin{subfigure}{0.48\linewidth}
\[\begin{pspicture}(4.5,2.5)

\psline[linecolor=gray, linestyle=dashed](0.25,1.25)(1.25,1.25)
\psline[linecolor=gray, linestyle=dashed](1.25,1.25)(2.25,0.25)
\psline[linecolor=gray, linestyle=dashed](1.25,1.25)(2.25,2.25)
\psline[linecolor=gray, linestyle=dashed](2.25,0.25)(2.25,2.25)
\psline[linecolor=red](2.25,0.25)(3.25,1.25)
\psline[linecolor=blue](2.25,2.25)(3.25,1.25)
\colorprint{\psline[linecolor=blue, linewidth=3pt](2.25,2.25)(3.25,1.25)
\psline[linecolor=white](2.25,2.25)(3.25,1.25)}
\psline[linecolor=red](3.25,1.25)(4.25,1.25)

 \pscircle[linecolor=black, fillstyle=solid, fillcolor=white](0.25,1.25){0.25}
 \pscircle[linecolor=black, fillstyle=solid, fillcolor=white](1.25,1.25){0.25}
 \pscircle[linecolor=black, fillstyle=solid, fillcolor=white](2.25,0.25){0.25}
 \pscircle[linecolor=black, fillstyle=solid, fillcolor=white](2.25,2.25){0.25}
 \pscircle[linecolor=black, fillstyle=solid, fillcolor=white](3.25,1.25){0.25}
 \pscircle[linecolor=black, fillstyle=solid, fillcolor=white](4.25,1.25){0.25}
 
 \rput(2.05,1.25){$e_1$}
 \rput(0.25,1.25){$v_1$}
 \rput(1.25,1.25){$v_2$}
 \rput(2.25,0.25){$v_4$}
 \rput(2.25,2.25){$v_3$}
 \rput(3.25,1.25){$v_5$}
 \rput(4.25,1.25){$v_6$}

\end{pspicture}\]
\caption{We obtain $H'$ by removing all the already-coloured and repeat the same process from $u_1=v_3$.}
\label{fig1se4}
\end{subfigure}
\begin{subfigure}{0.48\linewidth}
\[\begin{pspicture}(4.5,2.5)
\psline[linecolor=red](0.25,1.25)(1.25,1.25)
\psline[linecolor=blue](1.25,1.25)(2.25,0.25)
\colorprint{\psline[linecolor=blue, linewidth=3pt](1.25,1.25)(2.25,0.25)
\psline[linecolor=white](1.25,1.25)(2.25,0.25)}
\psline[linecolor=red](1.25,1.25)(2.25,2.25)
\psline[linecolor=red](2.25,0.25)(2.25,2.25)
\psline[linecolor=red](2.25,0.25)(3.25,1.25)
\psline[linecolor=blue](2.25,2.25)(3.25,1.25)
\colorprint{\psline[linecolor=blue, linewidth=3pt](2.25,2.25)(3.25,1.25)
\psline[linecolor=white](2.25,2.25)(3.25,1.25)}
\psline[linecolor=red](3.25,1.25)(4.25,1.25)

 \pscircle[linecolor=black, fillstyle=solid, fillcolor=white](0.25,1.25){0.25}
 \pscircle[linecolor=black, fillstyle=solid, fillcolor=white](1.25,1.25){0.25}
 \pscircle[linecolor=black, fillstyle=solid, fillcolor=white](2.25,0.25){0.25}
 \pscircle[linecolor=black, fillstyle=solid, fillcolor=white](2.25,2.25){0.25}
 \pscircle[linecolor=black, fillstyle=solid, fillcolor=white](3.25,1.25){0.25}
 \pscircle[linecolor=black, fillstyle=solid, fillcolor=white](4.25,1.25){0.25}
 
 \rput(2.05,1.25){$e_1$}
 \rput(0.25,1.25){$v_1$}
 \rput(1.25,1.25){$v_2$}
 \rput(2.25,0.25){$v_4$}
 \rput(2.25,2.25){$v_3$}
 \rput(3.25,1.25){$v_5$}
 \rput(4.25,1.25){$v_6$}

\end{pspicture}\]
\caption{The resulting colouring connects the graph.}
\label{fig1se5}
\end{subfigure}

\caption{An example of how to connect a graph that has only one stubborn edge with two colours.}
\label{fig1se}
\end{figure}

   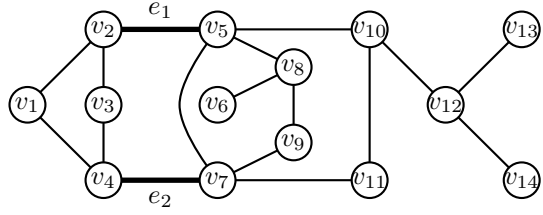
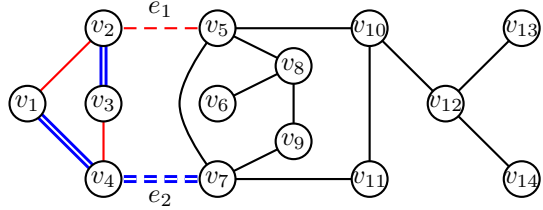
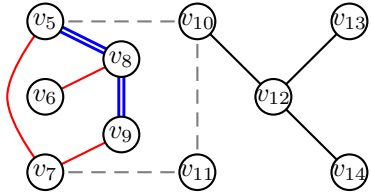
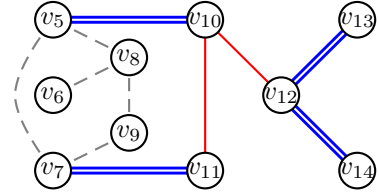
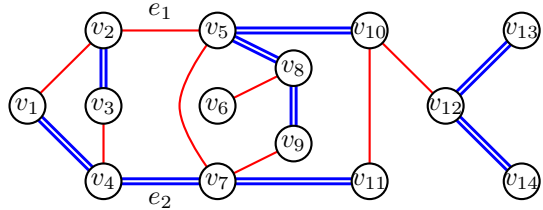
\begin{figure}[!h]
   \begin{subfigure}{\linewidth}
\[\begin{pspicture}(6.5,2.7)
\psline(0.75,0.25)(-0.25,1.25)
\psline(0.75,2.25)(-0.25,1.25)
\psline(0.75,0.25)(0.75,1.25)
\psline[linewidth=2pt](0.75,0.25)(2.25,0.25)
\psline(0.75,1.25)(0.75,2.25)
\psline[linewidth=2pt](0.75,2.25)(2.25,2.25)
\psline(2.25,0.25)(3.25,0.75)
\psline(2.25,0.25)(4.25,0.25)
\pscurve(2.25,0.25)(1.75,1.25)(2.25,2.25)
\psline(2.25,1.25)(3.25,1.75)
\psline(2.25,2.25)(3.25,1.75)
\psline(2.25,2.25)(4.25,2.25)
\psline(3.25,0.75)(3.25,1.75)
\psline(4.25,0.25)(4.25,2.25)
\psline(4.25,2.25)(5.25,1.25)
\psline(5.25,1.25)(6.25,0.25)
\psline(5.25,1.25)(6.25,2.25)

\pscircle[linecolor=black, fillstyle=solid, fillcolor=white](-0.25,1.25){0.25}
 \pscircle[linecolor=black, fillstyle=solid, fillcolor=white](0.75,0.25){0.25}
 \pscircle[linecolor=black, fillstyle=solid, fillcolor=white](0.75,1.25){0.25}
 \pscircle[linecolor=black, fillstyle=solid, fillcolor=white](0.75,2.25){0.25}
 \pscircle[linecolor=black, fillstyle=solid, fillcolor=white](2.25,0.25){0.25}
 \pscircle[linecolor=black, fillstyle=solid, fillcolor=white](2.25,1.25){0.25}
 \pscircle[linecolor=black, fillstyle=solid, fillcolor=white](2.25,2.25){0.25}
 \pscircle[linecolor=black, fillstyle=solid, fillcolor=white](3.25,0.75){0.25}
 \pscircle[linecolor=black, fillstyle=solid, fillcolor=white](3.25,1.75){0.25}
 \pscircle[linecolor=black, fillstyle=solid, fillcolor=white](4.25,0.25){0.25}
 \pscircle[linecolor=black, fillstyle=solid, fillcolor=white](4.25,2.25){0.25}
 \pscircle[linecolor=black, fillstyle=solid, fillcolor=white](5.25,1.25){0.25}
 \pscircle[linecolor=black, fillstyle=solid, fillcolor=white](6.25,0.25){0.25}
 \pscircle[linecolor=black, fillstyle=solid, fillcolor=white](6.25,2.25){0.25}

 \rput(-0.25,1.25){$v_1$}
 \rput(0.75,0.25){$v_4$}
 \rput(0.75,1.25){$v_3$}
 \rput(0.75,2.25){$v_2$}
 \rput(2.25,0.25){$v_{7}$}
 \rput(2.25,1.25){$v_6$}
 \rput(2.25,2.25){$v_5$}
 \rput(3.25,0.75){$v_{9}$}
 \rput(3.25,1.75){$v_{8}$}
 \rput(4.25,0.25){$v_{11}$}
 \rput(4.25,2.25){$v_{10}$}
 \rput(5.25,1.25){$v_{12}$}
 \rput(6.25,0.25){$v_{14}$}
 \rput(6.25,2.25){$v_{13}$}

 \rput(1.5,0){{$e_2$}}
 \rput(1.5,2.5){{$e_1$}}

\end{pspicture}\]
\caption{An example of graph with two stubborn edges $e_1=v_2v_5$ and $e_2=v_4v_7$. Their removal splits the graph into two connected components. The component $\mathcal K=\{v_5,v_6,\dots,v_{14}\}$ satisfies the condition of the theorem.}
\label{fig2se1}
\end{subfigure}
   \begin{subfigure}{\linewidth}
\[\begin{pspicture}(6.5,3)

\psline[linecolor=blue](0.75,0.25)(-0.25,1.25)
\colorprint{\psline[linecolor=blue, linewidth=3pt](0.75,0.25)(-0.25,1.25)
\psline[linecolor=white](0.75,0.25)(-0.25,1.25)}
\psline[linecolor=red](0.75,2.25)(-0.25,1.25)
\psline[linecolor=red](0.75,0.25)(0.75,1.25)
\psline[linecolor=blue, linestyle=dashed](0.75,0.25)(2.25,0.25)
\colorprint{\psline[linecolor=blue, linewidth=3pt, linestyle=dashed](0.75,0.25)(2.25,0.25)
\psline[linecolor=white, linestyle=dashed](0.75,0.25)(2.25,0.25)}
\psline[linecolor=blue](0.75,1.25)(0.75,2.25)
\colorprint{\psline[linecolor=blue, linewidth=3pt](0.75,1.25)(0.75,2.25)
\psline[linecolor=white](0.75,1.25)(0.75,2.25)}
\psline[linecolor=red, linestyle=dashed](0.75,2.25)(2.25,2.25)
\psline(2.25,0.25)(3.25,0.75)
\psline(2.25,0.25)(4.25,0.25)
\pscurve(2.25,0.25)(1.75,1.25)(2.25,2.25)
\psline(2.25,1.25)(3.25,1.75)
\psline(2.25,2.25)(3.25,1.75)
\psline(2.25,2.25)(4.25,2.25)
\psline(3.25,0.75)(3.25,1.75)
\psline(4.25,0.25)(4.25,2.25)
\psline(4.25,2.25)(5.25,1.25)
\psline(5.25,1.25)(6.25,0.25)
\psline(5.25,1.25)(6.25,2.25)

\pscircle[linecolor=black, fillstyle=solid, fillcolor=white](-0.25,1.25){0.25}
 \pscircle[linecolor=black, fillstyle=solid, fillcolor=white](0.75,0.25){0.25}
 \pscircle[linecolor=black, fillstyle=solid, fillcolor=white](0.75,1.25){0.25}
 \pscircle[linecolor=black, fillstyle=solid, fillcolor=white](0.75,2.25){0.25}
 \pscircle[linecolor=black, fillstyle=solid, fillcolor=white](2.25,0.25){0.25}
 \pscircle[linecolor=black, fillstyle=solid, fillcolor=white](2.25,1.25){0.25}
 \pscircle[linecolor=black, fillstyle=solid, fillcolor=white](2.25,2.25){0.25}
 \pscircle[linecolor=black, fillstyle=solid, fillcolor=white](3.25,0.75){0.25}
 \pscircle[linecolor=black, fillstyle=solid, fillcolor=white](3.25,1.75){0.25}
 \pscircle[linecolor=black, fillstyle=solid, fillcolor=white](4.25,0.25){0.25}
 \pscircle[linecolor=black, fillstyle=solid, fillcolor=white](4.25,2.25){0.25}
 \pscircle[linecolor=black, fillstyle=solid, fillcolor=white](5.25,1.25){0.25}
 \pscircle[linecolor=black, fillstyle=solid, fillcolor=white](6.25,0.25){0.25}
 \pscircle[linecolor=black, fillstyle=solid, fillcolor=white](6.25,2.25){0.25}

 \rput(1.5,0){{$e_2$}}
 \rput(1.5,2.5){{$e_1$}}
 \rput(-0.25,1.25){$v_1$}
 \rput(0.75,0.25){$v_4$}
 \rput(0.75,1.25){$v_3$}
 \rput(0.75,2.25){$v_2$}
 \rput(2.25,0.25){$v_{7}$}
 \rput(2.25,1.25){$v_6$}
 \rput(2.25,2.25){$v_5$}
 \rput(3.25,0.75){$v_{9}$}
 \rput(3.25,1.75){$v_{8}$}
 \rput(4.25,0.25){$v_{11}$}
 \rput(4.25,2.25){$v_{10}$}
 \rput(5.25,1.25){$v_{12}$}
 \rput(6.25,0.25){$v_{14}$}
 \rput(6.25,2.25){$v_{13}$}

\end{pspicture}\]
\caption{We use Theorem \ref{thmbipartite} to connect $G\setminus\mathcal K$ with two colours. Hence, the end vertices of the stubborn edges in $G\setminus\mathcal K$ are connected by a properly coloured walk $W=(v_2,v_3,v_4)$. We colour $e_1$ and $e_2$ so that $e_1We_2$ is properly coloured too.}
\label{fig2se2}
\end{subfigure}
   \begin{subfigure}{0.55\linewidth}
\[\begin{pspicture}(2,0)(6.5,2.7)

\psline[linecolor=red](2.25,0.25)(3.25,0.75)
\psline[linecolor=gray, linestyle=dashed](2.25,0.25)(4.25,0.25)
\pscurve[linecolor=red](2.25,0.25)(1.75,1.25)(2.25,2.25)
\psline[linecolor=red](2.25,1.25)(3.25,1.75)
\psline[linecolor=blue](2.25,2.25)(3.25,1.75)
\colorprint{\psline[linecolor=blue, linewidth=3pt](2.25,2.25)(3.25,1.75)
\psline[linecolor=white](2.25,2.25)(3.25,1.75)}
\psline[linecolor=gray, linestyle=dashed](2.25,2.25)(4.25,2.25)
\psline[linecolor=blue](3.25,0.75)(3.25,1.75)
\colorprint{\psline[linecolor=blue, linewidth=3pt](3.25,0.75)(3.25,1.75)
\psline[linecolor=white](3.25,0.75)(3.25,1.75)}
\psline[linecolor=gray, linestyle=dashed](4.25,0.25)(4.25,2.25)
\psline(4.25,2.25)(5.25,1.25)
\psline(5.25,1.25)(6.25,0.25)
\psline(5.25,1.25)(6.25,2.25)

 \pscircle[linecolor=black, fillstyle=solid, fillcolor=white](2.25,0.25){0.25}
 \pscircle[linecolor=black, fillstyle=solid, fillcolor=white](2.25,1.25){0.25}
 \pscircle[linecolor=black, fillstyle=solid, fillcolor=white](2.25,2.25){0.25}
 \pscircle[linecolor=black, fillstyle=solid, fillcolor=white](3.25,0.75){0.25}
 \pscircle[linecolor=black, fillstyle=solid, fillcolor=white](3.25,1.75){0.25}
 \pscircle[linecolor=black, fillstyle=solid, fillcolor=white](4.25,0.25){0.25}
 \pscircle[linecolor=black, fillstyle=solid, fillcolor=white](4.25,2.25){0.25}
 \pscircle[linecolor=black, fillstyle=solid, fillcolor=white](5.25,1.25){0.25}
 \pscircle[linecolor=black, fillstyle=solid, fillcolor=white](6.25,0.25){0.25}
 \pscircle[linecolor=black, fillstyle=solid, fillcolor=white](6.25,2.25){0.25}

 \rput(2.25,0.25){$v_{7}$}
 \rput(2.25,1.25){$v_6$}
 \rput(2.25,2.25){$v_5$}
 \rput(3.25,0.75){$v_{9}$}
 \rput(3.25,1.75){$v_{8}$}
 \rput(4.25,0.25){$v_{11}$}
 \rput(4.25,2.25){$v_{10}$}
 \rput(5.25,1.25){$v_{12}$}
 \rput(6.25,0.25){$v_{14}$}
 \rput(6.25,2.25){$v_{13}$}

\end{pspicture}\]
\caption{We set $W_1=(v_5,v_8,v_9,v_7)$ and $W_2\\ =(v_5,v_{10},v_{11},v_7)$. We create $H$ from $\mathcal K$ \\ by removing $W_2$. Since $e_2$ is blue, the edges \\ at even distance from $u_2=v_7$ must be \\ red. Here, $W_1$ is not properly coloured \\ but as expected,  we still create a \\ properly coloured walk $W'_1=(v_7,v_5)$ from \\ $u_2$ to $u_1$.}
\label{fig2se3}
\end{subfigure}
   \begin{subfigure}{0.42\linewidth}
\[\begin{pspicture}(2,0)(6.5,2.7)

\psline[linecolor=gray, linestyle=dashed](2.25,0.25)(3.25,0.75)
\psline[linecolor=blue](2.25,0.25)(4.25,0.25)
\colorprint{\psline[linecolor=blue, linewidth=3pt](2.25,0.25)(4.25,0.25)
\psline[linecolor=white](2.25,0.25)(4.25,0.25)}
\pscurve[linecolor=gray, linestyle=dashed](2.25,0.25)(1.75,1.25)(2.25,2.25)
\psline[linecolor=gray, linestyle=dashed](2.25,1.25)(3.25,1.75)
\psline[linecolor=gray, linestyle=dashed](2.25,2.25)(3.25,1.75)
\psline[linecolor=blue](2.25,2.25)(4.25,2.25)
\colorprint{\psline[linecolor=blue, linewidth=3pt](2.25,2.25)(4.25,2.25)
\psline[linecolor=white](2.25,2.25)(4.25,2.25)}
\psline[linecolor=gray, linestyle=dashed](3.25,0.75)(3.25,1.75)
\psline[linecolor=red](4.25,0.25)(4.25,2.25)
\psline[linecolor=red](4.25,2.25)(5.25,1.25)
\psline[linecolor=blue](5.25,1.25)(6.25,0.25)
\colorprint{\psline[linecolor=blue, linewidth=3pt](5.25,1.25)(6.25,0.25)
\psline[linecolor=white](5.25,1.25)(6.25,0.25)}
\psline[linecolor=blue](5.25,1.25)(6.25,2.25)
\colorprint{\psline[linecolor=blue, linewidth=3pt](5.25,1.25)(6.25,2.25)
\psline[linecolor=white](5.25,1.25)(6.25,2.25)}

 \pscircle[linecolor=black, fillstyle=solid, fillcolor=white](2.25,0.25){0.25}
 \pscircle[linecolor=black, fillstyle=solid, fillcolor=white](2.25,1.25){0.25}
 \pscircle[linecolor=black, fillstyle=solid, fillcolor=white](2.25,2.25){0.25}
 \pscircle[linecolor=black, fillstyle=solid, fillcolor=white](3.25,0.75){0.25}
 \pscircle[linecolor=black, fillstyle=solid, fillcolor=white](3.25,1.75){0.25}
 \pscircle[linecolor=black, fillstyle=solid, fillcolor=white](4.25,0.25){0.25}
 \pscircle[linecolor=black, fillstyle=solid, fillcolor=white](4.25,2.25){0.25}
 \pscircle[linecolor=black, fillstyle=solid, fillcolor=white](5.25,1.25){0.25}
 \pscircle[linecolor=black, fillstyle=solid, fillcolor=white](6.25,0.25){0.25}
 \pscircle[linecolor=black, fillstyle=solid, fillcolor=white](6.25,2.25){0.25}

 \rput(2.25,0.25){$v_{7}$}
 \rput(2.25,1.25){$v_6$}
 \rput(2.25,2.25){$v_5$}
 \rput(3.25,0.75){$v_{9}$}
 \rput(3.25,1.75){$v_{8}$}
 \rput(4.25,0.25){$v_{11}$}
 \rput(4.25,2.25){$v_{10}$}
 \rput(5.25,1.25){$v_{12}$}
 \rput(6.25,0.25){$v_{14}$}
 \rput(6.25,2.25){$v_{13}$}

\end{pspicture}\]
\caption{We create $H'$ by removing all the already-coloured edges. Since $e_1$ is red, the edges at even distance from $u_1=v_5$ in $H'$ must be blue. The properly coloured walk $W'_2=W_2$ connects $u_1=v_5$ to $u_2=v_7$.}
\label{fig2se4}
\end{subfigure}
   \begin{subfigure}{\linewidth}
\[\begin{pspicture}(6.5,2.7)
\psline[linecolor=blue](0.75,0.25)(-0.25,1.25)
\colorprint{\psline[linecolor=blue, linewidth=3pt](0.75,0.25)(-0.25,1.25)
\psline[linecolor=white](0.75,0.25)(-0.25,1.25)}
\psline[linecolor=red](0.75,2.25)(-0.25,1.25)
\psline[linecolor=red](0.75,0.25)(0.75,1.25)
\psline[linecolor=blue](0.75,0.25)(2.25,0.25)
\colorprint{\psline[linecolor=blue, linewidth=3pt](0.75,0.25)(2.25,0.25)
\psline[linecolor=white](0.75,0.25)(2.25,0.25)}
\psline[linecolor=blue](0.75,1.25)(0.75,2.25)
\colorprint{\psline[linecolor=blue, linewidth=3pt](0.75,1.25)(0.75,2.25)
\psline[linecolor=white](0.75,1.25)(0.75,2.25)}
\psline[linecolor=red](0.75,2.25)(2.25,2.25)
\psline[linecolor=red](2.25,0.25)(3.25,0.75)
\psline[linecolor=blue](2.25,0.25)(4.25,0.25)
\colorprint{\psline[linecolor=blue, linewidth=3pt](2.25,0.25)(4.25,0.25)
\psline[linecolor=white](2.25,0.25)(4.25,0.25)}
\pscurve[linecolor=red](2.25,0.25)(1.75,1.25)(2.25,2.25)
\psline[linecolor=red](2.25,1.25)(3.25,1.75)
\psline[linecolor=blue](2.25,2.25)(3.25,1.75)
\colorprint{\psline[linecolor=blue, linewidth=3pt](2.25,2.25)(3.25,1.75)
\psline[linecolor=white](2.25,2.25)(3.25,1.75)}
\psline[linecolor=blue](2.25,2.25)(4.25,2.25)
\colorprint{\psline[linecolor=blue, linewidth=3pt](2.25,2.25)(4.25,2.25)
\psline[linecolor=white](2.25,2.25)(4.25,2.25)}
\psline[linecolor=blue](3.25,0.75)(3.25,1.75)
\colorprint{\psline[linecolor=blue, linewidth=3pt](3.25,0.75)(3.25,1.75)
\psline[linecolor=white](3.25,0.75)(3.25,1.75)}
\psline[linecolor=red](4.25,0.25)(4.25,2.25)
\psline[linecolor=red](4.25,2.25)(5.25,1.25)
\psline[linecolor=blue](5.25,1.25)(6.25,0.25)
\colorprint{\psline[linecolor=blue, linewidth=3pt](5.25,1.25)(6.25,0.25)
\psline[linecolor=white](5.25,1.25)(6.25,0.25)}
\psline[linecolor=blue](5.25,1.25)(6.25,2.25)
\colorprint{\psline[linecolor=blue, linewidth=3pt](5.25,1.25)(6.25,2.25)
\psline[linecolor=white](5.25,1.25)(6.25,2.25)}

\pscircle[linecolor=black, fillstyle=solid, fillcolor=white](-0.25,1.25){0.25}
 \pscircle[linecolor=black, fillstyle=solid, fillcolor=white](0.75,0.25){0.25}
 \pscircle[linecolor=black, fillstyle=solid, fillcolor=white](0.75,1.25){0.25}
 \pscircle[linecolor=black, fillstyle=solid, fillcolor=white](0.75,2.25){0.25}
 \pscircle[linecolor=black, fillstyle=solid, fillcolor=white](2.25,0.25){0.25}
 \pscircle[linecolor=black, fillstyle=solid, fillcolor=white](2.25,1.25){0.25}
 \pscircle[linecolor=black, fillstyle=solid, fillcolor=white](2.25,2.25){0.25}
 \pscircle[linecolor=black, fillstyle=solid, fillcolor=white](3.25,0.75){0.25}
 \pscircle[linecolor=black, fillstyle=solid, fillcolor=white](3.25,1.75){0.25}
 \pscircle[linecolor=black, fillstyle=solid, fillcolor=white](4.25,0.25){0.25}
 \pscircle[linecolor=black, fillstyle=solid, fillcolor=white](4.25,2.25){0.25}
 \pscircle[linecolor=black, fillstyle=solid, fillcolor=white](5.25,1.25){0.25}
 \pscircle[linecolor=black, fillstyle=solid, fillcolor=white](6.25,0.25){0.25}
 \pscircle[linecolor=black, fillstyle=solid, fillcolor=white](6.25,2.25){0.25}

 \rput(1.5,0){{$e_2$}}
 \rput(1.5,2.5){{$e_1$}}
 \rput(-0.25,1.25){$v_1$}
 \rput(0.75,0.25){$v_4$}
 \rput(0.75,1.25){$v_3$}
 \rput(0.75,2.25){$v_2$}
 \rput(2.25,0.25){$v_{7}$}
 \rput(2.25,1.25){$v_6$}
 \rput(2.25,2.25){$v_5$}
 \rput(3.25,0.75){$v_{9}$}
 \rput(3.25,1.75){$v_{8}$}
 \rput(4.25,0.25){$v_{11}$}
 \rput(4.25,2.25){$v_{10}$}
 \rput(5.25,1.25){$v_{12}$}
 \rput(6.25,0.25){$v_{14}$}
 \rput(6.25,2.25){$v_{13}$}

\end{pspicture}\]
\caption{The resulting colouring connects the graph.}
\label{fig2se5}
\end{subfigure}

\caption{An example where the graph has several stubborn edges.}
\label{fig2se}
\end{figure}

 We claim that the graph is now properly connected. Indeed, let $V_1$ be the set of vertices that can be reached from $u_2$ in $H$ and let $V_2$ be its complement in $\mathcal K$. Hence, the vertices of $V_2$ can be reached from $u_2$ in $H'$ and thus from $u_1$ since $u_1$ and $u_2$ are connected in $H'$.
  \begin{itemize}
  \item Let $x$ and $y$ be two vertices of $V_1$. One can go from $x$ to $u_2$ using a shortest path in $H$, use $\mathscr C_2$ to go from $u_2$ to $u_2$ and then go from $u_2$ to $y$ using a shortest path in $H$ again. For example, in Figure \ref{fig2se5}, one can go from $v_5$ to $v_9$ by using $(v_5,v_7)$ to go to $v_7=u_2$, use $\mathscr C_2=(v_7,v_4,v_3,v_2,v_5,v_{10},v_{11},v_7)$ and then go to $v_9$ by using $(v_7,v_9)$.
  \item Let $x$ and $y$ be two vertices of $V_2$. Similarly, one can go from $x$ to $u_1$ using a shortest path in $H'$, use $\mathscr C_1$ from $u_1$ to $u_1$ and then go from $u_1$ to $y$ by a shortest path in $H'$.
  \item Let $x\in V_1$ and $y\in V_2$. One can go from $x$ to $u_2$ using a shortest path in $H$, from $u_2$ to $u_1$ using $e_1We_2$ (or just $e_1$ if the graph has one stubborn edge) and from $u_1$ to $y$ with a shortest path in $H'$. For example, in Figure \ref{fig2se5}, $v_6$ and $v_{10}$ are connected by the walk $(v_6,v_8,v_9,v_7,v_4,v_3,v_2,v_5,v_{10})$.
  \item We initialized our edge-colouring so that $G\setminus\mathcal K$ is properly connected.
  \item Let $x\in V_1$ and $y\in G\setminus\mathcal K$. Since $G\setminus\mathcal K$ is properly connected, there exists a properly coloured walk $W_3$ from $w_2$ to $y$. One can go from $x$ to $u_2$ using edges of the first search, and then, go to $w_2$ using edges of $\mathscr C_2$. Note that the two edges around $u_2$ in $\mathscr C_2$ have the same colour and are compatible with the walk we use from $x$ to $u_2$. One can thus use $\mathscr C_2$ is any direction between $u_2$ and $w_2$. Since $w_2$ is not the pivot of $\mathscr C_2$, the two edges adjacent to $w_2$ in $\mathscr C_2$ do not have the same colour and it is thus possible to choose the colour of the last edge of the walk we use between $u_2$ and $w_2$. We thus choose the walk between $u_2$ and $w_2$ so that it is then possible to use $W_3$ between $w_2$ and $y$. 
  
  For example, in Figure \ref{fig2se5}, let us try to connect $v_9\in V_1$ and $v_1\in G\setminus\mathcal K$. We go from $v_9$ to $u_2=v_7$ by shortest path and we can then use $\mathscr C_2$ in any direction. Since our colouring connects $G\setminus\mathcal K$, we know that there exists a properly coloured walk from $w_2=v_4$ to $v_1$, for example $W_3=(v_4,v_1)$. Here, $W_3$ starts with a blue edge and we therefore want to use $\mathscr C_2$ so that we arrive on $v_4$ with a red edge, which is possible since we can use $\mathscr C_2$ in any direction and the two edges incident to $w_2$ have different colours. Thus, $v_9$ and $v_1$ are connected by $(v_9,v_7,v_{11},v_{10},v_5,v_2,v_3,v_4,v_5)$. If $W_3$ started with a red edge, we could have used the other part of $\mathscr C_2$, $(v_7v_4)$, to connect them.
  \item Similarly, if $x\in V_2$ and $y\in G\setminus\mathcal K$, one can go from $x$ to $u_1$ with a shortest path in $H'$ and go from $u_1$ to $w_1$ using a subwalk of $\mathscr C_1$ that makes it possible to go from $w_1$ to $y$.\qedhere
 \end{itemize}

\end{proof}
 
 Note that this proof is constructive and provides a connecting 2-edge-colouring in polynomial time for any graph that can be connected with two colours.
 
 \section{Conclusion}
 
 Putting all together, we obtain the following theorem:
 
 \begin{theorem}\label{mainthm}
  The minimum number of colours required by a connecting edge-colouring of a graph $G$ is:
    \begin{itemize}
     \item 1 if $G$ is complete;
     \item its maximum degree $\Delta(G)$ if $G$ is a tree;
     \item 2 if $G$ is bipartite and can be made 2-edge-connected by adding at most one edge;
     \item 2 if $G$ is non-bipartite and contains an $\mathscr S$-free component $\mathcal K$ such that $G\setminus \mathcal K$ is empty or can be made 2-edge-connected by adding at most one edge;
     \item 3 otherwise
    \end{itemize}
    
    Furthermore, in every case, an optimal connecting colouring can be found in polynomial time.
 \end{theorem}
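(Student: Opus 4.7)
The plan is to assemble the theorem from the earlier results, verifying that the five listed cases are exhaustive when tried in order, and that each can be tested efficiently. I would first handle correctness: if $G$ is complete, the analysis in the $k=1$ subsection gives the answer $1$; otherwise if $G$ is a tree, the tree subsection gives $\Delta(G)$; otherwise $G$ contains a cycle, so Theorem \ref{lemmacycle} already caps the answer at $3$, and only the question of when $2$ colours suffice remains. That last question is answered by Theorem \ref{thmbipartite} when $G$ is bipartite and by Theorem \ref{thmnonbipartite} when $G$ is non-bipartite; the ``otherwise: $3$'' branch is then forced by Theorem \ref{lemmacycle}.

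For the polynomial-time claim, I would test the branches in order. Completeness, bipartiteness, and the tree test are linear. As noted after Theorem \ref{thmbipartite}, deciding whether a graph can be made $2$-edge-connected by adding at most one edge reduces, via a depth-first search, to checking that its bridge-tree is a path. For the non-bipartite $2$-colour test, I would compute the stubborn set $\mathscr S$ by running the bipartiteness test on $G-e$ for each $e\in E(G)$ in time $O(|E|(|V|+|E|))$, then iterate over the $\mathscr S$-free components of $G$; Theorem \ref{thmflexcompo} guarantees that there are only $|\mathscr S|$ such components, so applying the bridge-tree test to each $G\setminus\mathcal K$ remains polynomial. Whichever branch succeeds, the associated constructive proof (from Theorems \ref{lemmacycle}, \ref{thmbipartite}, \ref{thmnostubborn} and \ref{thmnonbipartite}, or the greedy tree colouring) then returns an explicit optimal colouring.

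The main obstacle I anticipate is not mathematical but one of bookkeeping: the branches of the theorem are not mutually exclusive on their face (a complete graph on $n\geq 3$ vertices is also non-bipartite and contains a cycle, a path is both a tree and a bipartite graph that can be made $2$-edge-connected by one added edge, etc.), so I must check that evaluating the branches in the stated order always produces the correct optimum. Concretely, I would verify that each non-first branch, when reached, rules out all previous ones automatically, so that the values agree on every overlap (for instance, $\Delta(K_2)=1$ agrees with the complete case on the unique overlap $G=K_2$). Once this case-analysis is pinned down, no new mathematical input is needed beyond the earlier theorems.
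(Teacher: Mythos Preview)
Your proposal is correct and follows essentially the same approach as the paper: the paper's own proof of Theorem~\ref{mainthm} is a two-sentence remark that the result is obtained by ``putting all together'' the earlier theorems, with the constructive proofs of Theorems~\ref{lemmacycle}, \ref{thmbipartite} and \ref{thmnonbipartite} supplying the polynomial-time colourings. You are simply more explicit than the paper about the order in which the branches must be tested, about the overlaps between cases (which the paper leaves implicit), and about the cost of each polynomial-time subroutine; none of this adds new mathematical content beyond what is already established.
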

 
 Polynomial algorithms for optimal connecting colouring follows from the constructive proofs of Theorems \ref{lemmacycle}, \ref{thmbipartite} or \ref{thmnonbipartite} depending on which case occurs.

 \vspace{0.35cm}
Interesting questions for future works could be to study alternative definitions of connectivity. Indeed, the definition of the connectivity of a graph is well-agreed-upon, but there are many ways to generalize it that are no longer equivalent in walk-restricted graphs. For example, in edge-coloured graphs, the fact that there exists a properly coloured walk from any vertex $u$ to any vertex $v$ does not imply the existence a properly coloured closed walk that would start in $u$, go to $v$ and then back to $u$. This leads to the definition of colour-connectivity introduced by Saad in \cite{saad}. Only in colour-connected graphs can a single closed walk visit all the vertices of the graph. This new definition of connectivity can increase significantly the number of colours required for a connecting edge-colouring. The most extreme case is the case of graphs with vertices of degree 1. Such graphs cannot be made colour-connected, no matter how many colours are available.

Another idea of possible continuation would be to study definitions of connectivity that require the vertices to be connected by paths or trails instead of walks. Numerous papers have already studied the definition based on paths and the definition based on trails has been studied in \cite{trail}, but the complexity of the proper connection number is still open in both cases. Again, these definitions are equivalent in standard graphs but not in walk-restricted graphs (a trail is a walk that may repeat vertices but does not repeat edges). For example, we only need two colours to connect the vertices of the graph depicted in Figure \ref{concl1} with walks or trails, but in order to connect them with paths, we must give a different colour to each edge of the form $vu_i$. Similarly, two colours are enough to connect the vertices of the graph of Figure \ref{concl2} with walks but $k$ we need $k$ to connect them with trails or paths. 

 \begin{figure}[!h]
\begin{subfigure}{0.48\linewidth}
\[\begin{pspicture}(2.5,3.72)
\psline[linecolor=red](0.25,0.25)(1.25,1.25)
\psline[linecolor=red](2.25,0.25)(1.25,1.25)
\psline[linecolor=red, linestyle=dashed](1.25,0.04)(1.25,1.25)
\psline[linecolor=red, linestyle=dashed](1.05,0.04)(1.25,1.25)
\psline[linecolor=red, linestyle=dashed](1.45,0.04)(1.25,1.25)
\psline[linecolor=red](0.71,-0.06)(1.25,1.25)
\psline[linecolor=red](1.79,-0.06)(1.25,1.25)
\psline[linecolor=blue](1.25,1.25)(0.54,2.47)
\colorprint{\psline[linecolor=blue, linewidth=3pt](1.25,1.25)(0.54,2.47)
\psline[linecolor=white](1.25,1.25)(0.54,2.47)}
\psline[linecolor=blue](1.25,1.25)(1.96,2.47)
\colorprint{\psline[linecolor=blue, linewidth=3pt](1.25,1.25)(1.96,2.47)
\psline[linecolor=white](1.25,1.25)(1.96,2.47)}
\psline[linecolor=red](0.54,2.47)(1.96,2.47)

 \pscircle[fillstyle=solid, fillcolor=white](0.25,0.25){0.25}
 \pscircle[fillstyle=solid, fillcolor=white](2.25,0.25){0.25}
 \pscircle[fillstyle=solid, fillcolor=white](1.25,1.25){0.25}
 \pscircle[fillstyle=solid, fillcolor=white](0.71,-0.06){0.25}
 \pscircle[fillstyle=solid, fillcolor=white](1.79,-0.06){0.25}
\pscircle[fillstyle=solid, fillcolor=white](0.54,2.47){0.25}
\pscircle[fillstyle=solid, fillcolor=white](1.96,2.47){0.25}
\rput(1.25,-0.06){...}
\rput(0.25,0.25){$u_1$}
 \rput(2.25,0.25){$u_k$}
 \rput(1.25,1.25){$v$}
 \rput(0.71,-0.06){$u_2$}
 \rput(1.79,-0.06){...}

\end{pspicture}\]
\caption{An example of graphs that only \\ requires 2 colours to be connected \\ by walks or trails but $k$ colours to be \\ connected by paths.}
\label{concl1}
\end{subfigure}
\begin{subfigure}{0.48\linewidth}
\[\begin{pspicture}(2.5,3.72)
\psline[linecolor=blue](1.25,1.25)(1.25,2.25)
\colorprint{\psline[linecolor=blue, linewidth=3pt](1.25,1.25)(1.25,2.25)
\psline[linecolor=white](1.25,1.25)(1.25,2.25)}
\psline[linecolor=red](0.25,0.25)(1.25,1.25)
\psline[linecolor=red](2.25,0.25)(1.25,1.25)
\psline[linecolor=red, linestyle=dashed](1.25,0.04)(1.25,1.25)
\psline[linecolor=red, linestyle=dashed](1.05,0.04)(1.25,1.25)
\psline[linecolor=red, linestyle=dashed](1.45,0.04)(1.25,1.25)
\psline[linecolor=red](0.71,-0.06)(1.25,1.25)
\psline[linecolor=red](1.79,-0.06)(1.25,1.25)
\psline[linecolor=red](1.25,2.25)(0.54,3.47)
\psline[linecolor=red](1.25,2.25)(1.96,3.47)
\psline[linecolor=blue](0.54,3.47)(1.96,3.47)
\colorprint{\psline[linecolor=blue, linewidth=3pt](0.54,3.47)(1.96,3.47)
\psline[linecolor=white](0.54,3.47)(1.96,3.47)}

 \pscircle[fillstyle=solid, fillcolor=white](0.25,0.25){0.25}
 \pscircle[fillstyle=solid, fillcolor=white](2.25,0.25){0.25}
 \pscircle[fillstyle=solid, fillcolor=white](1.25,1.25){0.25}
 \pscircle[fillstyle=solid, fillcolor=white](1.25,2.25){0.25}
 \pscircle[fillstyle=solid, fillcolor=white](0.71,-0.06){0.25}
 \pscircle[fillstyle=solid, fillcolor=white](1.79,-0.06){0.25}
\pscircle[fillstyle=solid, fillcolor=white](0.54,3.47){0.25}
\pscircle[fillstyle=solid, fillcolor=white](1.96,3.47){0.25}
\rput(1.25,-0.06){...}
\rput(0.25,0.25){$u_1$}
 \rput(2.25,0.25){$u_k$}
 \rput(1.25,1.25){$v$}
 \rput(0.71,-0.06){$u_2$}
 \rput(1.79,-0.06){...}

\end{pspicture}\]
\caption{An example of graphs that only requires 2 colours to be connected by walks but $k$ colours to be connected by trails or paths.}
\label{concl2}
\end{subfigure}
\caption{}
\end{figure}

Another interesting problem would be to study the complexity of extending a partial edge-colouring of a graph: given a partial edge-colouring using at most $k$ colours, is it possible to extend it into a connecting $k$-edge-colouring of the graph?

Finally, it could also be interesting to study the stretch of our connecting edge-colouring. The \textbf{stretch} is the maximum ratio between the length of the shortest walk between two vertices in the original unrestricted graph and in the restricted graphs. For example, the colouring depicted in Figure \ref{mainexample} connects the graph but the vertices $v_0$ and $v_2$ are at distance 9 in the edge-coloured graph while their distance is only 2 in the uncoloured graph, which means that the stretch of this edge-colouring is at least $\frac 9 2$. Interesting questions could therefore be to determine the number of colours required for a connecting edge-colouring of stretch bounded by a given $k$, or to find a connecting colouring of minimum stretch with a given number of colours. Previous papers have already studied the problem of \textbf{strong proper connection number} where every pair of vertices has to be connected by properly-coloured shortest paths \cite{huangyuan} \cite{lumduanhom}. This comes down to finding the smallest number of colours such that there exists a connecting colouring of stretch 1.

Of course, all the above questions also make sense in directed graphs.

\bibliographystyle{alpha}

\newcommand{\etalchar}[1]{$^{#1}$}

\end{document}